\newcounter{theorem}
\newtheorem*{rep@theorem}{\rep@title}
\newcommand{\newreptheorem}[2]{%
\newenvironment{rep#1}[1]{%
 \def\rep@title{#2 \ref{##1}}%
 \begin{rep@theorem}}%
 {\end{rep@theorem}}}
\newcommand{\sold}{\textsc{Sold}}
\newtheorem{observation}[theorem]{Observation}
\newtheorem{assumption}[theorem]{Assumption}
\newtheorem{fact}[theorem]{Fact}
\newcommand{\notshow}[1]{}
\definecolor{mygreen}{RGB}{3, 115, 80}
\newcommand{\mattnote}[1]{\textcolor{blue}{#1}}
\newcommand{\claynote}[1]{\textcolor{red}{#1}}
\newcommand{\mattfootnote}[1]{\textcolor{blue}{\footnote{\mattnote{#1}}}}
\newcommand{\toI}{^{(i)}}
\DeclareMathOperator{\nextt}{next}
\DeclareMathOperator{\val}{val}
\DeclareMathOperator*{\argmax}{arg\,max}
\DeclareMathOperator{\poly}{poly}
\newcommand{\E}[1]{\mathds{E}\left[{#1}\right]}
\newcommand{\q}{\vectr q} 
\newcommand\Ps@textstyle[2]{\mathbb{P}_{#1}\left[{#2}\right]}
\newcommand\Es@textstyle[2]{\mathbb{E}_{#1}\left[{#2}\right]}
\newcommand\Ps[2]{%
  \mathchoice 
  {\underset{{#1}}{\mathbb{P}}\left[{#2}\right]
  }{\Ps@textstyle{#1}{#2}}{\Ps@textstyle{#1}{#2}}{\Ps@textstyle{#1}{#2}}
}
\newcommand\Es[2]{%
  \mathchoice 
  {\underset{{#1}}{\mathbb{E}}\left[{#2}\right]
  }{\Es@textstyle{#1}{#2}}{\Es@textstyle{#1}{#2}}{\Es@textstyle{#1}{#2}}
}
\newcommand{\vectr}[1]{\mathbf{{#1}}}
\newcommand{\opt}{\textsc{OPT}}
\newcommand{\ratio}{O\left(\max\left\{\frac{1}{c}, \frac{1}{d}\right\}(\log \log{m})^3 \right)}
\newcommand{\approxDiffColor}{blue}
\begin{document}

\title{
  Getting more from value queries with approximate demand oracles
}
\title{Welfare Guarantees from Posted-Price Mechanisms when Demand Queries are NP-Hard}
\title{Implementation in Advised Strategies: Welfare Guarantees from Posted-Price Mechanisms when Demand Queries are NP-hard}

\author{
  Linda Cai\\
  tcai@cs.princeton.edu
\and
  Clay Thomas \\
  claytont@cs.princeton.edu
\and
  S. Matthew Weinberg\thanks{Supported by NSF CCF-1717899.}\\
  smweinberg@princeton.edu
}

\maketitle

\begin{abstract}

State-of-the-art posted-price mechanisms for submodular bidders with $m$ items achieve approximation guarantees of $O((\log \log m)^3)$~\cite{AssadiS19}. Their truthfulness, however, requires bidders to compute an NP-hard {demand-query}. Some computational complexity of this form is unavoidable, as it is NP-hard for truthful mechanisms to guarantee even an $m^{1/2-\varepsilon}$-approximation for any $\varepsilon > 0$~\cite{DobzinskiV16}. Together, these establish a stark distinction between computationally-efficient and communication-efficient truthful mechanisms.

We show that this distinction disappears with a mild relaxation of truthfulness, which we term {implementation in advised strategies}. Specifically, \emph{advice} maps a tentative strategy either to that same strategy itself, or one that dominates it. We say that a player follows advice \emph{as long as they never play actions which are dominated by advice}. A poly-time mechanism guarantees an $\alpha$-approximation in implementation in advised strategies if there exists poly-time {implementable} advice for each player such that an $\alpha$-approximation is achieved whenever all players follow advice. Using an appropriate bicriterion notion of approximate demand queries (which can be computed in poly-time), we establish that (a slight modification of) the~\cite{AssadiS19} mechanism achieves the same $O((\log \log m)^3)$-approximation in implementation in advised strategies.


\end{abstract}
\addtocounter{page}{-1}
\newpage
\section{Introduction} \label{sec:intro}
Combinatorial auctions have been at the forefront of Algorithmic Game Theory since its inception as a lens through which to study the relative power of \emph{algorithms} for honest agents versus \emph{mechanisms} for strategic agents. Specifically, there are $n$ buyers with combinatorial valuations $v_1(\cdot),\ldots, v_n(\cdot)$ over subsets of $m$ items, and the designer wishes to allocate the items so as to maximize the \emph{welfare}, $\sum_i v_i(S_i)$ (where $S_i$ is the set allocated to bidder $i$). Without concern for computation/communication/etc., the celebrated Vickrey-Clarke-Groves mechanism~\cite{Vickrey61, Clarke71, Groves73} provides a black-box reduction from precisely optimal mechanisms to precisely optimal algorithms. Of course, precisely optimal algorithms are NP-hard and require exponential communication in most settings of interest (for example, when buyers have submodular valuations over the items, which we'll take as the running example for the rest of the introduction), rendering VCG inapplicable. On the algorithmic front, poly-time/poly-communication {constant-factor} approximation algorithms are known~\cite{Raghavan88, KolliopoulosS98, LehmannLN01, BriestKV05, Vondrak08, Feige09} and a central direction in algorithmic mechanism design is understanding whether these guarantees are achievable by computationally/communication efficient truthful mechanisms as well. 

From the communication complexity perspective, this problem is still wide open: state-of-the-art truthful mechanisms guarantee an $O( (\log \log m)^3)$-approximation~\cite{AssadiS19}, yet no lower bounds separate achievable guarantees of mechanisms from algorithms (that is, it could very well be the case that truthful, poly-communication mechanisms can achieve the same guarantees as poly-communication algorithms). From the computational perspective, however, a landmark result of Dobzinski and Vondrak establishes that for all $\varepsilon > 0$, an $m^{1/2-\varepsilon}$-approximation is NP-hard for truthful mechanisms~\cite{DobzinskiV16}. As poly-time algorithms guarantee a $e/(e-1)$-approximation~\cite{Vondrak08}, this establishes a strong separation between computationally-efficient algorithms and computationally-efficient truthful mechanisms. 

So while the communication perspective has seen exciting progress in recent years~\cite{Dobzinski16b, BravermanMW18}, the computational perspective is generally considered fully resolved. In this paper, we present a new dimension to the computational perspective, motivated by the following two examples. Consider first the truthful mechanism of~\cite{AssadiS19}. The core of the mechanism is a posted-price mechanism: it visits each bidder one at a time, posts a price $p_j$ on each remaining item $j$, and offers the option to purchase any set $S$ of items at total price $\sum_{j \in S} p_j$ (see Section~\ref{sec:welfare} and Appendix~\ref{app:proofs} for a full description of their mechanism, which also includes randomization, pre-procesesing, and learning). The auxiliary parts of the mechanism run in poly-time,\footnote{Rather, they can be slightly modified to run in poly-time --- see Section~\ref{sec:welfare} and Appendix~\ref{app:proofs}.} and the offered prices can also be computed in poly-time. While it might sound like this mechanism should be poly-time, the catch is that it's NP-hard for the buyer find their utility-maximizing set, called a \emph{demand query}. Therefore, the mechanism is either not truthful (because the buyers do not select their utility-maximizing sets), or requires solving an NP-hard problem (because the buyers pick their favorite sets). Still, the analysis of~\cite{AssadiS19} and related mechanisms{~\cite{Dobzinski07, DobzinskiNS12, KrystaV12,  DevanurMSW15, FeldmanGL15, Dobzinski16a, DuettingFKL17, Dobzinski16a}} seems fairly robust, suggesting that perhaps they should maintain their guarantees under reasonable strategic behavior. Indeed, the focus of this paper is a novel solution concept (described below) under which the~\cite{AssadiS19} $O( (\log \log m)^3)$-approximation is maintained in polynomial time.

\vspace{-2mm}\paragraph{A Different Solution Concept: Implementation in Advised Strategies.} To get intuition for our solution concept, consider the following example due to~\cite{SchapiraS08}: there is only a single buyer, but the buyer can receive only $k$ of the $m$ items (this is the one-buyer case of Combinatorial Public Projects). Since there is just a single buyer, the obvious mechanism for the designer simply allows the buyer to pick any set of size $k$ for free (call this the ``Set-For-Free'' mechanism). The same catch is that it is NP-hard for the buyer to pick their favorite set, so Set-For-Free is again not truthful (because the buyer picks a suboptimal set) or solving an NP-hard problem (because they find their favorite set). In fact,~\cite{SchapiraS08} establishes that it is NP-hard for truthful mechanisms to achieve a $m^{1/2-\varepsilon}$-approximation for any $\varepsilon > 0$. Algorithmically, a poly-time $e/(e-1)$-approximation is known~\cite{NemhauserWF78}, providing again a strong separation. 

We ask instead: what should one reasonably expect to happen if a strategic buyer participated in Set-For-Free? Consider the set $S$ output by the poly-time algorithm of~\cite{NemhauserWF78}. It is certainly reasonable for the buyer to select some set $T \neq S$: perhaps a different heuristic finds a better set. But it seems irrational for the buyer to select some set $T$ with $v(T) < v(S)$. We therefore pose that there should be \emph{some} reasonable solution concept under which Set-For-Free guarantees an $e/(e-1)$-approximation. Indeed, Set-For-Free guarantees an $e/(e-1)$-approximation under our proposed ``implementation in advised strategies.''

Formally, we will think of Set-For-Free as simply asking the buyer to report a set of size at most $k$, and then awarding them that set for free. In addition, the designer provides \emph{advice}: a Turing machine $A(\cdot,\cdot)$ which takes as input the buyer's valuation $v(\cdot)$ (possibly as a circuit/Turing machine itself, or accessing it via value queries) and a tentative set $T$, then recommends a set $S$ to purchase that is at least as good as $T$. Specifically in Set-For-Free, we will think of the advice as running the~\cite{NemhauserWF78} approximation algorithm to get a set $S'$ and outputting $S:=\arg\max\{v(T),v(S')\}$. We say that a bidder \emph{follows advice} if they select a set $S$ with $A(v,S)=S$. The idea is that it seems irrational for the buyer to select a set without this property, when the advice gives a poly-time algorithm to improve it.

For a general mechanism, we think of advice as a Turing machine which takes as input the current state of the mechanism, the buyer's valuation, and a tentative action, then advises an (maybe the same, maybe different) action to take. Importantly, we say that advice is \emph{useful} if for all strategies $s$, either the advice maps $s$ to itself, or to another strategy which dominates it (see Section~\ref{sec:prelim} for full definition). Intuitively, this suggests that it is irrational for a buyer to use a strategy which advice does not map to itself. We postpone to Section~\ref{sec:prelim} a formal definition of what it means to follow advice, but note here that our definition is a natural relaxation of dominant strategies: if a mechanism has a dominant strategy $s$, then the only strategy which follows advice that recommends $s$ is $s$ itself. We say that mechanism guarantees a poly-time $\alpha$-approximation in implementation in advised strategies whenever the mechanism itself concludes in poly-time, and there exists poly-time advice $A$ such that an $\alpha$-approximation is guaranteed whenever all bidders follow advice $A$. Again, note that the assumption on bidder behavior is quite permissive: they need not play a dominant, or even undominated strategy. We just assume they do not play a strategy which the advice itself dominates. 

\vspace{-2mm}\paragraph{Advice via Approximate Demand Queries.} We now revisit posted-price mechanisms, which achieve approximation guarantees of $O( (\log \log m)^3)$, but whose truthfulness requires buyers to compute NP-hard demand queries. Instead, we pursue guarantees in implementation in advised strategies. For a posted-price mechanism with price vector $\vectr{p}$, our proposed advice will take as input a tentative set $T$ for purchase, and the buyer's valuation $v(\cdot)$, and recommend a set $S$ guaranteeing $v(S) - \vectr{p}(S) \geq v(T) - \vectr{p}(T)$.\footnote{Throughout the paper we will use notation $\vectr{p}(S):= \sum_{i \in S} \vectr{p}_i$.} More specifically, our advice will compute a tentative recommendation $S'$ independently of $T$, then simply recommend $\arg\max\{v(S') - \vectr{p}(S), v(T) - \vectr{p}(T)\}$. Again, our behavioral assumption does not assume that the buyer will purchase the set $S'$ tentatively recommended, just that they will not irrationally ignore the advice in favor of a lower-utility set.

The remaining challenge is now to find concrete advice under which the~\cite{AssadiS19} approximation guarantees are maintained. A first natural attempt is simply an approximate demand oracle: have a tentative recommendation $S'$ with $v(S') - \vectr{p}(S') \geq c \cdot \max_{T} \{v(T) - \sum_{j \in T}p_j\}$. Unfortunately, even this is NP-hard for any $c = \Omega(1/m^{1-\varepsilon} )$ (for any $\varepsilon > 0$)~\cite{FeigeJ14}. Instead, we design bicriterion approximate demand oracles. Specifically, for some $c,d < 1$, a $(c, d)$-approximate demand oracle produces a set $S'$ satisfying $v(S') - \vectr{p}(S') \geq c \cdot \max_T \{v(T) - \vectr{p}(T)/d\}$. That is, the guaranteed utility is at least an $c$-fraction of the optimum \emph{if all prices were increased by a factor of $1/d$}. We design a simple greedy $(1/2,1/2)$-approximation in poly-time (based on~\cite{LehmannLN01}), and further establish that the~\cite{AssadiS19} mechanism maintains its approximation guarantee up to an additional $\min\{c,d\}$ factor when bidders follow advice provided in this manner by a $(c,d)$-approximate demand oracle. This allows us to conclude the main result of this paper:

\begin{theorem}\label{thm:main} 
There exists a poly-time mechanism which achieves an $O( (\log \log m)^3)$-approximation to the optimal welfare for any number of submodular buyers in implementation in advised strategies. 
\end{theorem}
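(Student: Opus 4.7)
The plan is to start with the posted-price mechanism of~\cite{AssadiS19} (possibly slightly modified so that its pre-processing, learning, and price-setting steps run in poly-time), and combine it with concrete poly-time advice built from a bicriterion approximate demand oracle, then show that this combination preserves both the $O((\log \log m)^3)$-approximation and the poly-time/implementation-in-advised-strategies properties.

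First I would design the advice. For each buyer, when it is their turn to see the price vector $\vectr p$, the advice takes as input a tentative set $T$ and runs a greedy $(1/2,1/2)$-approximate demand oracle (a variant of the~\cite{LehmannLN01} marginal-over-price greedy) on $(v,\vectr p)$ to produce a set $S'$ with
\[
v(S') - \vectr p(S') \;\geq\; \tfrac{1}{2}\cdot \max_T\bigl\{v(T) - 2\,\vectr p(T)\bigr\}.
\]
The advice then outputs $\argmax\{v(T)-\vectr p(T),\, v(S')-\vectr p(S')\}$. I would then verify this advice is \emph{useful} in the sense of Section~\ref{sec:prelim}: by construction, either the advice returns $T$ itself (when $T$ already beats $S'$), or it returns a set that yields strictly greater utility in the current subgame regardless of other players' strategies, hence dominates $T$. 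Running a value-query-based greedy makes the advice poly-time, and running it only after the mechanism queries a bidder makes it compatible with the mechanism's overall poly-time budget.

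Second, I would re-examine the Assadi--Singla welfare analysis to check that it only uses the demand-query property in a black-box way that is robust to bicriterion approximation. Concretely, their proof bounds welfare by charging, for each item $j$ that \emph{should} have been allocated to bidder $i$ under \opt, either (a) the price $p_j$ (if someone bought $j$) or (b) $i$'s utility at the time $i$ was visited (if $j$ was still on the shelf). The first bound is unaffected by how buyers choose sets; the second uses that $i$'s realized utility is at least the utility of the \opt-slice $S_i^*\cap (\text{remaining items})$. When buyers instead follow advice coming from a $(c,d)$-approximate demand oracle, the realized utility is at least
\[
c\bigl(v_i(S_i^*\cap R) - \tfrac{1}{d}\vectr p(S_i^*\cap R)\bigr),
\]
which slots into their chain of inequalities with an extra $\min\{c,d\}$ factor. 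Substituting $c=d=1/2$ preserves the $O((\log \log m)^3)$ rate up to constants. I would write this as a reusable lemma: \emph{any posted-price mechanism analysis that charges via exact demand survives, with a $\min\{c,d\}$ degradation, when buyers' chosen sets merely beat the output of a $(c,d)$-approximate demand oracle}.

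The main obstacle is the second step: faithfully restating the~\cite{AssadiS19} analysis in a form where the demand assumption is visibly isolated to one line of the proof, so that the bicriterion substitution is unambiguously valid, and in particular checking that the pre-processing/learning subroutines (which also implicitly rely on demand-like computations by the buyer) can be re-derived either without any such query or with the same $(1/2,1/2)$-approximate oracle. A secondary subtlety is that ``follows advice'' does not force the buyer to play $S'$; it only forbids play with utility below $v(S')-\vectr p(S')$. This is already handled by taking $\argmax$ in the advice, but I would explicitly verify that every inequality in the welfare argument uses only the lower bound on realized utility, never the identity of the chosen set. Combining these pieces with the efficient-advice construction yields the theorem.
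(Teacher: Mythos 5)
Your proposal follows essentially the same route as the paper: a greedy $(1/2,1/2)$-bicriterion approximate demand oracle wrapped in $\argmax$-style advice, a reusable fixed-price-auction lemma showing the exact-demand charging argument degrades only by $\min\{c,d\}$, and a black-box re-run of the \cite{AssadiS19} analysis (your ``reusable lemma'' is precisely the paper's Lemma~\ref{fpaMotivation}/\ref{fpaLemma}, and your identified obstacles match the paper's Appendix~\ref{app:proofs}, including the preprocessing/statistics stage). The only detail worth double-checking when you flesh out usefulness is that the $\argmax$ must tie-break in favor of the tentative set, and that domination of full strategies (not just per-node utility improvement) requires the game to be structured so each bidder acts at a single node, both of which the paper handles explicitly.
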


\subsection{Roadmap}
Combinatorial auctions have a long history within AGT, along with related problems like Combinatorial Public Projects. The most related work is overviewed in Section~\ref{sec:intro}, but we provide additional context in Section~\ref{sec:related}. Section~\ref{sec:prelim} contains a formal definition of implementation in advised strategies, repeating our motivating examples and providing additional discussion. 

In Section~\ref{sec:apxdemand}, we design our poly-time $(1/2,1/2)$-approximate demand oracles for submodular valuations. The proof is fairly simple, but we include the complete proof in the body for readers unfamiliar with~\cite{LehmannLN01} (readers familiar with~\cite{LehmannLN01} will find the outline simliar). 

In Section~\ref{sec:welfare}, we establish that existing posted-price mechanisms maintain their approximation guarantees as long as buyers follow advice given by $(\Omega(1),\Omega(1))$-approximate demand oracles. We include a complete analysis of the main lemma of~\cite{FeldmanGL15} concerning ``fixed price auctions'' for readers unfamiliar with this aspect (readers familiar with~\cite{FeldmanGL15} will find the outline similar). We defer all aspects of the analysis of~\cite{AssadiS19} to Appendix~\ref{app:proofs}. 

Finally, in Section~\ref{sec:xos}, we design simple poly-time $(\frac{1}{\sqrt{m}},\frac{1}{1+\sqrt{m}})$-approximate demand oracles for subadditive valuations. This is essentially the best possible even for XOS valuations,\footnote{More precisely, it is unconditionally hard to obtain a $(1/{m^{1/2-\varepsilon}}, 1/{m^{1/2-\varepsilon}})$-approximate demand query for XOS valuations using polynomially many black-box value queries.} due to known lower bounds on welfare-maximization with value queries~\cite{DobzinskiNS10} and the results of Section~\ref{sec:welfare}. 

\subsection{Discussion and Related Work} \label{sec:related}
There is a vast literature studying combinatorial auctions, which we will not attempt to overview in its entirety here. We summarize the lines of work most relevant to ours below.

\vspace{-2mm}\paragraph{VCG-based Mechanisms.} The Vickrey-Clarke-Groves mechanism provides a poly-time/poly-communication black-box reduction from precise welfare maximization with a truthful mechanism to precise welfare maximization with an algorithm for any class of valuation functions~\cite{Vickrey61,Clarke71, Groves73}. The same reduction applies for ``maximal-in-range'' approximation algorithms, but this approach provably cannot achieve sub-polynomial approximations in poly-time (unless P = NP) or subexponential communication~\cite{BuchfuhrerSS10, BuchfuhrerDFKMPSSU10,DanielySS15}. Still, in some regimes (e.g. arbitrary monotone valuations with poly-time/poly-communication, or XOS valuations with poly-time), no better than a $\Theta(\sqrt{m})$-approximation is achievable even with honest players, and a $\Theta(\sqrt{m})$-approximation is achievable via truthful VCG-based mechanisms~\cite{LaviS05,DobzinskiNS10}. 

\vspace{-2mm}\paragraph{Combinatorial Auctions in the Computational Model.} Taking the above discussion into account, valuation function classes above XOS (including subadditive, or arbitrary monotone) are ``too hard'', in the sense that $\Theta(\sqrt{m})$ is the best approximation achievable in poly-time even without concern for incentives, and this guarantee can be matched by VCG-based truthful mechanisms. Other valuation function classes like Gross Substitutes are ``easy'', in the sense that precise welfare maximization is achievable in poly-time, so the VCG mechanism is poly-time as well. Submodular valuations are a fascinating middle ground. Here, an algorithmic $e/(e-1)$-approximations is possible in poly-time (and it is NP-hard to do better)~\cite{Vondrak08, MirrokniSV08, DobzinskiV12b}, but a long series of works establishes that it is NP-hard for a truthful mechanism to even achieve an $m^{1/2-\varepsilon}$-approximation (for any $\varepsilon > 0$)~\cite{PapadimitriouSS08, BuchfuhrerDFKMPSSU10, BuchfuhrerSS10, DanielySS15, Dobzinski11, DobzinskiV12a, DobzinskiV12b, DobzinskiV16}. 

On this front, our work establishes that significantly better ($O( (\log \log m)^3)$) guarantees are achievable with a slightly relaxed solution concept, matching the state-of-the-art in the communication model. In addition to the standalone motivation for the communication model discussed below, our work establishes that resolving key open questions (e.g. is there a constant-factor approximation in the communication model for submodular valuations?) may have strong implications in the computational model as well (via implementation in advised strategies). 

\vspace{-2mm}\paragraph{Combinatorial Auctions in the Communication Model.} In the communication model, only arbitrary monotone valuations are ``too hard'' per the above discussion: a $2$-approximation is possible for subadditive valuations, and a $\frac{1}{1-(1-1/n)^n}$-approximation is possible for XOS valuations in poly-communication, both of which are tight~\cite{Feige09, DobzinskiNS10, EzraFNTW19}. Yet, no truthful constant-factor approximations are known (the state-of-the-art is $O( (\log \log m)^3)$ for submodular/XOS~\cite{AssadiS19} or $O(\log m \log \log m)$ for subadditive~\cite{Dobzinski07}). {On the lower bounds side, \emph{no} separations are known between the approximation ratios of truthful mechanisms and non-truthful algorithms using $\poly(n,m)$ communication, even for \emph{deterministic} truthful mechanisms (where the $O(\sqrt{m})$-approximation of~\cite{DobzinskiNS10} remains the state-of-the-art). }
Determining whether such a separation exists is the central open problem of this agenda (e.g.~\cite{Dobzinski16b, BravermanMW18}).

On this front, our work in some sense unifies the state-of-the-art for submodular valuations in the communication and computational models via implementation in advised strategies. So in addition to the standalone interest in establishing (or disproving) a separation in the communication model, such a result will now likely have implications in the computational model as well.

\vspace{-2mm} \paragraph{Posted Price Mechanisms.} Posted-price mechanisms are ubiquitous in mechanism design, owing to their simplicity and surprising ability to guarantee good approximations through a variety of lenses~\cite{KrystaV12,  DevanurMSW15, FeldmanGL15, Dobzinski16a, DuettingFKL17, AssadiS19}. Very recent work also establishes posted-price mechanisms as the unique class of mechanisms which is ``strongly obviously strategy-proof''~\cite{PyciaT19}. One minor downside of these mechanisms is that they require buyers to compute NP-hard demand queries. Our work formally mitigates this downside under implementation in advised strategies. For example, our work immediately extends the price of anarchy bounds of~\cite{DevanurMSW15} to hold in equilibria which are poly-time learnable for submodular buyers (previously the equilibria required computation of demand queries). 

\notshow{\vspace{-2mm}\paragraph{Approximate Demand Queries.} \claynote{These have been studied before under a different name/purpose. To be self-contained (and deterministic) we give our own.}\mattfootnote{I think it's OK to omit this, I'm not sure what it adds for the reader other than to possibly confuse them into thinking our work has already been done elsewhere =).}}

\vspace{-2mm}\paragraph{Combinatorial Public Projects.} Combinatorial Public Projects is a related problem, which has also received substantial attention. Here, the designer may select any set of $k$ items, but \emph{every} bidder receives all $k$ items (instead of the bidders each receiving disjoint sets of items, as in auctions). We used the single submodular bidder Combinatorial Public Projects problem as a motivating example due to the hardness results established in~\cite{SchapiraS08}: no poly-time truthful mechanism can achieve an approximation ratio better than $O(\sqrt m)$ (unless P = NP). Contrast this with the general communication model, where Set-For-Free is truthful and precisely optimizes welfare. Follow up work of~\cite{Buchfuhrer11} establishes that while the single-bidder CPPP is inapproximable only because demand queries are NP-hard, the strong multi-bidder inapproximability results of~\cite{PapadimitriouSS08, BuchfuhrerSS10} hold even when bidders have access to demand oracles. 

\cite{KrystaV12,Dobzinski16a, AssadiS19} already establish that the aforementioned computational separations no longer hold with demand oracles (so the story for combinatorial auctions differs greatly from combinatorial public projects). Our work further establishes that the same guarantees are achievable in truly polynomial time, under a relaxed solution concept.

\vspace{-2mm}\paragraph{$(c,d)$-Approximate Demand Oracles.} To the best of our knowledge, bicriterion approximate demand oracles have not previously been considered. However, prior work regarding approximation algorithms for nonnegative submodular functions with bounded curvature subject to a matroid constraint designs a randomized $(1-1/e,1-1/e)$-approximate demand oracle for submodular functions (under a different name)~\cite{SviridenkoVW15,Feldman19,HarshawFWK19}. We include our $(\nicefrac{1}{2},\nicefrac{1}{2})$-approximate demand oracles for submodular functions as they are deterministic and significantly simpler (note also that determinism makes our related solution concepts significantly cleaner).


\notshow{\vspace{-2mm}\paragraph{Combinatorial Public Projects.} Combinatorial Public Projects is a related problem, which has also received substantial attention. Here, the designer may select any set of $k$ items, but \emph{every} bidder receives all $k$ items (instead of the bidders each receiving disjoint sets of items as in auctions). We used the single submodular bidder Combinatorial Public Projects problem as a motivating example due to the hardness results established in~\cite{BuchfuhrerSS10}: \claynote{(under standard complexity theory assumptions) no truthful mechanism running in polynomial time (or using a polynomial number of value queries) can achieve an approximation ratio better than $O(\sqrt m)$. Contrast this with the general communication model, where the trivial mechanism (ask the single bidder his preferred set) is truthful and exactly optimizes welfare\footnote{
  This slight peculiarity is an initial motivating example of~\cite{Buchfuhrer11}, which observes that under the standard definition of truthful efficient mechanisms, bidders can solve NP-hard problems in order to find manipulative strategies, but not in order to interact directly with the mechanism under truthful play. \cite{Buchfuhrer11} attempts to model bidder's ability to find such manipulations and allows the mechanisms to harness this ability (or still prove hardness in a principled way). In contrast, our paper directly addresses the aspects of the definition of strategyproofness which caused this issue.
}.
While the single-bidder example may seem to be highlighting the fact that truthfulness is perhaps an overly stringent requirement in computationally limited models, the original separations of~\cite{PapadimitriouSS08} were given in the communication model for two players, indicating that approximation and incentives may be genuinely incompatible for Combinatorial Public Projects (with more than one player). Although it is possible that a weaker solution concept such as implementation in advised strategies can produce useful mechanisms for Combinatorial Public Projects, such a mechanisms cannot follow the same outline as the mechanisms presented in this paper, which adapt good demand-query mechanisms to work with computationally efficient advised strategies.}

So while there are likely other domains where implementation in advised strategies (or related solution concepts) have bite, the concept is really designed to address a particular anomaly of combinatorial auctions with submodular bidders, where demand queries seem to buy substantial power in comparison to value queries. }

\vspace{-2mm}\paragraph{Related Solution Concepts.} The most related existing solution concept to ours is Algorithmic Implementation in Undominated Strategies~\cite{BabaioffLP09}. Here, a mechanism achieves an $\alpha$-approximation if whenever players play \emph{any undominated strategy}, the resulting allocation is an $\alpha$-approximation (and for all dominated strategies, there is a poly-time algorithm to find an undominated one which dominates it). For posted-price mechanisms, this solution concept has no bite as the only undominated strategy is to pick the utility-maximizing set, and it is not possible to find this set in poly-time. We establish in Observation~\ref{obs:IiUS} that implementation in advised strategies is a relaxation of algorithmic implementation in undominated strategies. The purpose of our novel solution concept is to have bite even when it is NP-hard to find an undominated strategy.

\section{Implementation in Advised Strategies} \label{sec:prelim} \label{summary}
Motivated by the example of~\cite{SchapiraS08}, we first relax the requirement that a mechanism be truthful, instead requiring that a mechanism achieve its approximation guarantee whenever players behave in a manner which is not clearly irrational. Before proposing our formal definition, let's examine it applied to two motivating examples.

\ourparagraph{Example One:~\cite{SchapiraS08}.} There is a single buyer with submodular valuation function $v(\cdot)$. The seller's mechanism (Set-For-Free) allows the buyer to state any set of size $k$, and receive that set for free. Recall that it is NP-hard for the buyer to find their favorite set of size $k$ --- so if the mechanism is to be truthful, it is not poly-time (unless P = NP). The buyer can indeed find an $e/(e-1)$-approximation in poly-time~\cite{NemhauserWF78}, but assuming the buyer will run this particular algorithm (or any specific approximation algorithm) is perhaps too strong an assumption. 

Instead, we assume simply that the buyer picks a set yielding \emph{at least as much utility} as this $e/(e-1)$-approximation. Specifically, we will think of the designer as providing a poly-time mechanism (Set-For-Free --- the buyer states a set of size $k$ and receives that set for free), and a poly-time \emph{advice algorithm} (takes as input the buyer's valuation function $v(\cdot)$, and a tentative set $S$, then runs the $e/(e-1)$-approximation to get a set $T$ and outputs $\arg\max\{v(S),v(T)\}$, tie-breaking for $S$). Intuitively, we are claiming that it is certainly rational for the buyer to purchase a set other than $T$, but that it is irrational to purchase a set with $v(S) < v(T)$.

\ourparagraph{Example Two: Posted-Price Mechanisms.} Consider now any posted-price mechanism. Again, we think of the designer as providing a poly-time mechanism (for all $i$, computes in poly-time a price vector to offer bidder $i$, based on interactions with bidders $< i$), along with a poly-time advice algorithm (takes as input the buyer's valuation function $v(\cdot)$, a tentative set $S$, computes in poly-time a set $T$ and outputs $\arg\max\{v(S)-\vectr{p}(S),v(T) - \vectr{p}(T)\}$, again tie-breaking for $S$). Again note that we are claiming that it may be rational for the buyer to purchase a set other than $T$, but that it is irrational to purchase a set yielding lower utility than $T$.

Importantly, we emphasize that we assume the buyer achieves at least as much \emph{utility} as recommended (a well-justified behavioral assumption, although not particularly convenient for welfare guarantees), and \emph{not} that the buyer picks a set guaranteeing them at least as much welfare (more convenient for analyzing welfare guarantees, but an unmotivated assumption). With these instantiations in mind, we now build up language to present our formal definition. 

\begin{definition}[Mechanism as an Extended Form Game] Formally, a \emph{mechanism} is just an extended form game: at every state, it solicits actions from one or more players and (possibly randomly) updates its state. With probability one, the mechanism eventually reaches a terminal state, and (possibly randomly) outputs an allocation of items and payments charged.

A mechanism is \emph{poly-time} if every state update is poly-time computable, and the mechanism reaches a terminal state with probability one after $\poly(n,m)$ updates.

\end{definition}

\begin{definition}[Strategies, Utility, and Dominance] A \emph{strategy} $s(\cdot)$ for player $i$ is simply a mapping from the current state $x$ of the mechanism to an action $s(x)$.

We denote by $u_i(v_i, \vec{s})$ the expected utility of player $i$ when their valuation function is $v_i(\cdot)$ and the players use strategy profile $\vec{s}$.

Strategy $s(\cdot)$ \emph{dominates} strategy $s'(\cdot)$ for player $i$ with valuation $v_i(\cdot)$ if for all $s_{-i}$, $u_i(v_i,s_i;\vec{s}_{-i}) \geq u_i(v_i,s'_i; \vec{s}_{-i})$, and there exists an $\vec{s}_{-i}$ such that $u_i(v_i,s_i;\vec{s}_{-i}) > u_i(v_i,s'_i;\vec{s}_{-i})$. 
\end{definition}

\begin{definition}[Advice] Advice is a function $A(\cdot, \cdot, \cdot)$ which takes as input the valuation $v_i(\cdot)$ of a player, a state $x$ of a mechanism, and a tentative action $a$, then (possibly randomly) outputs an advised action $A(v_i, x, a)$. We say that advice is poly-time if it is poly-time computable.

Observe that every advice $A(\cdot, \cdot, \cdot)$, valuation function $v_i(\cdot)$, and tentative strategy $s(\cdot)$ induces a strategy $A^{v_i,s}(\cdot)$ with $A^{v_i,s}(x):=A(v_i,x,s(x))$.
\end{definition}

\begin{definition}[Useful Advice]\label{def:usefulAdvice} We say that advice $A$ is \emph{useful} if:
\begin{enumerate}
\item For all $s(\cdot)$, and all $v_i(\cdot)$, either $A^{v_i,s}(\cdot)=s(\cdot)$, or $A^{v_i,s}(\cdot)$ dominates $s(\cdot)$ (for $v_i(\cdot)$).
\item For all $s(\cdot)$ and all $v_i(\cdot)$, $A^{v_i,A^{v_i,s}}(\cdot) = A^{v_i,s}(\cdot)$ (Advice is idempotent --- applying advice to $s(\cdot)$ twice is the same as applying it once).
\end{enumerate}
\end{definition}

Intuitively, Property 1 guarantees that the bidder should indeed follow advice given by $A$ instead of whatever strategy they had originally planned. Property 2 guarantees essentially that the bidder does not get ``stuck'' in an exponentially-long loop trying to repeatedly improve their strategy via advice (because the loop terminates after one iteration). Let us briefly observe the following implication of our definition (which we explore further when revisiting our two main examples):

\begin{observation}\label{obs:reviewer}Let $A(\cdot,\cdot,\cdot)$ be useful, and let $A(v_i,x,a) \neq a$. Then for all $s$ such that $s(x) = a$, $A^{v_i,s}(\cdot)$ dominates $s(\cdot)$.
\end{observation}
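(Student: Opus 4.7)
The plan is to unfold the definition of the induced strategy $A^{v_i,s}$ and apply Property 1 of useful advice directly; the observation is really just a pointwise reformulation of that property.

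First I would evaluate $A^{v_i,s}$ at the distinguished state $x$. By definition of the induced strategy, $A^{v_i,s}(x) = A(v_i, x, s(x))$. Since we are given that $s(x) = a$, this equals $A(v_i, x, a)$, which by hypothesis is not equal to $a = s(x)$. Thus $A^{v_i,s}$ and $s$ disagree on at least one state, so they are distinct strategies.

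Second, I would invoke Property 1 of Definition \ref{def:usefulAdvice}: for every strategy $s$ and valuation $v_i$, either $A^{v_i,s}(\cdot) = s(\cdot)$, or $A^{v_i,s}(\cdot)$ dominates $s(\cdot)$. Since the previous step rules out the first alternative, the second must hold, i.e.\ $A^{v_i,s}$ dominates $s$ for the valuation $v_i$. This is exactly the claim.

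I do not expect any real obstacle; the observation is essentially a definitional unpacking, and its purpose is to make explicit that usefulness localizes nicely --- if the advice function moves a particular action $a$ at state $x$, then every strategy consistent with $(x,a)$ is strictly improved by the induced advised strategy. The only thing to be careful about is that ``dominates'' is defined for strategies (not for actions at a single state), so the argument has to pass through the equality $A^{v_i,s} \neq s$ as functions and then cite Property 1, rather than trying to reason directly about improvements at state $x$.
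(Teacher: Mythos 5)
Your proof is correct and is exactly the intended argument: the paper states Observation~\ref{obs:reviewer} without proof, treating it as an immediate consequence of Property~1 of Definition~\ref{def:usefulAdvice}, and your unpacking (showing $A^{v_i,s}(x) = A(v_i,x,a) \neq a = s(x)$ rules out the equality branch, leaving domination) is precisely the definitional argument the paper has in mind.
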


Intuitively, useful advice $A$ separates strategies into \emph{advised} strategies (where $A^{v_i,s}(\cdot) = s(\cdot)$), and \emph{ill-advised} strategies (where $A^{v_i,s}(\cdot)$ dominates $s(\cdot)$). We say that a bidder follows advice if they use an advised strategy.

\begin{definition}[Follows Advice] We say that $s(\cdot)$ is advised for $v_i(\cdot)$ under $A$ if $A^{v_i,s}(\cdot) = s(\cdot)$. A bidder with valuation $v_i(\cdot)$ \emph{follows advice} $A$ if they use a strategy which is advised under $A$.
\end{definition}

Intuitively, we are claiming that it is irrational for a player to use an ill-advised strategy $s(\cdot)$ (because they could instead use the strategy $A^{v_i,s}(\cdot)$, which dominates it). 
\begin{definition}[Implementation in Advised Strategies] We say that a mechanism $M$ guarantees an $\alpha$-approximation in \emph{implementation in advised strategies} with advice $A$ if $A$ is useful and for all $v_1(\cdot),\ldots, v_n(\cdot)$, if all players follow advice $A$, the resulting allocation in $M$ achieves (expected) welfare at least $\alpha \cdot \opt(v_1,\ldots, v_n)$. If both $M$ and $A$ are poly-time, we say that $M$ guarantees a poly-time $\alpha$-approximation in implementation in advised strategies (without referencing $A$). 
\end{definition}

Let's now briefly revisit our two examples through the formal definitions. First, recall the single-bidder mechanism Set-For-Free. Set-For-Free is poly-time: it takes as input a set and simply outputs that set, and terminates after one iteration. Consider the advice algorithm which takes as input $v(\cdot)$, and a set $S$, then runs the $e/(e-1)$-approximation algorithm of~\cite{NemhauserWF78} to get a set $T$ and outputs $\arg\max\{v(S),v(T)\}$, tie-breaking for $S$ (the mechanism has only a single non-terminal state, so this completely specifies the advice). Then the advice indeed recommends a dominating strategy whenever it recommends $T \neq S$, and is idempotent (tie-breaking in favor of $S$ is subtly necessary for this claim --- if instead the advice tie-broke for $T$, then it might recommend an action distinct from $S$ which does not dominate it). Moreover, observe that the bidder follows advice if and only if they choose a higher value set than produced by the algorithm, and therefore we're guaranteed a $e/(e-1)$-approximation whenever the bidder follows advice.

Posted-price mechanisms with poly-time computable prices are poly-time: they iteratively take as input a set from bidder $i$, assign that set to bidder $i$, then run a poly-time computation to determine the prices for bidder $i+1$. They terminate after $n$ iterations. We will later design poly-time \emph{approximate demand oracles}, which take as input $v_i(\cdot)$ and the price vector $\vectr{p}$, and output a recommended set $D(v_i,\vectr{p})$ in poly-time. For a posted-price mechanism, there are multiple non-terminal states, each corresponding to a different price vector $\vectr{p}$. Our advice, on input $v_i, S,\vectr{p}$, will advise the set $\arg\max\{v(S)-\vectr{p}(S), v(D(v_i,\vectr{p}))-\vectr{p}(D(v_i,\vectr{p}))\}$, again tie-breaking in favor of $S$.\footnote{Again, note that tie-breaking in favor of $T$ would violate the definition of usefulness, via Observation~\ref{obs:reviewer}. For a detailed example, consider the strategy $s(\cdot)$ which purchases a utility-maximizing set on all prices $\neq \vectr{p}$, and set $S$ on $\vectr{p}$. Then any advice which tie-breaks in favor of $T$ on prices $\vectr{p}$ does not map $s(\cdot)$ to itself, and does not dominate $s(\cdot)$ (because it generates the same utility on prices $\vectr{p}$, and cannot generate strictly higher utility on any other prices, because $s(\cdot)$ is optimal).} If a strategy follows advice, it must, for all $\vectr{p}$, select a set $S$ satisfying $v_i(S) - \vectr{p}(S) \geq v_i(D(v_i,\vectr{p}))-\vectr{p}(D(v_i,\vectr{p}))$. It's not immediately clear why this property should provide meaningful welfare guarantees, but we will later argue that the right pairing of mechanism and notion of approximate demand oracle achieves polynomial-time welfare guarantees which match state-of-the-art guarantees for computationally-unbounded bidders.

\ourparagraph{Brief Discussion of Definitions.} We chose our definitions with the goal of (a) providing a strict relaxation of truthfulness, and (b) doing so in a way that permits all rational behavior while (c) still eliminating enough irrational behavior to guarantee good welfare. We include in Appendix~\ref{app:examples} a brief example motivating our decision to think of advice as \emph{improving a given strategy} as opposed to \emph{outright proposing a replacement strategy.} We conclude this section by establishing that implementation in advised strategies is a strict relaxation of truthfulness and implementation in undominated strategies~\cite{BabaioffLP06b}, and is additionally equivalent to Algorithmic Implementation~\cite{BabaioffLP09}. 

\begin{observation} \label{obs:dominantStrategy}
If player $i$ with valuation $v_i$ has a dominant strategy $s^*(\cdot)$ in mechanism $M$, and $A^{v_i,s}(\cdot):=s^*(\cdot)$ for all $s(\cdot)$, then the only strategy which follows advice is $s^*(\cdot)$ itself.
\end{observation}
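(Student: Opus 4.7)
My plan is to derive the observation directly from the definition of ``follows advice'': a strategy $s(\cdot)$ is advised under $A$ precisely when $A^{v_i,s}(\cdot) = s(\cdot)$. Substituting the hypothesized $A^{v_i,s}(\cdot) := s^*(\cdot)$, this equation collapses to $s^*(\cdot) = s(\cdot)$, so $s^*$ is the unique advised strategy, which is exactly the claim.

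For completeness, I would also verify that the advice $A$ defined in the observation is useful in the sense of Definition~\ref{def:usefulAdvice}, so that the statement really fits within the implementation-in-advised-strategies framework. Idempotence is immediate, since
\[ A^{v_i, A^{v_i,s}}(\cdot) \;=\; A^{v_i,s^*}(\cdot) \;=\; s^*(\cdot) \;=\; A^{v_i,s}(\cdot). \]
The domination clause follows from the hypothesis that $s^*$ is a dominant strategy for $v_i$: whenever $s \neq s^*$, the output $A^{v_i,s} = s^*$ dominates $s$ directly by the definition of a dominant strategy.

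I do not anticipate any real obstacle, as the observation is essentially a one-line check against the definition of ``follows advice.'' The only mild nuance worth flagging is the strict-inequality clause in the definition of ``dominates'': if some $s \neq s^*$ happens to be behaviorally indistinguishable from $s^*$ on every reachable state, that clause could conceivably fail, but this would affect only the usefulness verification in corner cases and not the main claim, which is a purely syntactic statement about the equation $A^{v_i,s} = s$.
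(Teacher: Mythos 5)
Your proof is correct and matches the paper, which states this observation without proof precisely because it is the one-line definitional check you give: advised means $A^{v_i,s}=s$, which under the hypothesis collapses to $s=s^*$. Your additional usefulness verification (and the flagged caveat about the strict-inequality clause when some $s\neq s^*$ is payoff-equivalent to $s^*$) is a reasonable bonus but is not needed for the claim itself.
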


\begin{observation}\label{obs:IiUS} If strategy $s(\cdot)$ is undominated for player $i$ with valuation $v_i(\cdot)$, and advice $A$ is useful, then $A^{v_i,s}(s) = s$. Therefore, if a mechanism achieves an $\alpha$-approximation in implementation in advised strategies, it also achieves an $\alpha$-approximation in implementation in undominated strategies.
\end{observation}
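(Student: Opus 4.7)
The proof should be essentially immediate from Property 1 of Definition \ref{def:usefulAdvice}. Here is the plan.

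First I would prove the strategy-level statement: if $s(\cdot)$ is undominated for $v_i(\cdot)$ and $A$ is useful, then $A^{v_i,s}(\cdot) = s(\cdot)$. By Property 1 of useful advice, for every strategy $s(\cdot)$ and every valuation $v_i(\cdot)$, we have the dichotomy that either $A^{v_i,s}(\cdot) = s(\cdot)$ or $A^{v_i,s}(\cdot)$ dominates $s(\cdot)$. Since $s(\cdot)$ is assumed to be undominated for $v_i(\cdot)$, the second branch of the dichotomy is impossible (no strategy dominates $s(\cdot)$, so in particular $A^{v_i,s}(\cdot)$ does not), and hence we must be in the first branch: $A^{v_i,s}(\cdot) = s(\cdot)$. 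By the definition of ``follows advice,'' this says exactly that $s(\cdot)$ is advised under $A$.

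Next I would derive the implication for solution concepts. Suppose mechanism $M$ achieves an $\alpha$-approximation in implementation in advised strategies, witnessed by useful advice $A$. Consider any profile $\vec{s}$ in which every player $i$ uses an undominated strategy $s_i(\cdot)$ for $v_i(\cdot)$. By the first part applied to each player, each $s_i(\cdot)$ satisfies $A^{v_i,s_i}(\cdot) = s_i(\cdot)$, so every player follows advice $A$. By the hypothesis that $M$ achieves an $\alpha$-approximation whenever all players follow $A$, the resulting allocation has expected welfare at least $\alpha \cdot \opt(v_1,\ldots,v_n)$. Since this holds for arbitrary undominated profiles, $M$ achieves an $\alpha$-approximation in implementation in undominated strategies.

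There is no real obstacle here: the whole argument is a one-line deduction from Property 1 of Definition \ref{def:usefulAdvice}, and Property 2 (idempotence) is not even needed. The only minor subtlety is interpreting ``$A^{v_i,s}(s) = s$'' in the statement as the equality of strategies $A^{v_i,s}(\cdot) = s(\cdot)$, which matches the definition of advised strategies given just above the observation.
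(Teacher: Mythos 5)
Your proposal is correct and matches the paper's proof: both argue that Property~1 of Definition~\ref{def:usefulAdvice} forces $A^{v_i,s}(\cdot)=s(\cdot)$ because the alternative (that $A^{v_i,s}(\cdot)$ dominates $s(\cdot)$) contradicts $s(\cdot)$ being undominated, and then the approximation guarantee transfers since every undominated profile is an advised profile. Your added remark that idempotence is not needed is accurate.
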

\begin{proof}
This is simply because useful advice must have $A^{v_i,s}(\cdot)$ dominate $s$ or be equal to $s$. Because $s$ is undominated, the former is impossible. Thus, if agents in this mechanism play undominated strategies, they play strategies which follow advice.
\end{proof}

Let us now recall the definition of Algorithmic Implementation~\cite{BabaioffLP09}:

\begin{definition}[Algorithmic Implementation, Definition 5.1 in~\cite{BabaioffLP09}]\label{def:BLP}
A mechanism $M$ is an \emph{algorithmic implementation} of a $c$-approximation (in undominated strategies) if there exists a set of strategies $D$ with the following properties:
\begin{enumerate}
\item $M$ obtains a $c$-approximation for any combination of strategies from $D$, in polynomial time.
\item For any strategy that does not belong to $D$, there exists a strategy in $D$ that dominates it. Furthermore, we require that this ``improvement step'' can be computed in polynomial time.
\end{enumerate}
\end{definition}

\begin{observation}\label{obs:AI}
Let $M$ achieve an $\alpha$-approximation in implementation in advised strategies. Then $M$ is an Algorithmic Implementation of an $\alpha$-approximation.
\end{observation}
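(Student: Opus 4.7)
The plan is to take $D$ to be the set of \emph{advised} strategies under the useful advice $A$ witnessing the $\alpha$-approximation in implementation in advised strategies, i.e., $D = \{ s(\cdot) : A^{v_i,s}(\cdot) = s(\cdot)\}$ (for each player $i$ with valuation $v_i(\cdot)$). Property~1 of Definition~\ref{def:BLP} is then immediate: if every player selects a strategy from $D$, then by definition every player follows advice $A$, so the hypothesis on $M$ guarantees welfare at least $\alpha \cdot \opt$, and $M$ itself runs in polynomial time by assumption.

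For Property~2, the polynomial-time improvement step is simply ``apply the advice once.'' Take any $s(\cdot) \notin D$, so $A^{v_i,s}(\cdot) \neq s(\cdot)$. Useful advice (Property~1 of Definition~\ref{def:usefulAdvice}) then guarantees that $A^{v_i,s}(\cdot)$ dominates $s(\cdot)$. Moreover, idempotence (Property~2 of Definition~\ref{def:usefulAdvice}) gives $A^{v_i,A^{v_i,s}}(\cdot) = A^{v_i,s}(\cdot)$, so $A^{v_i,s}(\cdot) \in D$. This is exactly the improving strategy in $D$ that Definition~\ref{def:BLP} requires. Computationally, $A^{v_i,s}(\cdot)$ is described by the recipe ``on input state $x$, output $A(v_i, x, s(x))$,'' which runs in polynomial time per state since $A$ is poly-time computable and $s$ is given as a poly-time subroutine (strategies are treated as algorithms throughout the paper, consistent with how we described the buyer's action in posted-price mechanisms).

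The one subtlety worth flagging is what it means to ``compute'' a strategy: strategies are functions from states to actions, so Property~2 of Definition~\ref{def:BLP} should be read as producing a poly-time algorithm that, when queried at any state reached during the execution of $M$, returns the advised action. The construction above achieves this directly, by composing $A$ with $s$ state-by-state; no global transformation of $s$ is required. No other step presents a genuine obstacle: the main content of the observation is recognizing that the two clauses of Definition~\ref{def:usefulAdvice} line up exactly with the two requirements of Definition~\ref{def:BLP}, with the set of advised strategies playing the role of $D$.
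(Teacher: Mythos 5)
Your proposal is correct and follows essentially the same route as the paper: define $D$ to be the set of advised strategies and use the advice itself as the improvement step, with Property~1 of Definition~\ref{def:BLP} following from the approximation guarantee for advised strategies and Property~2 from usefulness. Your explicit appeal to idempotence to confirm that $A^{v_i,s}(\cdot)$ lands in $D$, and your remark on computing strategies state-by-state, merely spell out details the paper leaves implicit.
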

\begin{proof}
Simply define $D$ to be the set of advised strategies, and the useful advice $A(\cdot,\cdot,\cdot)$ to be the improvement step. By definition, $M$ obtains a $\alpha$-approximation for any advised strategies. Also by definition, $A(\cdot,\cdot,\cdot)$ is a poly-time mapping from ill-advised strategies (not in $D$) to advised ones ($D$), as desired.
\end{proof}

Note that the converse of Observation~\ref{obs:AI} is not technically true if one takes Definition~\ref{def:BLP} verbatim --- the issue is that their ``improvement step'' need not be useful advice, because it is only defined as a function of strategies not in $D$ (and could be arbitrarily bizarre on inputs in $D$). Based on discussion surrounding their Definition~5.1 (and recent personal communication with an author), however, it seems clear that this is just a minor oversight and the intended definition would also require the improvement step to be the identity within $D$. With this minor modification, the converse of Observation~\ref{obs:AI} holds, and the two concepts are equivalent ($D$ is exactly the advised strategies, and the improvement step is exactly $A(\cdot,\cdot,\cdot)$). We choose to introduce our definitions and notation because they emphasize the advice/improvement step (which is more intuitive for our application), rather than the advised strategies/$D$. Also, it is easier to rigorously define what it means to have a poly-time mapping between complete strategies in an extended form game (which may have exponentially many possible states) when focusing on advice.

\section{Approximate demand oracles} \label{sec:apxdemand}
In this section, we develop our poly-time advice for posted-price mechanisms in the form of an \emph{approximate demand oracle}. Recall that a demand oracle for valuation function $v(\cdot)$ takes as input a price vector $\vectr p$ and outputs a set in $\arg\max_{S \subseteq M}\{v(S) - \vectr p(S)\}$. Recall also that implementing a demand oracle is NP-hard when $v(\cdot)$ is submodular. In fact, it is NP-hard to even guarantee better than a ${m}$-approximation when $v(\cdot)$ is submodular (more precisely, for any $\varepsilon>0$ it is NP-hard to guarantee a set $T$ satisfying $v(T) - \vectr p(T) \geq \frac{1}{O(m^{1-\varepsilon})}\cdot \max_S\{v(S) - \vectr p(S)\}$~\cite{FeigeJ14}). Motivated by this, we pursue instead a bicriterion approximation. Specifically:

\begin{definition}
  For any $c,d \leq 1$, a \emph{$(c,d)$-approximate demand oracle} takes as input a valuation function $v(\cdot)$ and a price vector $\vectr p$ and outputs a set of items $S$ such that
  \begin{align*}
     v(S) - \vectr p(S)\ge c\cdot\max_T\{v(T) - \vectr p(T)/d\}. 
  \end{align*}
\end{definition}

That is, a $(c,d)$-demand oracle outputs a set guaranteeing at least a $c$-fraction of the optimal utility \emph{if all prices were blown up by a factor of $1/d$}. We refer to the utility of the optimal bundle with these higher prices (i.e. $\max_T\{v(T) - \vectr p(T)/d\}$) as the \emph{benchmark} (so our goal is to be $c$-competitive with the benchmark). In this section, we establish that poly-time $(\nicefrac{1}{2},\nicefrac{1}{2})$-approximate demand oracles exist for submodular functions, based on the simple greedy algorithm of \cite{LehmannLN01}.

\begin{algorithm}
  \caption{SimpleGreedy$(v,\vectr p,M)$}
  \label{algSimpleGreedy}
\begin{algorithmic}[0]
  \State $S \leftarrow \emptyset$
  \For { $j=1,\ldots,m$ }
  \Comment{For items in an arbitrary order}
    \If { $v(S\cup \{j\}) - v(S) \ge 2\vectr p(j)$}
      \Comment{If the marginal gain is at least twice the price}
      \State $S \leftarrow S\cup\{ j\}$
      \Comment{Then allocate that item}
    \EndIf
  \EndFor
  \Return $S$
\end{algorithmic}
\end{algorithm}

\begin{proposition}
  When $v(\cdot)$ is submodular, SimpleGreedy is a $(\nicefrac{1}{2},\nicefrac{1}{2})$-approximate demand oracle.
\end{proposition}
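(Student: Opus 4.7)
Let $S$ denote the set produced by SimpleGreedy, and let $T^\star \in \arg\max_T\{v(T) - 2\vectr p(T)\}$ be the benchmark maximizer. Our target inequality is
\[
  v(S) - \vectr p(S) \;\ge\; \tfrac{1}{2}\bigl(v(T^\star) - 2\vectr p(T^\star)\bigr).
\]
The plan is to bound $v(S)$ from below in terms of $\vectr p(S)$ using the greedy acceptance rule, and to bound $v(T^\star)$ from above in terms of $v(S)$ and $\vectr p(T^\star)$ using the rejection rule together with submodularity. Combining these two bounds will suffice, and in fact the weaker factor of $\tfrac{1}{2}$ falls out cleanly.

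First, I will argue $v(S) \ge 2\vectr p(S)$. Write the items of $S$ in the order they were added as $j_1, \ldots, j_k$, and let $S_\ell = \{j_1,\ldots,j_{\ell-1}\}$ be the state when $j_\ell$ was considered. The greedy acceptance rule gives $v(S_\ell \cup \{j_\ell\}) - v(S_\ell) \ge 2\vectr p(j_\ell)$, and telescoping yields $v(S) = \sum_{\ell} \bigl(v(S_\ell \cup\{j_\ell\}) - v(S_\ell)\bigr) \ge 2\vectr p(S)$, so $v(S) - \vectr p(S) \ge \tfrac{1}{2} v(S)$.

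Next, I will argue $v(T^\star) \le v(S) + 2\vectr p(T^\star)$. For any $j \in T^\star \setminus S$, the item was rejected at its turn, so when it was considered with state $S_j \subseteq S$ we had $v(S_j \cup \{j\}) - v(S_j) < 2\vectr p(j)$. By submodularity (diminishing marginals) this marginal only shrinks as the state grows, so also $v(S\cup\{j\}) - v(S) \le 2\vectr p(j)$. Applying the standard submodular inequality
\[
  v(S \cup T^\star) - v(S) \;\le\; \sum_{j \in T^\star \setminus S}\bigl(v(S\cup\{j\}) - v(S)\bigr),
\]
we obtain $v(T^\star) \le v(S \cup T^\star) \le v(S) + 2\vectr p(T^\star \setminus S) \le v(S) + 2\vectr p(T^\star)$, i.e.\ $v(T^\star) - 2\vectr p(T^\star) \le v(S)$.

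Combining the two bounds gives $v(S) - \vectr p(S) \ge \tfrac{1}{2}v(S) \ge \tfrac{1}{2}\bigl(v(T^\star) - 2\vectr p(T^\star)\bigr)$, which is exactly the $(\tfrac{1}{2},\tfrac{1}{2})$-approximate demand guarantee. The only real obstacle is making sure the submodular marginal inequality is invoked correctly (i.e.\ passing from ``marginal at time of rejection'' to ``marginal at the end'' via diminishing returns, and then summing these per-item bounds via the $v(A\cup B)-v(A)\le \sum_{j\in B\setminus A}(v(A\cup\{j\})-v(A))$ inequality); everything else is just greedy bookkeeping.
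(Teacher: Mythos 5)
Your proof is correct, but it takes a genuinely different route from the paper. The paper proves the proposition by induction on $m$, exploiting the recursive structure of SimpleGreedy (if item $1$ is taken, the rest of the loop is SimpleGreedy on the marginal valuation $v_{\{1\}}$, which remains submodular) and then splitting into cases according to whether $1 \in O^*$; this mirrors the original analysis of \cite{LehmannLN01}. You instead give a direct charging argument: telescoping the acceptance rule yields $v(S) \ge 2\vectr p(S)$, hence $v(S) - \vectr p(S) \ge \tfrac12 v(S)$; and for each rejected $j \in T^\star \setminus S$ you transfer the rejection bound from the state $S_j$ at $j$'s turn to the final set $S$ via diminishing marginals, then sum via $v(S\cup T^\star) - v(S) \le \sum_{j\in T^\star\setminus S}(v(S\cup\{j\})-v(S))$ and use monotonicity to get $v(T^\star) - 2\vectr p(T^\star) \le v(S)$. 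Both steps are sound (you correctly note $S_j \subseteq S$, and you implicitly use $v(\emptyset)=0$, nonnegative prices, and monotonicity, all consistent with the paper's setting). Your version makes transparent why the factor $2$ in the acceptance threshold simultaneously buys the utility guarantee and the doubled-price benchmark, at the cost of invoking submodularity in two places; the paper's induction isolates submodularity into the single fact that marginal valuations stay submodular, and keeps the exposition parallel to \cite{LehmannLN01}.
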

\begin{proof}
  Our proof follows by induction on the number of items $m$. Importantly, observe that SimpleGreedy is recursive. Specifically, if we do not allocate item $1$, then the remainder of the for loop is simply SimpleGreedy($v,\vectr p,M\setminus\{1\}$). If we do allocate item $1$, then the remainder of the for loop is simply SimpleGreedy($v_{\{1\}},\vectr p,M\setminus\{1\}$), where $v_S(T):= v(S\cup T) - v(S)$. Also importantly, observe that $v_S(\cdot)$ is submodular whenever $v(\cdot)$ is submodular (like~\cite{LehmannLN01}, this is the only part of the proof which requires submodularity instead of subadditivity). 

Now we begin with the base case. Observe that when $m=1$, SimpleGreedy purchases the item if and only if the value exceeds twice the price. So when SimpleGreedy purchases the item, it is optimal. When SimpleGreedy doesn't purchase the item, the benchmark is $0$ (because we compete with the optimal utility when the prices are doubled, which is zero). So in both cases, it guarantees a the required $(\nicefrac{1}{2},\nicefrac{1}{2})$-approximation. This proves the base case.

Now assume that the proposition holds for a fixed $m\geq 1$, and consider the case with $m+1$ items. First, observe that if SimpleGreedy does not allocate item $1$, it is because $v(1) < 2\vectr p(1)$. By submodularity of $v(\cdot)$ (in fact, subadditivity suffices), this implies that $v(S) - 2\vectr p(S) > v(S \cup \{1\}) -2 \vectr p(S \cup \{1\})$ for all $S\ni 1$ (and in particular, that the optimum when prices are doubled does not contain item $1$). By the inductive hypothesis, SimpleGreedy finds a $(\nicefrac{1}{2},\nicefrac{1}{2})$-approximation for $v(\cdot)$ on $M\setminus\{1\}$, which by the previous sentence is also a $(\nicefrac{1}{2},\nicefrac{1}{2})$-approximation for $v(\cdot)$ on $M$, completing the inductive step in this case.

It remains to consider the case where SimpleGreedy allocates item $1$. Let $S_2:=S\setminus\{1\}$ denote the set output by SimpleGreedy($v_{\{1\}},\vectr p,M\setminus\{1\}$), and let $O^*:= \argmax\{v(Y) - 2\vectr p(Y)\}$ be the optimum bundle if prices were doubled. Then the inductive hypothesis guarantees:

$$v_{\{1\}}(S_2) - \vectr p(S_2) \geq  v_{\{1\}}(O^*)/2 - \vectr p(O^*\setminus\{1\}).$$

Suppose first $1\in O^*$. The inductive hypothesis then implies:
\begin{align*}
v(O^*)/2 - \vectr p(O^*) &= v_{\{1\}}(O^*)/2 -\vectr p(O^*\setminus\{1\}) + v(\{1\})/2 - \vectr p(\{1\})\\
&\leq v_{\{1\}}(S_2) - \vectr p(S_2) + v(\{1\})/2 - \vectr p(\{1\})\\
&\leq v(S) - \vectr p(S).
\end{align*}

Above, the first and third lines are simply expanding the definition of $v_{\{1\}}(\cdot)$, and the second line follows by inductive hypothesis. Observe that this concludes a $(\nicefrac{1}{2},\nicefrac{1}{2})$-approximation in the case that $1 \in O^*$. Now, suppose instead that $1\notin O^*$. Then we have:
\begin{align*}
v(O^*)/2 - \vectr p(O^*) &\leq v(O^*\cup\{1\})/2 - \vectr p(O^*) =  v_{\{1\}}(O^*)/2 +v(\{1\})/2 - \vectr p(O^*)\\
&\leq v_{\{1\}}(S_2) - \vectr p(S_2) + v(\{1\})/2\\
&\leq v_{\{1\}}(S_2) - \vectr p(S_2) + v(\{1\}) - \vectr p(1)\\
&= v(S) - \vectr p(S).
\end{align*}

Above, the first line follows by monotonicity and expanding the definition of $v_{\{1\}}(\cdot)$. The second line follows by inductive hypothesis. The third line follows as $v(\{1\}) \geq 2\vectr p(1)$ by assumption that SimpleGreedy allocates item $1$. The final line follows again by expanding $v_{\{1\}}(\cdot)$. This concludes both cases of the inductive step, and the proof of the proposition.
\end{proof}

This concludes our development of bicriterion approximate demand oracles. The following section establishes that a wide class of posted-price mechanisms that achieve good guarantees when buyers use precise demand queries maintain their guarantees when buyers follow advice given by bicriterion approximate demand oracles.


\section{Welfare Guarantees with Approximate Demand Oracles} \label{sec:welfare}

In this section, we demonstrate that a slight modification of the $O( (\log \log m)^3)$ approximation of~\cite{AssadiS19} (which is truthful when buyers implement precise demand oracles) maintains its approximation guarantee when buyers follow advice recommended by a $(\nicefrac{1}{2},\nicefrac{1}{2})$-approximate demand oracle. We begin with the main insight below, followed by a precise statement of our main result.

\subsection{Fixed Price Auctions with Approximate Demand Oracles}

A key component of the~\cite{AssadiS19} (and related) auctions is the notion of a \emph{Fixed Price Auction}. A fixed price auction simply sets a price $\vectr p(j)$ on item $j$, visits the buyers one at a time, and offers the buyer the option to purchase any set $S$ of remaining items for price $\vectr p(S)$ (so it is a posted-price mechanism which sets the same prices for all bidders).

\notshow{\begin{algorithm}
\caption{FixedPriceAuction($M, N, \vectr p$):
} \label{fixedPriceAuction}
\begin{algorithmic}[0]
  \State \textbf{Input:} Set of items $M$, bidders $N$, prices $\vectr p$
  \State $R \leftarrow M$ 
  \For { $i=1,\ldots,n$ }
    \State Bidder $i$ reports a set $T_i\subseteq R$
    \State $R \leftarrow R \setminus T_i$
  \EndFor
  \Return Allocation $(T_i)_{i \in N}$
    with payments $(\vectr p(T_i))_{i\in N}$
\end{algorithmic}
\end{algorithm}}

A key lemma used by these works establishes that there \emph{exists} a fixed price auction generating good welfare (when bidders implement exact demand oracles) for any instance with submodular bidders (or even XOS bidders).\footnote{This lemma appears at least as early as~\cite{DobzinskiNS06}.} One can view the~\cite{Dobzinski16a,AssadiS19} auctions as attempting to \emph{learn} such a ``good'' fixed price auction. The key intuition behind our extension is that good fixed price auctions still exist when bidders only implement approximate demand oracles. This is captured formally by Lemma~\ref{fpaMotivation} below, which first requires the notion of \emph{supporting prices}.


\begin{definition}\label{def:supportingPrices}{$\vectr{q}$ are \emph{supporting prices} for $v_1(\cdot),\ldots, v_n(\cdot)$ and allocation $S_1,\ldots, S_n$ if:}
\begin{itemize}
\item For all $i,T$, $v_i(T) \geq \vectr{q}(S_i \cap T)$.
\item For all $i$, $v_i(S_i) = \vectr{q}(S_i)$.
\end{itemize}
\end{definition}

\begin{fact}When all $v_i(\cdot)$ are XOS, supporting prices exist for any allocation $S_1,\ldots, S_n$.\footnote{Recall that submodular functions are XOS, and a function is XOS if it can be written as the maximum of additive functions. Supporting prices for items in $S_i$ are defined by simply taking the additive function which defines $v_i(S_i)$.}
\end{fact}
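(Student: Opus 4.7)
The plan is to unpack the XOS definition directly, constructing the prices item-by-item from the maximizing additive clauses, as the footnote already hints. Recall that $v_i$ being XOS means there is a family of nonnegative additive functions $\{a_{i,\ell}(\cdot)\}_\ell$ with $v_i(T) = \max_\ell \sum_{j \in T} a_{i,\ell}(j)$ for every $T$. For each bidder $i$, let $\ell_i^\star$ be an index that achieves this maximum at $T = S_i$, so that $v_i(S_i) = \sum_{j \in S_i} a_{i,\ell_i^\star}(j)$.

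Next, I would define the supporting price vector $\vectr{q}$ by setting $\vectr{q}(j) := a_{i,\ell_i^\star}(j)$ whenever $j \in S_i$, and (say) $\vectr{q}(j) := 0$ for items unallocated by $S_1,\ldots,S_n$. This is well-defined because an allocation is a collection of pairwise disjoint bundles, so each item lies in at most one $S_i$ and there is no conflict in how $\vectr{q}(j)$ is assigned.

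Verification of the two conditions in Definition~\ref{def:supportingPrices} is then mechanical. For the equality condition, $\vectr{q}(S_i) = \sum_{j \in S_i} a_{i,\ell_i^\star}(j) = v_i(S_i)$ by our choice of $\ell_i^\star$. For the lower bound condition, fix any bidder $i$ and any set $T \subseteq M$; then
\[
  v_i(T) \;\ge\; \sum_{j \in T} a_{i,\ell_i^\star}(j) \;\ge\; \sum_{j \in T \cap S_i} a_{i,\ell_i^\star}(j) \;=\; \vectr{q}(T \cap S_i),
\]
where the first inequality is the XOS definition applied with clause $\ell_i^\star$ at the set $T$, the second uses nonnegativity of the additive clause entries to drop coordinates outside $S_i$, and the last equality is how $\vectr{q}$ was defined on $S_i$.

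There is essentially no main obstacle here: the only subtlety is remembering that we must apply the XOS inequality at $T$ (not at $T \cap S_i$) while using the clause $\ell_i^\star$ chosen to be tight at $S_i$, and then invoking nonnegativity to discard the items in $T \setminus S_i$. Since submodular functions form a subclass of XOS, the statement also covers the submodular case used throughout Section~\ref{sec:welfare}.
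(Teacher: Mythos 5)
Your construction is exactly the one the paper intends (and sketches in its footnote): take the additive clause that is tight at $S_i$, use it to price the items in $S_i$, and verify both conditions of Definition~\ref{def:supportingPrices} using nonnegativity of the clause entries. The proposal is correct and matches the paper's approach.
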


\notshow{\begin{definition}\label{def:supportingPrices}
Suppose XOS valuations functions $(v_i)_{i\in N}$ are given by $v_i(S) = \max_k a_k^{(i)}(S)$, where $a_k^{(i)}$ are additive functions, for each $i$. Given an allocation $(S_i)_{i\in N}$, we defined the \emph{supporting prices} $\vectr q$ of $(S_i)_i$ as follows:
for each $j\in M$, if $j\in S_i$ for some $i$ and $k$ is such that $v_i(S_i) = a_k^{(i)}(S_i)$, then $\vectr q(j) = a_k^{(i)}(j)$ (if $j$ is unallocated, then set $\vectr q(j)=0$).\footnote{
Note that it's possible that there are several valid vectors of supporting prices for a given allocation. For the purposes of our statements and proofs, any of them will do.
}
\end{definition}}

Much prior work leverages the fact that with precise demand queries, the fixed-price auction with prices $\vectr q/2$ achieves half the optimal welfare. The intuition for our main result is that this key lemma extends to $(c,d)$-approximate demand queries by losing an additional $\min\{c,d\}$ factor. In the statement below, we will slightly abuse notation and say that a bidder ``follows advice given by a $(c,d)$-approximate demand oracle'' if they follow advice given by an algorithm which on input $v(\cdot), S$ computs a $(c,d)$-approximate demand query $T$, then advises $\arg\max\{v(S)-\vectr{p}(S),v(T) - \vectr{p}(T)\}$. 



\notshow{\begin{lemma}\label{fpaMotivation}
  Suppose all bidders are XOS and follow strategies which are $A$-aware,
  where $A$ is a $(c, d)$-approximate demand query.
  Suppose $O$ is any allocation of items to bidders,
  and let $\vectr q$ be the supporting prices of $O$.
  Let $\mathrm{FixedPriceAuction}(M, N, d\vectr q /2)$
  return an allocation $(T_i)_{i\in N}$.
  Then the welfare
  of $(T_i)_i$ is at least a $\min\{c,d\}/2$ fraction of the welfare of $O$.
\end{lemma}}

\begin{lemma}\label{fpaMotivation}
Let $\vectr{q}$ be supporting prices for $v_1(\cdot),\ldots,v_n(\cdot)$ and $S_1,\ldots, S_n$. Then the fixed-price auction with prices $d\vectr{q}/2$ guarantees welfare at least $\min \{c,d\}\cdot \sum_i v_i(S_i)/2$ when all bidders follow advice given by a $(c,d)$-approximate demand oracle.
\end{lemma}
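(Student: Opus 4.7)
My plan is to adapt the standard ``supporting prices + fixed-price auction'' argument (essentially from~\cite{DobzinskiNS06,FeldmanGL15}) to the bicriterion setting, with a carefully chosen price scaling that makes the $d$ from the oracle cancel with the $d$ in the posted prices. The key observation driving the choice $\vectr{p}=d\vectr{q}/2$ is that the $(c,d)$-oracle's benchmark $\max_T\{v(T)-\vectr{p}(T)/d\}$ reduces exactly to $\max_T\{v(T)-\vectr{q}(T)/2\}$, which is the quantity for which the classical supporting-prices inequality has bite.

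Concretely, I would fix an execution of the fixed-price auction at prices $d\vectr{q}/2$, let $R_i$ denote the set of items still available when bidder $i$ arrives, and let $T_i$ denote the set bidder $i$ actually purchases. The first step is to produce a lower bound on $v_i(T_i) - (d/2)\vectr{q}(T_i)$. To do so I consider the ``baseline'' alternative $S_i \cap R_i$: by the first supporting-prices property, $v_i(S_i \cap R_i) \ge \vectr{q}(S_i \cap R_i)$, so the utility of the baseline at prices $d\vectr{q}/2$ satisfies
\[
v_i(S_i \cap R_i) - (d/2)\vectr{q}(S_i \cap R_i)/d \;=\; v_i(S_i\cap R_i) - \vectr{q}(S_i\cap R_i)/2 \;\ge\; \vectr{q}(S_i\cap R_i)/2.
\]
Hence the $(c,d)$-approximate demand oracle returns some set $T'_i$ with $v_i(T'_i) - (d/2)\vectr{q}(T'_i) \ge c\,\vectr{q}(S_i\cap R_i)/2$, and because bidder $i$ follows advice,
\[
v_i(T_i) - (d/2)\vectr{q}(T_i) \;\ge\; c\,\vectr{q}(S_i\cap R_i)/2.
\]

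The second step is to sum this inequality over $i$ and rearrange. Rewriting gives $v_i(T_i) \ge (d/2)\vectr{q}(T_i) + (c/2)\vectr{q}(S_i\cap R_i)$, and the $\min\{c,d\}$ trick bounds this below by $(\min\{c,d\}/2)(\vectr{q}(T_i) + \vectr{q}(S_i\cap R_i))$. Summing and using the standard double-counting argument---every item $j\in S_i$ is either still available when $i$ arrives (contributing to $\vectr{q}(S_i\cap R_i)$) or was previously purchased by some $i'<i$ (contributing to $\vectr{q}(T_{i'})$)---yields $\sum_i \vectr{q}(T_i)+\sum_i \vectr{q}(S_i\cap R_i) \ge \sum_i \vectr{q}(S_i) = \sum_i v_i(S_i)$ by the second supporting-prices property. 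Combining,
\[
\sum_i v_i(T_i) \;\ge\; \frac{\min\{c,d\}}{2}\sum_i v_i(S_i),
\]
which is the claim.

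The main obstacle---really the only nonroutine point---is identifying the right scaling of the posted prices. The naive choice $\vectr{q}/2$ would not let the approximate-oracle guarantee plug into the supporting-prices inequality, because the oracle's benchmark inflates prices by $1/d$; scaling posted prices by $d$ in advance is what makes the two factors cancel and yields a clean $\min\{c,d\}$ loss rather than something worse. After that insight, the rest is a direct translation of the textbook fixed-price-auction argument, replacing the ``bidder picks their demand'' step with ``bidder follows advice derived from an approximate demand oracle.''
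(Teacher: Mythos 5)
Your proposal is correct and follows essentially the same argument as the paper: lower-bound each bidder's utility at prices $d\vectr{q}/2$ by comparing against their still-available share of the benchmark allocation, apply the supporting-price inequality, then combine utility and revenue via the $\min\{c,d\}$ bound. The only (immaterial) difference is that you compare against $S_i\cap R_i$ and double-count per item, whereas the paper compares against $A_i = S_i\setminus \textsc{Sold}$ and uses $\bigcup_i A_i = S\setminus\textsc{Sold}$; both accountings yield the identical bound.
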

\begin{proof}
  Let $S:= \cup_i S_i$. Let also $T_i$ denote the set purchased by bidder $i$ (following advice given by a $(c,d)$-approximate demand oracle), and denote by $\sold = \bigcup_{i\in N}T_i$. Define $A_i = S_i \setminus \sold$. Because items in $A_i$ are never allocated when bidder $i$ is chosen to act (meaning that bidder $i$ could choose to purchase the set $A_i$), and bidder $i$ will choose a set guaranteeing at least as much utility as a $(c,d)$-approximate demand oracle, we have:
  \[ v_i(T_i) - d \vectr q(T_i) /2
    \ge c \left( v_i(A_i) - \frac{d \vectr q(A_i)/2}{d} \right)
    = c \left( v_i(A_i) - \frac 1 2 \vectr q(A_i) \right).
  \]
  
  The welfare achieved ($\sum_{i\in N} v_i(T_i)$) is exactly the sum of the utilities 
  of each bidder ($v_i(T_i) - d\vectr q(T_i) / 2$)
  and the total revenue of the auction ($\sum_{i\in N} d\vectr q(T_i)/2 = d\vectr q(\sold) / 2$).
  By the definition of supporting prices (and the fact that $A_i\subseteq S_i$), we know that $v_i(A_i) \ge  \vectr q(A_i)$. Thus:
  
  \begin{align*}
    \frac d 2 \vectr q(\sold) + \sum_{i=1}^n v_i(T_i) - \frac d 2 \vectr q(T_i) 
    & \ge \frac d 2 \vectr q(\sold) + c \left(\sum_{i=1}^n v_i(A_i) - \frac 1 2 \vectr q(A_i) \right)\\
    & \ge \frac d 2 \vectr q(\sold) + \frac c 2 \sum_{i=1}^n \vectr q(A_i) \\
    & \ge \frac {\min\{c,d\}} 2 \big( \vectr q(\sold) + \vectr q(S\setminus \sold)  \big) \\
    & \ge \frac {\min\{c,d\}} 2  \vectr q(S) = \frac{\min\{c,d\}}{2} \sum_i v_i (S_i).
  \end{align*}

The first inequality follows as each bidder follows advice of a $(c,d)$-approximate demand oracle. The second follows as $v_i(A_i) \geq \vectr{q}(A_i)$ for all $i$ (by definition of supporting prices). The third follows as $\cup_i A_i = S \setminus \sold$. The final inequality follows by basic arithmetic, and the final equality follows as $\vectr{q}(S) = \sum_i v_i(S_i)$ by definition of supporting prices. 
\end{proof}

Lemma~\ref{fpaMotivation} captures the main intuition for why existing posted-price guarantees can be extended to accommodate bicriterion approximate demand queries. Of course, the~\cite{AssadiS19} mechanism is not just a single posted-price mechanism, and Lemma~\ref{fpaMotivation} is just one technical lemma used along the way (to be more precise, a generalization of Lemma~\ref{fpaMotivation} is used along the way, but the overly technical statement hides the intuition). But an outline similar to the proof of Lemma~\ref{fpaMotivation} establishes the more general claim. Section~\ref{sec:mainresult} formally states our main result, and all details of the proof aside from the above intuition can be found in Appendix~\ref{app:proofs}.

\subsection{Formal Statement of Main Result}\label{sec:mainresult}


\begin{theorem} \label{thrmResilientXos}
Let $\mathcal{V}$ be a subclass of XOS valuations and let $D$ be a poly-time $(c,d)$-approximate demand oracle for valuation class $\mathcal{V}$. Then there exists a poly-time mechanism for welfare maximization when all valuations are in $\mathcal{V}$ with approximation guarantee $O\left(\max\left\{\frac{1}{c},\frac{1}{d}\right\}\cdot (\log\log{m})^3\right)$ in implementation in advised strategies with polynomial time computable advice.
\end{theorem}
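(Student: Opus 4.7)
The plan is to adopt the mechanism of~\cite{AssadiS19} essentially as a black box, modify it minimally so that its pre-processing runs in genuine poly-time, and re-analyze its guarantee by replacing every use of an exact demand query with a step in which bidders follow advice generated by the given $(c,d)$-approximate demand oracle $D$. After pre-processing, the~\cite{AssadiS19} mechanism is a posted-price mechanism, so bidder behavior consists solely of choosing sets in response to price vectors $\vectr p$ — precisely the behavior governed by our advice framework.

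The concrete advice $A$ that I would use is: at any mechanism state with current prices $\vectr p$ for bidder $i$ holding tentative set $S$, compute $T := D(v_i,\vectr p)$ and output $\arg\max\{v_i(S)-\vectr p(S),\, v_i(T)-\vectr p(T)\}$, tie-breaking in favor of $S$. This advice is poly-time (since $D$ is), and useful: the tie-breaking rule ensures idempotence, and whenever it outputs $T\neq S$ the definition gives that $A^{v_i,s}$ dominates $s$ by the same reasoning as in Section~\ref{sec:prelim}. Any advised strategy for bidder $i$ must therefore purchase a set $T_i$ with $v_i(T_i)-\vectr p(T_i)\geq v_i(D(v_i,\vectr p))-\vectr p(D(v_i,\vectr p))$, which is exactly the hypothesis of Lemma~\ref{fpaMotivation}. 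Since that lemma extends the standard fixed-price auction bound while losing only an additional $\min\{c,d\}/2$ factor relative to the $(1,1)$ version used in~\cite{AssadiS19}, the key remaining task is to ensure that this $\min\{c,d\}$ loss is incurred only once across the full~\cite{AssadiS19} analysis, so that the overall ratio inflates by a single multiplicative $O(\max\{1/c,1/d\})$.

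The main obstacle is that the~\cite{AssadiS19} mechanism is not a single fixed-price auction: it involves a randomized price-learning procedure based on sampling a subset of bidders, together with an iterative decomposition of items into value classes, and invokes the fixed-price-auction guarantee multiple times along the way. Two things therefore need verification: (a) the poly-time modification of the learning phase must itself be purged of any step requiring exact bidder demands — any such step must be replaceable by one relying only on value queries to the bidder's circuit/Turing machine, which is under the mechanism designer's control; and (b) the proof must be reorganized so that the fixed-price guarantee of Lemma~\ref{fpaMotivation} is applied in a form whose $\min\{c,d\}$-loss terms combine additively (via a single supporting-price bound on the welfare of the target allocation) rather than multiplicatively (across the nested iterations). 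Since each bidder uses the same advice at every state, the per-bidder inequality $v_i(T_i)-\vectr p(T_i)\geq c(v_i(A_i)-\vectr p(A_i)/d)$ holds uniformly and can be summed once at the end of the analysis, exactly as in the proof of Lemma~\ref{fpaMotivation}.

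With these two points handled, the rest of the argument follows the structure of~\cite{AssadiS19} essentially verbatim, with their fixed-price-auction lemma replaced by Lemma~\ref{fpaMotivation}, yielding the claimed $O(\max\{1/c,1/d\}(\log\log m)^3)$-approximation in implementation in advised strategies. Since the combination of the generic analysis with our modifications is mechanical but lengthy, I would defer the full details to Appendix~\ref{app:proofs} and state here only the reduction from the main theorem to the existence of a poly-time $(\Omega(1),\Omega(1))$-approximate demand oracle, which for submodular valuations is provided by SimpleGreedy (Section~\ref{sec:apxdemand}).
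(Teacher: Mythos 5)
Your plan is the paper's plan: keep the \cite{AssadiS19} mechanism (with prices discounted by $d$), use as advice the map $S\mapsto\arg\max\{v_i(S)-\vectr p(S),\,v_i(D(v_i,\vectr p))-\vectr p(D(v_i,\vectr p))\}$ with ties broken for $S$, and rerun the analysis with the fixed-price lemma replaced by its $(c,d)$ version. Two of the steps you defer, however, are described in a way that would not go through as stated. First, the claim that the $\min\{c,d\}$ losses ``combine additively'' and that the per-bidder inequality ``can be summed once at the end of the analysis'' misdescribes the structure. The generalized fixed-price lemma (Lemma~\ref{fpaLemma}) is invoked separately in every one of the $\alpha$ auctions of every one of the $\beta$ learning iterations (Claims 5.3 and 5.4 of \cite{AssadiS19}), and it needs an extra ``moreover'' clause bounding the value obtained by the \emph{last} bidder in an auction, which is what powers the sampling/coupling argument of Claim 5.4. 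The reason the $c$-dependence does not compound across iterations is the learnable-or-allocatable dichotomy: in a learnable iteration the mass of correctly-priced items drops only by additive $OPT/(3\beta)$ terms that are independent of $c$, and the factor $c$ (or $d$, in the final auction) enters the approximation ratio exactly once, in whichever single auction is actually used for the allocation. There is no single global summation; you have to check that $c$ and $d$ thread through Claims 5.2--5.6 as multiplicative constants on already-present $\alpha\beta$ factors.

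Second, ``useful: \ldots by the same reasoning as in Section~\ref{sec:prelim}'' does not suffice for the full mechanism. In a multi-round posted-price mechanism a bidder acts at many nodes, and improving the chosen set at one node can change which items remain for other bidders and hence what happens to you later, so the ``improved'' strategy need not dominate the original; and with no randomness, a strictly better set need not yield strictly higher utility anywhere. The paper handles this by (i) collapsing each bidder's entire participation into a single game node (reporting all $\alpha$ sets for its iteration simultaneously), and (ii) exploiting the fact that each auction is returned as the final allocation with positive probability, so a pointwise-better set gives strictly higher expected utility (Lemma~\ref{lem:GenMechAdvice}). Relatedly, the poly-time preprocessing is not just a cosmetic modification: the range $[\psi_{min},\psi_{max}]$ is learned via a second-price auction on the grand bundle among a random half of the bidders, and the advice must also cover (and render dominant-by-advice) truthful bidding in that auction. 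None of these points breaks your approach, but they are the actual content of the deferred appendix, and your sketch's stated reasons for why they work are not the ones that do.
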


Theorem~\ref{thm:main} now follows from Theorem~\ref{thrmResilientXos} as submodular valuations are a subclass of XOS which admits poly-time $(\nicefrac{1}{2},\nicefrac{1}{2})$-approximate demand oracles. The poly-time mechanism witnessing Theorem~\ref{thrmResilientXos} is a slight modification of~\cite{AssadiS19}. The high-level approach of their mechanism is the following: because $\mathcal{V}$ is XOS, Lemma~\ref{fpaMotivation} establishes that there \emph{exists} a fixed-price mechanism which achieves an $1/O(\min\{c,d\}) = O(\max\left\{1/c,1/d\right\})$ approximation in implementation in advised strategies.
Of course, implementing this fixed-price auction requires complete knowledge of $v_1(\cdot),\ldots,v_n(\cdot)$, which the seller lacks. 
The mechanism of~\cite{AssadiS19} essentially tries to iteratively guess a better and better set of fixed prices, and then pick one uniformly at random.

Intuitively, our adapted~\cite{AssadiS19} mechanism works with $(c,d)$-approximated demand oracles for the same reason that Lemma~\ref{fpaMotivation} works with approximate demand oracles. Formally establishing this requires a bit of work, but much of the analysis of~\cite{AssadiS19} treats the $(c,d)=(1,1)$ case of Lemma~\ref{fpaMotivation} as a black box, and therefore we can leverage most of their analysis as a black box as well. The generalized Lemma~\ref{fpaMotivation} (Lemma~\ref{fpaLemma}) provides all the properties of demand queries which their proof requires (and all the properties of approximate demand queries which our adaptation requires). A complete proof appears in Appendix~\ref{app:proofs}.

\section{Approximate Demand Queries beyond Submodular} \label{sec:xos}
In this section, we explore approximate demand queries beyond submodular valuation functions. As the approximation guarantees of~\cite{AssadiS19} hold for XOS valuations with precise demand queries, a poly-time $(\Omega(1),\Omega(1))$-approximate demand query would immediately extend their guarantees to XOS valuations under implementation in advised strategies. Interestingly, this very fact establishes that for all $\varepsilon >0$, no poly-time $(\Omega(m^{-1/2+\varepsilon}),\Omega(m^{-1/2+\varepsilon}))$-approximate demand oracle exists for XOS valuations using subexponentially-many value queries.

\begin{proposition}
For all $\varepsilon > 0$, there is no $(\Omega(m^{-1/2+\varepsilon},\Omega(m^{-1/2+\varepsilon}))$-approximate demand oracle for XOS valuations using $\poly(m)$ value queries.
\end{proposition}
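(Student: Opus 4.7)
The plan is to proceed by contradiction, combining the positive direction, Theorem~\ref{thrmResilientXos}, with the unconditional $\Omega(\sqrt{m})$ value-query lower bound for XOS welfare maximization from~\cite{DobzinskiNS10}. Concretely, suppose for contradiction that there exists a $(c,d)$-approximate demand oracle $D$ for XOS valuations with $c,d = \Omega(m^{-1/2+\varepsilon})$ that makes only $\poly(m)$ value queries. Plugging $D$ into Theorem~\ref{thrmResilientXos} yields a poly-time mechanism $M$ together with poly-time advice $A$ (both expressed using only value queries to each $v_i$, since $D$ is) such that, when every bidder follows advice, the welfare achieved is at least
\[ \Omega\!\left(\frac{1}{m^{1/2-\varepsilon}(\log\log m)^3}\right)\cdot\opt(v_1,\ldots,v_n). \]

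Next, I would convert this mechanism/advice pair into an ordinary algorithm in the value-query model. The reduction is direct: the algorithm simulates $M$ step by step, and whenever a bidder $i$ is called upon to play an action, it synthesizes one advised action by invoking $A$ on any fixed tentative strategy (say, the strategy that always reports the empty set). Since $A$ is built from $D$, all of its accesses to $v_i$ are value queries, and $M$ itself is a posted-price mechanism whose state transitions depend only on past reported actions, not on any other form of oracle access. So the entire simulation uses only $\poly(n,m)$ value queries across all $v_i$. Moreover, because every simulated bidder plays an advised strategy, the welfare guarantee of Theorem~\ref{thrmResilientXos} applies to the allocation produced.

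This hands us a $\poly(n,m)$-value-query algorithm for XOS welfare maximization achieving an $O(m^{1/2-\varepsilon}(\log\log m)^3)$-approximation, which is $o(\sqrt m)$ for any fixed $\varepsilon > 0$ and sufficiently large $m$, contradicting the $\Omega(\sqrt m)$ value-query lower bound of~\cite{DobzinskiNS10}. The main subtlety, and the only step that needs care, is the reduction itself: one must verify that nothing in $M$ or $A$ requires stronger oracle access than value queries, and that simulating the bidders via $A$ applied to a fixed tentative strategy indeed produces advised strategies (which is immediate from the definition of follows-advice and the idempotence clause of Definition~\ref{def:usefulAdvice}). Everything else is a direct instantiation of Theorem~\ref{thrmResilientXos} combined with~\cite{DobzinskiNS10}.
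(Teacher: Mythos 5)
Your proposal is correct and follows exactly the paper's route: assume such an oracle exists, invoke Theorem~\ref{thrmResilientXos} to get a $\poly(n,m)$-value-query algorithm with an $O(m^{1/2-\varepsilon}(\log\log m)^3) = O(m^{1/2-\varepsilon/2})$ welfare approximation, and contradict the $\Omega(\sqrt m)$ lower bound of~\cite{DobzinskiNS10}. The paper states this in two sentences; you simply spell out the (correct) simulation argument that converts the mechanism-plus-advice pair into a value-query algorithm.
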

\begin{proof}
Assume for contradiction that the proposition were false. Then by Theorem \ref{thrmResilientXos}, there exists an algorithm using $\poly(n,m)$ value queries that approximates the optimal welfare within $O\left(m^{1/2 - \varepsilon} \cdot (\log\log m)^3\right) \in O(m^{1/2 - \varepsilon/2})$ for XOS valuations. However, Theorem~6.1 of~\cite{DobzinskiNS10} proves that no such algorithm exists.
\end{proof}

To complete the picture, we also design poly-time $(\Omega(1/\sqrt{m}),\Omega(1/\sqrt{m}))$-approximate demand oracles for subadditive valuations (defined immediately below, based on the $\Omega(1/\sqrt{m})$-approximation of~\cite{DobzinskiNS10}), which is the best possible using subexponentially-many value queries.

\begin{algorithm}
  \caption{SingleOrBundle$(v, \vectr p, M)$}
  \label{algSingleOrBundle}
\begin{algorithmic}[0]
    \State $j \leftarrow \argmax_{j \in M} v(j) -\vectr p(j)$
    \State $M^* \leftarrow \{j \in M : v(j) - (1+\sqrt{m})\vectr p(j) > 0\}$ 
    \If { $v(M^*) - \vectr p(M^*)> v(j) - \vectr p(j)$}
    \State return $M^*$ 
    \Else 
 \State return $\{j\}$
    \EndIf 
\end{algorithmic}
\end{algorithm}

\begin{proposition}
  $\mathrm{SingleOrBundle}(v, \vectr p,M)$ is a
  $(\frac{1}{\sqrt{m}}, \frac{1}{1+\sqrt{m}})$-approximate demand oracle for
  subadditive valuation functions.
\end{proposition}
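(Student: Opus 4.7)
The plan is to let $T^*\in\arg\max_T\{v(T)-(1+\sqrt m)\vectr p(T)\}$ with $\mathrm{OPT}:=v(T^*)-(1+\sqrt m)\vectr p(T^*)$, write $u_1:=v(j^*)-\vectr p(j^*)$ and $u_2:=v(M^*)-\vectr p(M^*)$ for the two candidate outputs, and show $\max(u_1,u_2)\ge \mathrm{OPT}/\sqrt m$ (the algorithm returns whichever is larger).

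The first preparatory step is to argue that we may assume $T^*\subseteq M^*$. If some $j\in T^*$ had $v(j)\le(1+\sqrt m)\vectr p(j)$, then subadditivity gives $v(T^*\setminus\{j\})\ge v(T^*)-v(j)$, so $v(T^*\setminus\{j\})-(1+\sqrt m)\vectr p(T^*\setminus\{j\})\ge \mathrm{OPT}+(1+\sqrt m)\vectr p(j)-v(j)\ge \mathrm{OPT}$, and we can drop such $j$ from $T^*$ without decreasing the benchmark. Then the key pointwise inequality I would extract is that for every $j\in M^*$, $v(j)>(1+\sqrt m)\vectr p(j)$ rearranges to $\vectr p(j)<(v(j)-\vectr p(j))/\sqrt m\le u_1/\sqrt m$.

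I would then case-split on $|T^*|$. In the small case $|T^*|\le \sqrt m$, subadditivity gives $v(T^*)\le\sum_{j\in T^*}v(j)$, so $\mathrm{OPT}\le v(T^*)-\vectr p(T^*)\le\sum_{j\in T^*}(v(j)-\vectr p(j))\le |T^*|\,u_1\le \sqrt m\,u_1$, handling this case via the singleton. In the large case $|T^*|>\sqrt m$, assume $u_1<\mathrm{OPT}/\sqrt m$ (otherwise we are already done) and estimate $u_2$. By monotonicity $v(M^*)\ge v(T^*)$, and by the pointwise bound summed only over $M^*\setminus T^*$,
\[
\vectr p(M^*\setminus T^*)<(|M^*|-|T^*|)\,u_1/\sqrt m\le (m-|T^*|)\,u_1/\sqrt m.
\]
Since $v(T^*)-\vectr p(T^*)\ge \mathrm{OPT}$, this yields $u_2\ge \mathrm{OPT}-(m-|T^*|)u_1/\sqrt m$. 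Plugging in $u_1<\mathrm{OPT}/\sqrt m$ gives $u_2>\mathrm{OPT}\,|T^*|/m\ge \mathrm{OPT}/\sqrt m$, as $|T^*|>\sqrt m$.

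The main technical obstacle will be getting the sharp constant $1/\sqrt m$ rather than the easier $1/(1+\sqrt m)$. The crude estimate $\vectr p(M^*)\le |M^*|u_1/\sqrt m\le \sqrt m\,u_1$ only delivers $u_2+\sqrt m\,u_1\ge \mathrm{OPT}$, which gives $\max(u_1,u_2)\ge \mathrm{OPT}/(1+\sqrt m)$; the improvement to $1/\sqrt m$ requires the finer accounting above, charging only $\vectr p(M^*\setminus T^*)$ to the singleton utility and exploiting the case-split hypothesis $|T^*|>\sqrt m$ so that $|T^*|/m\ge 1/\sqrt m$ at the end.
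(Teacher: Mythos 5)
Your proof is correct, and it reaches the sharp $1/\sqrt m$ factor, but the accounting differs from the paper's in a way worth noting. The paper's case split is purely on the best singleton utility: either some item $j$ is ``special'' ($v(j)-\vectr p(j)\ge \mathrm{OPT}/\sqrt m$, so the singleton branch already wins), or no item is; in the latter case it sums the negation of specialness over \emph{all} of $M^*$, uses $|M^*|\le m$ to get $\sum_{j\in M^*}v(j)-\vectr p(M^*)<\sqrt m\, v(M^*)$, and then an algebraic regrouping turns $v(M^*)-\vectr p(M^*)$ into $\frac{1}{\sqrt m}\sum_{j\in M^*}\bigl(v(j)-(1+\sqrt m)\vectr p(j)\bigr)\ge \frac{1}{\sqrt m}\,\mathrm{OPT}$ via $T^*\subseteq M^*$ and subadditivity. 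You instead introduce a cardinality split on $|T^*|$: subadditivity shows $|T^*|\le\sqrt m$ forces $u_1\ge \mathrm{OPT}/\sqrt m$, and for $|T^*|>\sqrt m$ you lower-bound $u_2$ by $v(T^*)-\vectr p(T^*)-\vectr p(M^*\setminus T^*)$, charging only the prices of $M^*\setminus T^*$ to $u_1$ and closing with $|T^*|/m>1/\sqrt m$. Your two cases are in fact linked: the small-$|T^*|$ argument is exactly the contrapositive showing that $u_1<\mathrm{OPT}/\sqrt m$ implies $|T^*|>\sqrt m$, so your proof can be read as the paper's case split plus a different, more ``local'' estimate in the bundle case (never needing the regrouping identity, but needing the cardinality bound). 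Two small points to tighten: handle $T^*=\emptyset$ explicitly (as the paper does), and note that once $T^*\subseteq M^*$ is nonempty, $u_1>0$, which you implicitly use when passing from $|M^*\setminus T^*|$ to $m-|T^*|$ and from $|T^*|u_1$ to $\sqrt m\,u_1$.
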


\begin{proof}
  Let $S$ be the set that maximizes utility ($v(S) - (1+\sqrt{m})\cdot \vectr p(S)$). If $S = \emptyset$, then the benchmark is $0$, and SingleOrBundle achieves non-negative utility. It remains to consider the case $v(S) - \vectr p(S) > 0$.

In this case, let $T$ be the set returned by SingleOrBundle($v, \vectr p, M$). Call an item $j$ \emph{special} if:
\begin{align*}
    v(j) - \vectr p(j) \geq \frac{1}{\sqrt{m}} (v(S) - (1+\sqrt{m})\cdot \vectr p(S)).
\end{align*}

Observe that if any item $j$ is special, then we conclude:
\begin{align*}
    v(T) - \vectr p(T) \geq v(j) - \vectr p(j) \geq \frac{1}{\sqrt{m}} (v(S) - (1+\sqrt{m}) \cdot \vectr p(S)),
\end{align*}
which is a $(\frac{1}{\sqrt{m}},\frac{1}{1+\sqrt{m}})$-approximate demand oracle. If no item is special, then $\forall j \in M^*$, $v(j) - \vectr p(j) < \frac{1}{\sqrt{m}} (v(S) - (1+\sqrt{m})\vectr p(S))$. 
Summing this for all $j\in M^*$ yields:
\begin{align}
    \left(\sum_{j \in M^* } v(j)\right) -  \vectr p(M^*)
    < \frac{|M^*|}{\sqrt{m}} (v(S) - (1+\sqrt{m} )\cdot \vectr p(S)) 
    \le \sqrt{m} \cdot v(S) \leq \sqrt{m} \cdot v(M^*).\label{eq:above}
\end{align}
The final inequality follows as $S$ cannot contain items for which $v(j) < \vectr (1 + \sqrt m) p(j)$,
as $v(\cdot)$ is subadditive. We can then conclude that:
\begin{align*}
    v(M^*) - \vectr p(M^*)&>  \frac{1}{\sqrt{m}}\left( \left(\sum_{j \in M^* } v(j)\right) -  \vectr p(M^*)\right) - \vectr p(M^*)  \\
    &= \frac{1}{\sqrt{m}} \left(\sum_{j \in M^*} v(j) - (1+\sqrt{m})\vectr p(j)\right)\\
     &    \geq  \frac{1}{\sqrt{m}}  \left(\sum_{j \in S } v(j) - (1+\sqrt{m})\vectr p(j)\right)\\
    &\geq \frac{1}{\sqrt{m}} \left(v(S) - (1+\sqrt{m})\vectr p(S) \right).
\end{align*}
The first inequality follows directly from~\eqref{eq:above}. The second follows as $v(j) > (1+\sqrt{m})\vectr p(j)$ for all $j \in M^*$, and $S \subseteq M^*$ (because $v(\cdot)$ is subadditive, and $S$ is the utility-maximizing set at prices $(1+\sqrt{m})\vectr p$). The third follows from subadditivity of $v(\cdot)$. We conclude that when there are no special items, the proposition is satisfied as well, completing the proof.
\end{proof}

\bibliography{MasterBib}{}
\bibliographystyle{plainurl}
\appendix
\section{Brief Discussion of Definitions}\label{app:examples}
The following example will motivate our decision to think of advice as \emph{improving a given strategy} as opposed to \emph{outright proposing a replacement strategy}.

Consider, for example, a single-bidder mechanism where the bidder faces one of $k$ posted-price vectors $\vectr{p}_1,\ldots, \vectr{p}_k$ chosen uniformly at random, and is asked to submit their desired sets $S_1,\ldots, S_k$ before knowing which price is ``real.'' Then the strategy which submits $S_i:=\arg\max\{v(S) - \vectr{p}_i(S)\}$ is dominant. In this case, advice could indeed simply propose this strategy to replace whatever else the bidder might try.

Things get more interesting, however, if the designer cannot recommend a dominant strategy. Consider instead a recommended strategy $T_1,\ldots, T_k$ where $T_i \notin \arg\max\{v(S) - \vectr{p}_i(S)\} \cup \arg\min\{v(S) - \vectr{p}_i(S) \}$ for any $i$ (call this strategy $\vec{T}$). If the designer shares the sets $T_1,\ldots, T_k$ with the buyer, a reasonable buyer should certainly submit sets $S_i$ satisfying $v(S_i) - \vectr{p}_i(S_i) \geq v(T_i) - \vectr{p}_i(T_i)$ for all $i$ (because they could just swap any set violating this for $T_i$ and strictly improve their utility). Consider then the strategy which sets $S_j\in  \arg\max\{v(S) - \vectr{p}_j(S)\}$ (for a single $j \in [k]$) and $S_i \in \arg\min\{v(S) - \vectr{p}_i(S)\}$ for all $i \neq j$ (picks the optimal set for $\vectr{p}_j$, and worst possible sets for all other $\vectr{p}_i$, call this strategy $\vec{S}$). We don't want to say that a bidder originally planning to use $\vec{S}$ should instead use $\vec{T}$ (indeed, $\vec{T}$ does not dominate $\vec{S}$, and it's not a priori clear which strategy yields higher expected utility). But we do want to say that a bidder originally planning to use $\vec{S}$ should stick with $S_j$, and update $S_i$ to $T_i$ for all $i \neq j$. But in order to recommend such a strategy without knowing $j$ in advance, $\vec{T}$ would need to be the dominant strategy itself. So in order for the solution concept to meaningfully apply to posted-price mechanisms without advising the dominant strategy itself, advice should really take the form of improving a tentative strategy rather than outright recommending a replacement. {Lemma~\ref{lem:GenMechAdvice} below provides a representative example of how this solution concept can be harnessed for existing state-of-the-art mechanisms.}

\section{Proof of Theorem \ref{thrmResilientXos}} \label{app:proofs}

The full definition of ``implementation in advised strategies''
is very powerful, but a bit awkward to carry around.
Throughout this appendix, we use the following definition of
$(c,d)$-competitive sets, which simply says that a set of items
will give the bidder utility at least
as high as a $(c,d)$-approximate demand oracle.
\begin{definition}
  A set $S$ is  a \emph{$(c,d)$-competitive subset of $M$ for 
  $v_i$ with prices $\vectr p$}
  if 
  \[ v_i(S) - \vectr p (S) \ge 
    c\cdot \max_{T\subseteq M}\left\{ v_i(T) - \vectr p(T) / d \right\}.
  \]
  We say a bidder $i$ \emph{picks $(c,d)$ competitive sets} in a
  fixed price auction if, when the fixed price auction visits $i$,
  they pick a set which is a $(c,d)$-competitive subset
  of the collection of remaining items.
\end{definition}


The full proof of theorem~\ref{thrmResilientXos} is fairly involved. We start off this section by providing the more technical version of lemma~\ref{fpaMotivation} in section~\ref{appendixB_lemma}, which captures most of the properties of fixed price auctions with approximate demand queries which we need. Next in section~\ref{appendixB_mech}, we describe the ``core algorithm'' PriceLearningMechanism of~\cite{AssadiS19}. Then in section~\ref{appendixB_analysis}, using fairly elementary properties of fixed-price auctions, we prove the correctness of the PriceLearningMechanism,
as long as
1) bidders pick $(c,d)$-competitive sets in every fixed
price auction they participate in,
and 2) we make the simplifying assumption~\ref{assumption}. 
In section~\ref{appendixB_general}, we remove the simplifying assumption,
and prove that there exists poly time computable advice
such that, when bidders are following the advice,
they always pick $(c,d)$-competitive sets.

\subsection{Generalization of Lemma~\ref{fpaMotivation}} \label{appendixB_lemma}
In this subsection we state and prove the generalization of Lemma~\ref{fpaMotivation}, which will be used in the analysis of the PriceLearningMechanism.

The first term in the maximum below (and the ``moreover'' part of the lemma) relates the achieved welfare with the value of the unsold items,
and will be used to handle ``learning'' phases in the mechanism.
The second term of the maximum shows that once 
we have learned the prices well,
we definitely get good welfare. 


\begin{lemma}\label{fpaLemma}
Suppose $\{T_i\}_{i \in N} \leftarrow 
\mathrm{FixedPriceAuction}(M, N, d\vectr p)$,
where each bidder $i$ picks a subset of the remaining items which is
$(c,d)$-competitive set for $v_i$ with prices $\vectr p$.
Let $\{O_i\}_{i \in N}$ be any allocation with
supporting prices $\vectr q$. Let $S_i$ be the set of items $j$ where  $\delta \vectr q(j) \leq \vectr p(j) \leq \frac{1}{2} \vectr q(j)$ and $j \in O_i$.
Denote $S = \bigcup_{i \in N} S_i$ and $\sold = \bigcup_{i} T_i$.
Then 
\begin{align*}
  \sum_{i} v_i(T_i) \geq \max 
  \begin{cases}
  \frac{c}{2} \cdot \vectr q(S \setminus \sold) , \\
  \min\left( \frac{c}{2}, \quad  \delta d\right) \cdot \vectr q(S) .
  \end{cases}
\end{align*}

Moreover, suppose $k$ is the last bidder in $N$.
We also have $v_i(T_k)\ge \frac c 2 \vectr q(S_k \setminus \sold_{< k})$,
where $\sold_{<k} = \bigcup_{i<k}T_i$.
\end{lemma}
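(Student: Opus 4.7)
The plan is to mirror the proof of Lemma~\ref{fpaMotivation}, extended to track the two new ingredients: the restriction to items in $S$ (whose prices are two-sided-calibrated to $\vectr{q}$), and the lower bound $\vectr{p}(j) \geq \delta\vectr{q}(j)$, which yields a second lower bound on revenue. The main insight is unchanged: for each bidder $i$, the set $A_i := S_i \setminus \sold$ contains only items never sold and was therefore available when bidder $i$ acted, so $(c,d)$-competitiveness against the test set $A_i$ yields a clean per-bidder utility bound.

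First I unpack the $(c,d)$-competitiveness of $T_i$. Since the auction charges prices $d\vectr{p}$, the competitiveness condition against $A_i$ yields $v_i(T_i) - d\vectr{p}(T_i) \geq c\bigl(v_i(A_i) - \vectr{p}(A_i)\bigr)$; the factor $d$ appears on the left because the auction prices are $d\vectr{p}$, and the scaling-by-$1/d$ inside the benchmark on the right brings us back to $\vectr{p}$. Substituting $v_i(A_i) \geq \vectr{q}(A_i)$ (from supporting prices, since $A_i \subseteq S_i \subseteq O_i$) and $\vectr{p}(A_i) \leq \vectr{q}(A_i)/2$ (since $A_i \subseteq S$) reduces this to $v_i(T_i) - d\vectr{p}(T_i) \geq \tfrac{c}{2}\vectr{q}(A_i)$.

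Summing over all bidders and using $\bigcup_i A_i = S \setminus \sold$ yields
\[ \sum_i v_i(T_i) \;\geq\; d\vectr{p}(\sold) + \tfrac{c}{2}\vectr{q}(S\setminus\sold). \]
Dropping the nonnegative revenue term proves the first bound. For the second, I will use $\vectr{p}(j) \geq \delta\vectr{q}(j)$ for $j \in S$ to get $d\vectr{p}(\sold) \geq d\vectr{p}(\sold \cap S) \geq \delta d\,\vectr{q}(\sold \cap S)$, and combine via $\delta d\,\vectr{q}(\sold\cap S) + \tfrac{c}{2}\vectr{q}(S\setminus\sold) \geq \min\{\delta d,\,c/2\}\,\vectr{q}(S)$.

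Finally, the ``moreover'' claim follows by the same reasoning applied to the last bidder $k$ with $A_k := S_k \setminus \sold_{<k}$, which is available to $k$ at the moment they are visited. The derivation gives $v_k(T_k) \geq d\vectr{p}(T_k) + \tfrac{c}{2}\vectr{q}(A_k) \geq \tfrac{c}{2}\vectr{q}(S_k \setminus \sold_{<k})$. I do not anticipate a genuine obstacle here; the main care is in bookkeeping the factor $d$ from auction prices versus the factor $1/d$ in the competitive benchmark (which fortuitously cancel), and in tracking which items lie in $S$ versus $\sold$ so that the supporting-price and calibration conditions can be applied.
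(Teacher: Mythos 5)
Your proposal is correct and follows essentially the same argument as the paper's proof: the per-bidder competitiveness bound against $A_i = S_i\setminus\sold$, the welfare-equals-utility-plus-revenue decomposition, and the two ways of handling the revenue term. If anything, your handling of the second bound is slightly more careful than the paper's write-up, since you correctly restrict to $\vectr q(\sold\cap S)$ (where the calibration $\vectr p(j)\ge\delta\vectr q(j)$ actually holds) rather than writing $\vectr q(\sold)$, and the identity $\vectr q(\sold\cap S)+\vectr q(S\setminus\sold)=\vectr q(S)$ still closes the argument.
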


\begin{proof}
Let $A_i = S_i \setminus \sold$.
Because items in $A_i$ are never allocated when bidder $i$ is chosen to act (and because each bidder picks a $(c, d)$-competitive set with prices $\vectr p$), the utility of each bidder $i$ satisfies
\begin{align*}
    v_i(T_i) - d \cdot \vectr p(T_i) 
    \geq c \cdot \left(v_i(A_i) - \vectr p(A_i)\right).
\end{align*}

As $A_i\subseteq S_i$, we know $\delta\vectr q(A_i) \le \vectr p(A_i)\le \frac 1 2 \vectr q(A_i)$, and by the definition of supporting prices, we know that $\vectr q(A_i)\le v_i(A_i)$ 
Thus, $\{T_i\}_{i\in N}$ achieves welfare
\begin{align*}
  \sum_{i\in N} v_i(T_i) 
  &= \sum_{i\in N} \big(  v_i(T_i) - d \cdot \vectr p(T_i) \big)
    + d\sum_{i\in N}  \vectr p(T_i) \\
  &\geq c \sum_{i\in N} \big(v_i(A_i) - \vectr p(A_i)\big) 
    +d \cdot \vectr p(\sold)\\
  &\geq \sum_{i} c (\vectr q(A_i) - \frac{1}{2}\vectr q(A_i)) + d \delta \cdot \vectr q(\sold). \tag{*} \\
\end{align*}

Observe that $S\setminus\sold = \bigcup_{i\in N} A_i$. Thus,
ignoring the term $d\delta \vectr q(\sold)$ from (*), we can conclude the auction gets welfare at least $\frac{c}{2} \vectr q(S\setminus\sold)$. Moreover, (*) tells us we get welfare at least
\begin{align*}
\min\left(\frac{c}{2}, \delta d\right) \cdot
\big( \vectr q(S \setminus \sold) + \vectr q(\sold) \big)
\ge \min\left(\frac{c}{2}, \delta d\right) \cdot
\vectr q(S),
\end{align*}
from which we can conclude main statement of the lemma.

For the ``moreover'' component, simply observe that when bidder $k$
was picked by the mechanism, the items in $A' := S_k\setminus \sold_{<k}$
were still available, and that $S_k\subseteq O_k$,
so 
\[ v_i(T_k) \ge  v_i(T_k) - d\vectr p(T_k) \ge c(v_i(A') - \vectr p(A'))
\ge  c(\vectr q(A') - \vectr q(A')/2) = \frac c 2\vectr q(A').
\]
\end{proof}


\subsection{The Mechanism} \label{appendixB_mech}

For the reader's convenience, we first briefly describe the mechanism in \cite{AssadiS19} and quote the mechanism verbatim (the only change we need to make is that every price used by the mechanism is ``discounted'' by an extra factor of $d$, plus some slight simplifications in the ``removing extra assumptions'' step). Then we present a slightly condensed version of the analysis. 

\ourparagraph{High-level overview}
Posted price mechanisms for combinatorial auctions typically use the following high-level strategy: attempt to (approximately) \emph{learn} the supporting prices  (definition~\ref{def:supportingPrices}) of an optimal allocation, then sell the items at those prices. The key innovation of~\cite{AssadiS19} is to ``explore'' prices for each item individually using a \emph{price tree} in which each successive layer of the tree corresponds to a finer ``granularity'' of prices. Initially, each item is set at a price corresponding to the root of the tree, and in each successive round of the mechanism, the price of each item moves one layer down in the tree to a ``more precise'' price which corresponds to some child node of the old price.

The mechanism of~\cite{AssadiS19} runs several fixed-price auctions for each round (i.e. each layer of the price tree). In each of these successive auctions, each item is priced higher and higher in a way corresponding to the children of the ``old'' price node of the item. The price in the next round of the mechanism is then the highest price in the next layer where the item was still sold. The idea here is that, in the next layer of prices, we need to make the prices as high as possible such that the items will still sell. Intuitively, this serves to refine our estimate for the supporting prices as we move a layer down in the tree. 

In fact, the story is more subtle than this. The mechanism may not actually achieve a better approximation to the prices in each layer, but~\cite{AssadiS19} prove that if you \emph{do not} get a better estimate for the prices, then you can \emph{already} get a good approximation to the optimal welfare at the current prices. These two cases exactly correspond to the ``learnable'' or ``allocatable'' cases in lemma~\ref{learnOrAlloc} below. For this reason, for every layer of the price tree, the mechanism has some chance (proportional to the number of layers) of stopping early and allocating the items according to some fixed price auction in that layer. Thus, regardless of whether we always learn prices or if we hit the ``allocatable'' in some step, we will use a good auction with some probability.

\ourparagraph{Simplifying Assumption}
It's useful for posted price mechanisms to know ahead of time
the \emph{range} of possible supporting prices of an optimal
allocation. This assumption can be removed in a fairly 
``modular'' way, as done in~\cite{AssadiS19} (though we
make some modifications in order to more easily fit
our solution concept).

Let $\vectr q$ be the supporting prices of an optimal
allocation.
Formally, our simplifying assumption is the following:
\begin{assumption}\label{assumption}
There are known numbers $\psi_{min}, \psi_{max}$ such that
the supporting prices of any item in $\vectr q$ are 
either $0$ or in $[\psi_{min}, \psi_{max}]$, 
and $\psi_{max}/\psi_{min}$ is polynomial in $m$.
\end{assumption}
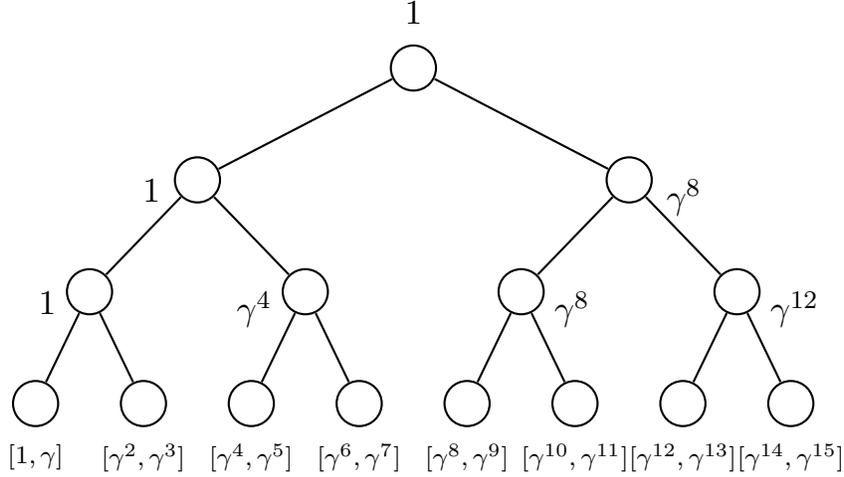
\begin{figure}
\centering
\begin{tikzpicture}[thick,scale=1.4, every node/.style={scale=1.2}]
	\node[circle, minimum width=0.5cm, inner sep = 0pt, black, draw](l1){};
	\node[below=0.05cm of l1]{\scriptsize${[1,\gamma]}$};
	
	\node[circle, minimum width=0.5cm, inner sep = 0pt, black, draw](l2)[right=0.8cm of l1]{};
	\node[below=0.05cm of l2]{\scriptsize$[\gamma^2,\gamma^3]$};
	
	\node[circle, minimum width=0.5cm, inner sep = 0pt, black, draw](l3)[right=0.8cm of l2]{};
	\node[below=0.05cm of l3]{\scriptsize${[\gamma^4,\gamma^5]}$};

	\node[circle, minimum width=0.5cm, inner sep = 0pt, black, draw](l4)[right=0.8cm of l3]{};
	\node[below=0.05cm of l4]{\scriptsize$[\gamma^6,\gamma^7]$};

	\node[circle, minimum width=0.5cm, inner sep = 0pt, black, draw](l5)[right=0.8cm of l4]{};
	\node[below=0.05cm of l5]{\scriptsize${[\gamma^8,\gamma^9]}$};

	\node[circle, minimum width=0.5cm, inner sep = 0pt, black, draw](l6)[right=0.8cm of l5]{};
	\node[below=0.05cm of l6]{\scriptsize$[\gamma^{10},\gamma^{11}]$};

	\node[circle, minimum width=0.5cm, inner sep = 0pt, black, draw](l7)[right=0.8cm of l6]{};
	\node[below=0.05cm of l7]{\scriptsize${[\gamma^{12},\gamma^{13}]}$};

	\node[circle, minimum width=0.5cm, inner sep = 0pt, black, draw](l8)[right=0.8cm of l7]{};
	\node[below=0.05cm of l8]{\scriptsize$[\gamma^{14},\gamma^{15}]$};

	\draw[opacity=0] (l1) --(l2) node(l12)[midway]{};
	\draw[opacity=0] (l3) --(l4) node(l34)[midway]{};
	\draw[opacity=0] (l5) --(l6) node(l56)[midway]{};
	\draw[opacity=0] (l7) --(l8) node(l78)[midway]{};

	\node[circle, minimum width=0.5cm, inner sep = 0pt, black, draw](a1)[above =1cm of l12]{};
	\node[below left=-0.4cm and 0.05cm of a1]{$1$};
	
	\node[circle, minimum width=0.5cm, inner sep = 0pt, black, draw](a2)[above = 1cm of l34]{};
	\node[below left=-0.4cm and 0.05cm of a2]{$\gamma^4$};
	\node[circle, minimum width=0.5cm, inner sep = 0pt, black, draw](a3)[above = 1cm of l56]{};
	\node[below right=-0.4cm and 0.05cm of a3]{$\gamma^8$};
	\node[circle, minimum width=0.5cm, inner sep = 0pt, black, draw](a4)[above = 1cm of l78]{};
	\node[below right=-0.4cm and 0.05cm of a4]{$\gamma^{12}$};

	\draw[opacity=0] (a1) --(a2) node(a12)[midway]{};
	\draw[opacity=0] (a3) --(a4) node(a34)[midway]{};
	
	\node[circle, minimum width=0.5cm, inner sep = 0pt, black, draw](b1)[above=1cm of a12]{};
	\node[below left=-0.4cm and 0.1cm of b1]{$1$};

	\node[circle, minimum width=0.5cm, inner sep = 0pt, black, draw](b2)[above=1cm of a34]{};
	\node[below right=-0.4cm and 0.1cm of b2]{$\gamma^8$};

	\draw[opacity=0] (b1) --(b2) node(b12)[midway]{};

	\node[circle, minimum width=0.5cm, inner sep = 0pt, black, draw](c1)[above=1cm of b12]{};
	\node[above = 0.1cm of c1]{$1$};
	
	\draw
	(l1) -- (a1)
	(l2) -- (a1)
	(l3) -- (a2)
	(l4) -- (a2)
	(l5) -- (a3)
	(l6) -- (a3)
	(l7) -- (a4)
	(l8) -- (a4)
	(a1) -- (b1)
	(a2) -- (b1)
	(a3) -- (b2)
	(a4) -- (b2)
	(b1) -- (c1)
	(b2) -- (c1);

\end{tikzpicture}
\caption{\cite[Figure~2]{AssadiS19} An illustration of a price tree $T^e$ with $\alpha = 2$, $\beta = 3$, and $\psi_{min}=1$. }
\end{figure}

\ourparagraph{The price tree and mechanism parameters}
We now formally describe the price tree in terms of three parameters:
\begin{itemize}
  \item $\alpha = \Theta(1)$ is the branching factor of the tree
    (and the number of auctions in each iteration of the mechanism).
  \item $\beta = \Theta(\log\log (\psi_{max}/\psi_{min})) = \Theta(\log\log m)$ is the number of layers in the price tree (and the number of iterations of the mechanism).
  \item $\gamma = \Theta(\alpha\beta) = \Theta(\log\log m)$ is the ``accuracy factor'' of the prices.
\end{itemize}


We would like the leaves of the price tree correspond to ``price buckets''
\[ P = \{ [\psi_{min}\gamma^i, \psi_{min}\gamma^{i+1}]\ |\ i=0,1,\ldots, k \} \]
for some $k$ large enough that all prices in $[\psi_{min},\psi_{max}]$ are considered.
Informally, we take ``learning a price $\q$ correctly''
to mean that we
find the bucket in $P$ to which $\q$ belongs.
We will assign ``actual'' prices in $\vectr p$ according to
the smallest price $\psi_{min}\gamma^i$ in the corresponding bucket. 
Our goal is that, if we ``learn the price of $\q$ correctly'',
then the price used in $\vectr p$ is within a $\gamma$ factor
of the true price in $\vectr q$.

However, for technical reasons, we need a \emph{gap} of at least $\gamma$ between the prices of consecutive nodes in each layer of the tree (not just the leaves), so that prices will be guided to the closest leaf node \emph{below} the price in $\vectr q$ (this is desirable because lemma~\ref{fpaLemma} requires prices in $\vectr p$ to be less than in $\vectr q$). To ensure this, the mechanism creates a price gap of factor $\gamma$ between nodes by splitting $P$ into 
\begin{align*}
P^o & = \{ [\psi_{min}\gamma^{2k-1}, \psi_{min}\gamma^{2k}]\ 
  | k=1,\ldots, \alpha^\beta \} \\
P^e & = \{ [\psi_{min}\gamma^{2k}, \psi_{min}\gamma^{2k+1}]\ 
  | k=0,\ldots, \alpha^\beta-1 \}
\end{align*}
Trees $T^o$ and $T^e$ are constructed with leaf nodes $P^o$ and $P^e$ respectively. We use $T^*$ to denote either of $T^o$ or $T^e$, and $P^*$ will denote the corresponding $P^o$ or $P^e$.

The price tree $T^*$ is an $\alpha$-branching tree with depth $\beta$ (i.e. with $\beta+1$ layers). The leaf nodes correspond, in left-to-right (depth-first search) order, to the price buckets in $P^*$ (in increasing order). Furthermore, any non-leaf node $x$ corresponding to a single price, which is the minimum value in any bucket of any leaf node which is a descendant of $x$. Thus, the prices corresponding to consecutive level-$i$ nodes differ by a factor of $\gamma^{2\alpha^{\beta + 1- i}}$.

Let $\vectr p$ be a ``level-$i$ price vector'', i.e. a vector in which the price of each item is a price which corresponds to some level-$i$ node. We let $\nextt\toI_j(\vectr p) = \vectr p'$ denote the price vector constructed as follows: for each item $\ell\in M$, let $x$ be the level-$i$ node whose corresponding price is $\vectr p(\ell)$. Then set $\vectr p'(\ell)$ to the price corresponding to the $j$th child node of $x$.
Thus, a precise formula is given by $\vectr p'(\ell) = \gamma^{2\alpha^{\beta - i }(j-1)}\vectr p(\ell)$.
In words, $\nextt\toI_j(\vectr p)$ sets the price of each item $\ell$ to be the $j$th largest ``refined price'' below the current price of item $\ell$. 

\ourparagraph{The mechanism} We start by randomly picking a price tree $T^o$ or $T^e$. The mechanism then proceeds in $\beta$ iterations (though it may terminate early) in which a price vector $\vectr p^{(i)}$ is constructed in each iteration $i$. Initially, $\vectr p^{(1)}$ is the (unique) level-$1$ price vector of $T^*$.
In each iteration, a $\frac{1}{10 \beta}$ fraction of the bidders are selected uniformly at random, and the $\alpha$ different price vectors $\vectr p\toI_j = \nextt\toI_j(\vectr p\toI)$ for $j=1,\ldots,\alpha$ are considered.
The mechanism runs fixed price auction with the current set of bidders on prices $d \vectr p^{(i)}_j/2$ for $j=1,\ldots,\alpha$.
The new (level-$(i+1)$) vector $\vectr p^{(i+1)}$ is then constructed as follows: for each item $\ell$, $\vectr p^{(i+1)}(\ell) = \vectr p\toI_j(\ell)$, where $j$ is the highest index such that item $\ell$ sold in the auction with prices $d\vectr p\toI_j/2$.
(or $\vectr p\toI(\ell)$ if no such $j$ exists).
In words, the new price of $\ell$ is the price of $\ell$ in the highest auction in iteration $i$ for which item $\ell$ was sold.
For each fixed price auction described in this paragraph, there is a $1/\Omega(\alpha\beta)$ chance that the mechanism will terminate early and return the allocation determined by the auction. This serves to strictly incentivizes bidder to pick good sets, but also serves an important purpose for achieving the desired approximation grantee, as discussed in the overview. 

If the mechanism does not terminate early in iteration $1,\ldots,\beta$, then the final step of the mechanism is to run a fixed price auction with all of the remaining bidders on prices $d\vectr p^{(\beta)}/2$. (The hope is that, for a large fraction (weighted by $\vectr q$) of the items, the price of the items is in the level-$\beta$ bin which is closest to the price in $\vectr q$, and thus we can apply lemma~\ref{fpaLemma}.)

The exact mechanism in \cite{AssadiS19} is quoted in Algorithm \ref{priceLearningMech}.

\begin{algorithm} 
  \caption{PriceLearningMechanism$(N,M)$}
    \label{alg:priceLearning}
  \label{priceLearningMech}
\begin{algorithmic}[1]
\Procedure{Partition}{$N$}
 \State Permute $N$ uniformly at random.
 \For {$i = 1, 2, .. \beta$} 
  \State Remove $\frac{|N|}{10 \beta}$ bidders uniformly at random from $N$; assign them to the set $N_i$. 
 \EndFor
 \State Put the remaining items in $N$ into $N_{\beta + 1}$. 
\EndProcedure

\Procedure{PriceUpdate}{$A_1$,\ldots,$A_\alpha$,
    $\vectr p_1$,\ldots,$\vectr p_\alpha$}
\State For each $\ell\in M$, let $\vectr p'(\ell) = \vectr p_j(\ell)$ for the 
\State \qquad highest value of $j$ such that $\ell$ is allocated in $A_j$ (or $\vectr p_1(\ell)$ if no $j$ exists)
\State Return $\vectr p'$
\EndProcedure

\ %

\State Let $(N_1, N_2, ... N_{\beta + 1}) \leftarrow \mathrm{Partition}(N)$
\State Pick one of the modified trees $T^{o}$ or $T^{e}$ uniformly at random and denote it by $T^*$
\State Let $\vectr p^{(1)}$ be the (unqiue) level-1 (root) price of $T^*$ 

\For{$i = 1,\ldots,\beta$}
    \State
    For $j=1,\ldots,\alpha$, let $\vectr p\toI_j = \nextt\toI_j(\vectr p^{(i)})$
    \State
    For $j = 1,\ldots,\alpha$: run FixedPriceAuction($N_i$, M, $d\vectr p_{j}^{(i)}/2$) and let $A_{j}^{(i)}$ be the allocation
    \State
    With probability $(1/\beta)$, pick $j^* \in [\alpha]$ u.a.r. and return
    $A_{j^*}^{(i)}$ as the final allocation
    \State Otherwise, let $\vectr p^{(i+1)}
    \leftarrow$PriceUpdate($A_1^{(i)}, ..., A_{\alpha}^{(i)}, \vectr p_{1}^{(i)}, ...,
    \vectr p_{\alpha}^{(i)}$), and continue
\EndFor
\State Run FixedPriceAuction($N_{\beta+1}, M, d\vectr p^{(\beta + 1)}/2$) and return the allocation $A^*$

\end{algorithmic}
\end{algorithm}

\subsection{The Modified Analysis} \label{appendixB_analysis}

\paragraph{Notation.} We follow \cite{AssadiS19} and depart somewhat from
conventional notation for the analysis of the mechanism. We let $i$ denote an iteration of the mechanism, $j$ denote an auction inside some iterations, $b$ denote a bidder, and $\ell$ denote an item.

Let $O$ be an optimal allocation with supporting prices
$\q$ and $OPT$ be the optimal welfare resulting from allocation $O$. Let $\q^*$ be $\q$ restricted to items whose prices are
in some bucket of $P^*$. 
Let $O^*$ be the collection of those items. 
Let $N_1, \ldots, N_{\beta+1}$ denote the groups of bidders from the Partition
function. Given $(N_i)_{i}$ and $T^*$ as picked by the mechanism, define price vectors $\q^{(i)}$ as
$\q^*$, restricted to items which are allocated in $O^*$ to bidders from $N_i,
N_{i+1}, \ldots, N_{\beta+1}$ (intuitively, we restricted attention to
items which could still go to the same bidder in $A$ as in $O$, and give price $0$ to items that can no longer be allocated to the right bidder in $O$).
Call item $\ell$ correctly priced at iteration $i$ if
$\q^{(i)}(\ell)$ is in the bin corresponding to some leaf node which is a child of the node corresponding to $\vectr p^{(i)}(\ell)$.
Let $C^{(i)}$ denote all items priced correctly before iteration $i$ begins. Note that $C^{(1)}=O^*$ and that an item can only be in $C^{(i)}$ if it is also in $C^{(j)}$ for $j=1,\ldots, i-1$, so $C^{(1)}\supseteq C^{(2)} \supseteq\ldots\supseteq C^{(\beta+1)}$. We separate $C^{(i)}$ into $C^{(i)}_1, C^{(i)}_2, ... C^{(i)}_\alpha$, where $C^{(i)}_j$ is the subset of items in  $C^{(i)}$ that are priced correctly in $\nextt\toI_j(\vectr p\toI)$.
For any set of bidders $N'$ and items $D$, let $O_{N'}^{D}$ be the restriction of $O^*$ to items in $D$ and bidders in $N'$.
\paragraph{Assumptions.}
Throughout the claims in this section, we assume all bidders pick $(c, d)$-competitive sets in every fixed price auction they participate in, though we may not restate this assumption in every claim statement\footnote{
  It is somewhat easier to prove that PriceLearningMechanism is implementable in advised strategies compared to GeneralizedMechanism below. However, we hold off and only demonstrate that GeneralizedMechanism is implementable in advised strategies, both for completeness, and in order to demonstrate that our solution concept ``composes well'' to be useful for complicated mechanisms.}.
We also assume that the optimal allocation $O$ has supporting prices $\q$.

The following lemma is the heart of the proof of the approximation
ratio of mechanism~\ref{alg:priceLearning}. \textbf{\textcolor{\approxDiffColor}{For the reader's convenience, we highlight the differences between our proof and the proof in \cite{AssadiS19} in blue.} }

\begin{lemma}[Learnable-Or-Allocable Lemma from~\cite{AssadiS19}]  \label{learnOrAlloc} 
Assume~\ref{assumption}, and
suppose all bidders pick $(c,d)$-competitive
sets in every fixed price auction they participate in.
For any iteration $i \in [\beta]$, 
conditioned on any outcome of first $i-1$
iterations and choice of $T^*$,
\begin{enumerate}
    \item either $\E{\q^{(i+1)}(C^{(i+1)})} \geq \q^{(i)}(C^{(i)}) -
      \frac{OPT}{3\beta}$, where the expectation is over $N_i$;
    \item or $\E{\val(A_{j^*})^{(i)}} \ge \textcolor{\approxDiffColor}{\frac{c}{O(\alpha \beta^2)}}OPT$.
\end{enumerate}
\end{lemma}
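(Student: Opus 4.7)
The plan is to follow the ``learnable-or-allocable'' dichotomy of~\cite{AssadiS19}, carefully tracking the $(c,d)$-loss factors that arise from bidders picking $(c,d)$-competitive sets in place of exact demand responses. Fix iteration $i$ and condition on the state of the mechanism before iteration $i$. I would study the expected drop $\q^{(i)}(C^{(i)}) - \E{\q^{(i+1)}(C^{(i+1)})}$, taken over $N_i$ and the random outcomes in iteration $i$, and decompose it into a \emph{bidder loss} (items whose intended $O^*$-bidder is sampled into $N_i$ and hence drops out of $\q^{(i+1)}$) plus a \emph{price loss} $\E{\q^{(i)}(B^{(i)})}$, where $B^{(i)}$ is the set of items still allocatable to a bidder in $N_{i+1}\cup\cdots\cup N_{\beta+1}$ whose PriceUpdated price no longer has $\q^{(i)}(\ell)$ in a descendant leaf-bucket. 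Since $N_i$ is a uniformly random $1/(10\beta)$ fraction of $N$, the bidder loss is at most $\opt/(10\beta)$ in expectation. If the total loss is at most $\opt/(3\beta)$ we are in case~(1); otherwise $\E{\q^{(i)}(B^{(i)})} = \Omega(\opt/\beta)$ and we must establish case~(2).

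To establish~(2), I partition $B^{(i)} = \bigsqcup_j B^{(i)}_j$ by the child index $j$ of the level-$i$ price node that contains $\q^{(i)}(\ell)$, so that $B^{(i)}_j \subseteq C^{(i)}_j$. An item $\ell \in C^{(i)}_j$ lies in $B^{(i)}_j$ precisely when the largest $j'$ for which $\ell$ is sold in $A\toI_{j'}$ differs from $j$: either $\ell$ is \emph{underpriced} (not sold in $A\toI_j$) or \emph{overpriced} (sold in $A\toI_{j'}$ for some $j' > j$). For the underpriced portion I apply Lemma~\ref{fpaLemma} to $A\toI_j$, taking the bidders to be $N_i$, the reference allocation to be $O^*$ restricted to $N_i$, and the supporting prices to be $\q^*$; choosing $\delta > 0$ small enough ensures that every $\ell \in C^{(i)}_j$ whose intended $O^*$-bidder is in $N_i$ lies in the lemma's set $S$ (this uses only $d\,\vectr p\toI_j(\ell) \le \q^*(\ell)$, which holds on $C^{(i)}_j$ for $d \le 1$). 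The first term of the max then gives $\val(A\toI_j) \ge (c/2)\,\q^*(U_j)$, where $U_j$ collects the underpriced items of $B^{(i)}_j$ whose $O^*$-bidder is in $N_i$. For the overpriced portion, each sold item contributes revenue at least $\tfrac d 2 \vectr p\toI_{j'}(\ell) \ge \tfrac d 2 \q^*(\ell)$ to $A\toI_{j'}$ (since the level-$(i+1)$ prices of children $j' > j$ of a common level-$i$ parent exceed any leaf-descendant price of child $j$ by a factor at least $\gamma$), and revenue lower-bounds welfare as utilities are nonnegative.

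Summing across $j$ and averaging over the uniform $j^* \in [\alpha]$ yields $\E{\val(A\toI_{j^*})} = \Omega(\min(c,d)/\alpha) \cdot \E{\q^{(i)}(B^{(i)} \cap O^*_{N_i})}$, where $O^*_{N_i}$ denotes the items assigned by $O^*$ to bidders in $N_i$. Using that the intended bidders of items in $B^{(i)}$ lie in $N_i \cup \cdots \cup N_{\beta+1}$ and, by symmetry of the random partition, a $\Theta(1/\beta)$ fraction of their $\q^{(i)}$-mass (in expectation) is intended for $N_i$, we get $\E{\q^{(i)}(B^{(i)} \cap O^*_{N_i})} = \Omega(\E{\q^{(i)}(B^{(i)})}/\beta) = \Omega(\opt/\beta^2)$, which gives $\E{\val(A\toI_{j^*})} = \Omega(\min(c,d)\,\opt/(\alpha\beta^2))$, matching conclusion~(2) (the $c$-vs-$\min(c,d)$ discrepancy is harmless for the $(\tfrac12,\tfrac12)$ oracle used in Theorem~\ref{thm:main} and is fully absorbed into Theorem~\ref{thrmResilientXos}'s $\max\{1/c,1/d\}$ framing). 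The main obstacle I anticipate is the ``decoupling'' step above: $B^{(i)}$ itself depends on $N_i$ through the auction outcomes, so some care is needed when invoking the random-partition symmetry to extract the $\Theta(1/\beta)$ factor; a secondary concern is verifying that the small-$\delta$ instantiation of Lemma~\ref{fpaLemma} covers all of $C^{(i)}_j \cap O^*_{N_i}$ uniformly across iterations, using the price-tree invariant that $\vectr p\toI_j(\ell) \le \q^*(\ell)$ on $C^{(i)}_j$.
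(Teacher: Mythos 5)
Your overall skeleton (decompose the drop in correctly-priced mass into a bidder loss and a price loss, dichotomize on whether the total loss exceeds $OPT/(3\beta)$, handle overpriced items via the $\gamma$ price gap in the tree, and handle underpriced items via Lemma~\ref{fpaLemma}) matches the paper's proof, and your treatment of the bidder loss (Claim~\ref{5.6}) and of the overpriced items (a variant of Claim~\ref{5.5}) is fine. The genuine gap is exactly the step you flag as your ``main obstacle,'' and it is not merely a matter of care: it is where the argument breaks. Your application of Lemma~\ref{fpaLemma} charges $\val(A_j^{(i)})$ only against the underpriced items of $B^{(i)}_j$ whose intended $O^*$-bidder lies in $N_i$, and you then invoke symmetry of the random partition to claim that a $\Theta(1/\beta)$ fraction of $\E{\q^{(i)}(B^{(i)})}$ is intended for $N_i$. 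But $B^{(i)}$ is determined by the auction outcomes, which depend on $N_i$, and the correlation goes the wrong way: an item intended for a bidder in $N_i$ has that bidder present and (by your own Claim-\ref{lemma:variant1}-style bound) tends to sell at its correct price, whereas an item intended for an absent bidder in $N_{>i}$ may simply fail to sell to anyone. So in the bad case $B^{(i)}$ consists almost entirely of items intended for $N_{>i}$, the quantity $\E{\q^{(i)}(B^{(i)}\cap O^*_{N_i})}$ can be far smaller than $\E{\q^{(i)}(B^{(i)})}/\beta$, and your bound on $\E{\val(A_{j^*}^{(i)})}$ gives nothing.

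The paper closes this gap with the ``imaginary last bidder'' argument (Claim~\ref{lemma:variant2}), which is the reason Lemma~\ref{fpaLemma} carries its ``moreover'' clause about the last bidder. One picks $k\in N_{>i}$ uniformly at random and imagines appending $k$ to the end of the auction producing $A_j^{(i)}$: by the moreover clause, $k$ would obtain utility at least $\frac{c}{2}\,\q^{(i)}(O^D_k\setminus A_j^{(i)})$ from the still-available items; by exchangeability of the random partition, each of the $|N_i|$ actual participants has expected welfare at least that of this hypothetical last arrival; hence $\E{\val(A_j^{(i)})}\ge\frac{|N_i|}{|N_{>i}|}\cdot\frac{c}{2}\,\E{\q^{(i)}(O^D_{N_{>i}}\setminus A_j^{(i)})}\ge\frac{c}{20\beta}\E{\q^{(i)}(O^D_{N_{>i}}\setminus A_j^{(i)})}$. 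Combined with the deterministic bound for $N_i$-intended items, this charges $\val(A_j^{(i)})$ (up to a factor $O(\beta/c)$) against \emph{all} unsold items of $C_j^{(i)}$, which is what the dichotomy in the final step requires. Without this ingredient, or an equivalent replacement for it, your case~(2) does not follow.
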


First we prove a series of claims before proving the Learnable-Or-Allocable lemma, following the same
outline as \cite{AssadiS19}.
For claims \ref{lemma:variant1} and \ref{lemma:variant2}, we fix some $j\in [\alpha]$ and let $D = C^{(i)}_{j}$. Note that $\q^{(i)}(C^{(i)}_j) = \q^{(i)}(O_{N_{\geq i}}^{D})$, as $\q^{(i)}$ zeros out items allocated in $O$ to bidders from $N_{<i}$.

\begin{claim}{(5.3 from~\cite{AssadiS19})} \label{lemma:variant1}
Deterministically, $\val(A_j^{(i)}) \geq \textcolor{\approxDiffColor}{\frac{c}{2}} \cdot
\q^{(i)}(O_{N_i}^{D} \setminus A_{j}^{(i)})$.
\end{claim}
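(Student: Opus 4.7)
The plan is to apply Lemma~\ref{fpaLemma} directly to the fixed-price auction $A_j^{(i)}$, which is run on bidder set $N_i$ with prices $d\vectr p^{(i)}_j/2$. Matching against the lemma's template ($d\vectr p$ as the auction prices), set $\vectr p_{\mathrm{Lem}} := \vectr p^{(i)}_j/2$. Take the allocation in the lemma to be $O$ restricted to the bidders in $N_i$, with $\vectr q$ as its supporting prices. Supporting prices exist since the valuations are XOS, and in fact the global supporting prices of $O$ restrict to valid supporting prices for this sub-allocation --- Definition~\ref{def:supportingPrices} only constrains prices on items within the bundle allocated to each bidder, so nothing about items going to $N_{\setminus i}$ matters. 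Finally, set $\delta := 1/(2\gamma)$.

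The key verification is that $O_{N_i}^{D} \subseteq S$, where $S$ is the set defined inside Lemma~\ref{fpaLemma} (items in the allocation satisfying $\delta \vectr q(\ell) \le \vectr p_{\mathrm{Lem}}(\ell) \le \vectr q(\ell)/2$). Let $\ell \in O_{N_i}^{D}$. Since $\ell$ is allocated in $O^*$ to a bidder of $N_i \subseteq N_{\ge i}$, the definition of $\vectr q^{(i)}$ gives $\vectr q^{(i)}(\ell) = \vectr q(\ell)$. Since $\ell \in D = C^{(i)}_j$, correct pricing at iteration $i$ under $\mathrm{next}_j^{(i)}(\vectr p^{(i)})$ means that $\vectr q^{(i)}(\ell)$ lies in the leaf bucket $[\vectr p^{(i)}_j(\ell),\, \gamma \vectr p^{(i)}_j(\ell)]$. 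Consequently
\begin{equation*}
  \frac{\vectr p^{(i)}_j(\ell)}{2} \;\le\; \frac{\vectr q(\ell)}{2}
  \qquad\text{and}\qquad
  \frac{\vectr p^{(i)}_j(\ell)}{2} \;\ge\; \frac{\vectr q(\ell)}{2\gamma} \;=\; \delta\, \vectr q(\ell),
\end{equation*}
so $\ell$ meets both price conditions and thus $\ell \in S$.

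Now invoke the first branch of the max in Lemma~\ref{fpaLemma}, using the hypothesis that bidders in $N_i$ pick $(c,d)$-competitive sets in the fixed price auction, to obtain
\begin{equation*}
  \val(A_j^{(i)}) \;\ge\; \frac{c}{2}\, \vectr q\bigl(S \setminus A_j^{(i)}\bigr)
  \;\ge\; \frac{c}{2}\, \vectr q\bigl(O_{N_i}^{D} \setminus A_j^{(i)}\bigr)
  \;=\; \frac{c}{2}\, \vectr q^{(i)}\bigl(O_{N_i}^{D} \setminus A_j^{(i)}\bigr),
\end{equation*}
where the second inequality uses $O_{N_i}^{D} \subseteq S$ and nonnegativity of $\vectr q$, and the equality uses $\vectr q = \vectr q^{(i)}$ on $O_{N_i}$. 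This is the bound claimed. I expect no real obstacle: the statement is essentially a direct specialization of Lemma~\ref{fpaLemma}; the only subtlety is threading through the bookkeeping that identifies $O_{N_i}^{D}$ with a subset of $S$ via the correct-pricing definition, which only involves the one-line price-bucket inequality above.
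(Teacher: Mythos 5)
Your proposal is correct and follows essentially the same route as the paper: identify the auction prices with the lemma's template, use the definition of ``priced correctly'' to verify that every item of $O_{N_i}^{D}$ satisfies the price sandwich $\delta\vectr q(\ell)\le \vectr p^{(i)}_j(\ell)/2 \le \vectr q(\ell)/2$, and invoke the first branch of the max in Lemma~\ref{fpaLemma}. The only cosmetic difference is that the paper takes $\delta=0$ while you take $\delta=1/(2\gamma)$; since the first branch of the max is independent of $\delta$, both work.
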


\begin{proof}
Recall that $O_{N_i}^{D}$ is the restriction of $O^*$ to items in $C_{j}\toI$ and bidders in $N_i$. The definition of item $\ell$ being ``priced correctly'' means that $\vectr p^{(i)}(\ell) \leq \q^{(i)}(\ell)$. Thus, for any $\ell \in O_{N_i}^{D}$ we get that
$0 \cdot \q^{(i)}(\ell) \leq d\vectr p^{(i)}(\ell)/2 \leq \q^{(i)}(\ell)/2$. Thus, the claim follows from lemma \ref{fpaLemma}. 
\end{proof}
 
\begin{claim}{(5.4 from~\cite{AssadiS19})} \label{lemma:variant2}
By randomness of choice of $N_i$ from $N_{\geq i}$, $\E{\val(A_{j}^{(i)})} \geq
\textcolor{\approxDiffColor}{(\frac{c}{20 \beta})} \cdot \E{\q^{(i)}(O_{N > i}^D \setminus A_{j}^{(i)})}$.
\end{claim}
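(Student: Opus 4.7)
My approach combines Claim~\ref{lemma:variant1} with a symmetry argument exploiting the uniformly random partition of $N_{\geq i}$. By Claim~\ref{lemma:variant1}, for every realization of $N_i$ we have $\val(A_j^{(i)}) \ge \frac{c}{2}\,\q^{(i)}(O_{N_i}^D\setminus A_j^{(i)})$; taking expectations gives $\E{\val(A_j^{(i)})} \ge \frac{c}{2}\,\E{\q^{(i)}(O_{N_i}^D\setminus A_j^{(i)})}$. So it suffices to prove $\E{\q^{(i)}(O_{N_i}^D\setminus A_j^{(i)})} \ge \frac{1}{10\beta}\,\E{\q^{(i)}(O_{N_{>i}}^D\setminus A_j^{(i)})}$, which combined with the previous inequality gives the claim's factor of $c/(20\beta)$.

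\textbf{The symmetry argument.} The basic intuition is that every bidder in $N_{\geq i}$ belongs to $N_i$ with probability $r := |N_i|/|N_{\geq i}| \ge 1/(10\beta)$ and to $N_{>i}$ with probability $1-r \le 1$, so the marginal contribution of each item $\ell\in D$ to $\E{\q^{(i)}(O_{N_i}^D\setminus A_j^{(i)})}$ versus $\E{\q^{(i)}(O_{N_{>i}}^D\setminus A_j^{(i)})}$ is in ratio at least $r:(1-r)\ge 1:(10\beta-1)$. To control the coupling between the random partition and the auction outcome, I would realize the Partition step as a uniformly random permutation $\pi$ of $N_{\geq i}$ with $N_i = \{\pi(1),\ldots,\pi(|N_i|)\}$, so that $\pi(|N_i|)$ (the last bidder in $N_i$) is uniform on $N_{\geq i}$. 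Applying the ``moreover'' clause of Lemma~\ref{fpaLemma} to $\pi(|N_i|)$ yields $v_{\pi(|N_i|)}(T_{\pi(|N_i|)}) \ge \frac{c}{2}\,\q^{(i)}(O_{\pi(|N_i|)}\cap D\setminus\sold_{<|N_i|})$, and this ``last bidder'' contribution is what matches the $1/(10\beta)$ factor in expectation via the symmetry of $\pi$ and the fact that $\sold_{<|N_i|}\subseteq A_j^{(i)}$.

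\textbf{Main obstacle.} The key technical difficulty is precisely this coupling step: a naive item-wise argument fails because conditioning on $b(\ell)\in N_i$ tends to \emph{decrease} the probability that $\ell$ is unsold (bidder $b(\ell)$ may purchase $\ell$ themselves), giving the wrong sign for the inequality. The resolution is that any ``lost'' mass in $\q^{(i)}(O_{N_i}^D\setminus A_j^{(i)})$ corresponds precisely to additional welfare contributed by bidder $b(\ell)$ to $\val(A_j^{(i)})$, and the moreover clause of Lemma~\ref{fpaLemma} applied to the uniformly chosen last bidder provides exactly the accounting needed to absorb this trade-off into the final bound.
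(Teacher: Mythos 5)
Your diagnosis of the obstacle is exactly right, but the proposal does not actually overcome it. The reduction you open with (``it suffices to prove $\E{\q^{(i)}(O_{N_i}^D\setminus A_j^{(i)})} \ge \frac{1}{10\beta}\E{\q^{(i)}(O_{N_{>i}}^D\setminus A_j^{(i)})}$'') rests on an inequality that is simply false in general --- if every bidder $b\in N_i$ buys exactly $O_b^D$, the left side is $0$ while the right side can be all of $\q\toI(O_{N_{>i}}^D)$ --- and you correctly abandon it. The replacement you sketch, however, still has a genuine gap. You apply the ``moreover'' clause of Lemma~\ref{fpaLemma} to $\pi(|N_i|)$, the \emph{actual last bidder of $N_i$}. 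That bidder is marginally uniform over $N_{\geq i}$, but it participates in and shapes the auction: conditioning on $\pi(|N_i|)=b$ changes the law of the preceding $|N_i|-1$ arrivals (they are now drawn from $N_{\geq i}\setminus\{b\}$) and hence of $\sold_{<|N_i|}$, and this is not the same as conditioning on $b\in N_{>i}$, which is what appears in the target $\E{\q\toI(O_{N_{>i}}^D\setminus A_j\toI)}$. Converting $\E{\q\toI(O_{\pi(|N_i|)}^D\setminus\sold_{<|N_i|})}$ into $\frac{1}{|N_{\geq i}|}\E{\q\toI(O_{N_{>i}}^D\setminus A_j\toI)}$ requires an explicit coupling of these two conditional laws (match the first $|N_i|-1$ arrivals across the two scenarios and use $\sold_{<|N_i|}\subseteq A_j\toI$ under that coupling); ``the symmetry of $\pi$'' alone does not do it, and this coupling is the entire content of the claim. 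You also never state the exchangeability step that converts the last bidder's bound into a bound on $\E{\val(A_j\toI)}$ carrying the factor $|N_i|$, which is where the $1/(10\beta)$ actually enters.

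The paper sidesteps all of this conditioning with a cleaner device: it runs an \emph{imaginary} auction on $N_i\cup\{k\}$, where $k$ is drawn uniformly from $N_{>i}$ and appended \emph{after} all of $N_i$. Since $k$ acts last, the real allocation $A_j\toI$ is unchanged, so the ``moreover'' clause applies verbatim with $\sold_{<k}=A_j\toI$ and $S_k=O_k^D$, giving $v_k(T_k)\ge\frac c2\q\toI(O_k^D\setminus A_j\toI)$; averaging over $k$ produces exactly $\frac{1}{|N_{>i}|}\E{\q\toI(O_{N_{>i}}^D\setminus A_j\toI)}$ with no coupling argument. The exchangeability step (each of the $|N_i|$ earlier bidders gets at least the expected utility of the appended last bidder) then yields $\E{\val(A_j\toI)}\ge\frac{|N_i|}{|N_{>i}|}\cdot\frac c2\cdot\E{\q\toI(O_{N_{>i}}^D\setminus A_j\toI)}\ge\frac{c}{20\beta}\cdot\E{\q\toI(O_{N_{>i}}^D\setminus A_j\toI)}$. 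Your route can probably be completed with the coupling described above, but as written the proposal asserts the crucial accounting rather than performing it.
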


\begin{proof}

Consider picking a bidder $k \in N_{>i}$ uniformly at random 
and running an imaginary fixed price auction on $N_i \cup \{k\}$, where $k$ is the last bidder chosen to act. Then by Lemma~\ref{fpaLemma} (parameters in the lemma take values $N = N_i \cup \{k\}$, $\sold_{<k} = A_{j}^{(i)}$, $S_k = O_{k}^D$),  the value bidder $k$ gets from the imaginary fixed price auction satisfy $v_k(T_k) \geq \frac{c}{2} \cdot \q^{(i)}(O_{k}^D \setminus A_{j}^{(i)})$. 
We now take the expectation over the randomness on bidders $N_i \cup k$, 
\begin{align*}
    \Es{N_i, k \in N_{> i}}{v_k(T_k)} \geq \textcolor{\approxDiffColor}{\frac{c}{2}} \cdot \Es{N_i, k \in N_{> i}}{\q^{(i)}(O_{k}^D \setminus A_{j}^{(i)})} &= \textcolor{\approxDiffColor}{\frac{c}{2}} \cdot  \Es{N_i}{\frac{1}{|N_{>i}|} \cdot \sum_{k \in N_{>i}} \q^{(i)}(O_{k}^D \setminus A_{j}^{(i)})}
    \\&=  \frac{1}{|N_{> i}|} \cdot \textcolor{\approxDiffColor}{\frac{c}{2}} \cdot \E{\q^{(i)}(O_{N_{>i}}^D \setminus A_{j}^{(i)})}.
\end{align*}

Observe that the expectation of $\val(A_{j}^{(i)})$ is the same as the expected welfare of bidders in $N_i$ in the imaginary fixed price auction. Since the bidders in $N_i$ arrive before bidder $k$, their expected welfare in the imaginary fixed price auction is larger equal to that of bidder $k$. Thus by linearity of expectation
\begin{align*}
    \E{\val(A_{j}^{(i)})} \geq \frac{|N_i|}{|N_{>i}|} \cdot \textcolor{\approxDiffColor}{\frac{c}{2}} \cdot 
    \E{\q^{(i)}(O^D_{N_{>i}} \setminus A_{j}^{(i)})} \geq
    \textcolor{\approxDiffColor}{\left(\frac{c}{20 \beta }\right)} \cdot \E{\q^{(i)}(O^D_{N_{>i}} \setminus A_{j}^{(i)})}.
\end{align*}
\end{proof}

\begin{claim}{(5.2 from~\cite{AssadiS19})} \label{5.2}
  For any $j$, we have
  \[ \textcolor{\approxDiffColor}{\frac{22 \beta}{c}} \cdot \E{\val(A_{j}^{(i)})} +
  \E{\q^{(i)}(C_{j}^{(i)} \cap A_{j}^{(i)})} \geq \E{\q^{(i)}(C_{j}^{(i)})}.
\]
\end{claim}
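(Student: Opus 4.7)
The plan is to decompose $\q^{(i)}(C_j^{(i)})$ into the items that were allocated in auction $A_j^{(i)}$ and those that were not, and then further split the unallocated part based on which group of bidders $O$ would have assigned them to. Then Claims~\ref{lemma:variant1} and~\ref{lemma:variant2} directly bound each piece of the unallocated part by $\E{\val(A_j^{(i)})}$, up to appropriate constants.

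Concretely, with $D = C_j^{(i)}$, the key identity is
\[
  C_j^{(i)} = O_{N_{\ge i}}^{D} = O_{N_i}^{D} \sqcup O_{N_{> i}}^{D},
\]
which uses that $\q^{(i)}$ zeros out items assigned by $O$ to bidders in $N_{<i}$. Splitting further by intersection with $A_j^{(i)}$ gives
\[
  \q^{(i)}(C_j^{(i)}) = \q^{(i)}(C_j^{(i)} \cap A_j^{(i)}) + \q^{(i)}(O_{N_i}^{D} \setminus A_j^{(i)}) + \q^{(i)}(O_{N_{>i}}^{D} \setminus A_j^{(i)}).
\]
Taking expectations and applying Claim~\ref{lemma:variant1} (which gives the bound $\E{\q^{(i)}(O_{N_i}^{D} \setminus A_j^{(i)})} \le \tfrac{2}{c}\E{\val(A_j^{(i)})}$ deterministically, hence in expectation), together with Claim~\ref{lemma:variant2} (which gives $\E{\q^{(i)}(O_{N_{>i}}^{D} \setminus A_j^{(i)})} \le \tfrac{20\beta}{c}\E{\val(A_j^{(i)})}$), we bound the sum of the two unallocated terms by $\tfrac{2 + 20\beta}{c}\E{\val(A_j^{(i)})} \le \tfrac{22\beta}{c}\E{\val(A_j^{(i)})}$ for $\beta \geq 1$. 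Rearranging yields the claim.

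I do not expect a main obstacle here: the proof is essentially a bookkeeping argument that glues together the two previous claims. The only subtlety is to be explicit that the decomposition $C_j^{(i)} = O_{N_i}^{D} \sqcup O_{N_{>i}}^{D}$ is valid, which follows from the definition of $\q^{(i)}$ (items allocated by $O$ to bidders in $N_{<i}$ get price $0$), and that Claim~\ref{lemma:variant1}, being a deterministic inequality, carries over to expectations without losing anything. The constant $22\beta$ is chosen so that the factor $2/c$ from Claim~\ref{lemma:variant1} is absorbed into the $20\beta/c$ factor from Claim~\ref{lemma:variant2}.
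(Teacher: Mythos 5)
Your proof is correct and follows essentially the same route as the paper: both decompose $\q^{(i)}(C_j^{(i)})$ via $C_j^{(i)} = O^D_{N_{\ge i}} = O^D_{N_i}\sqcup O^D_{N_{>i}}$, bound the two unallocated pieces by Claims~\ref{lemma:variant1} and~\ref{lemma:variant2} respectively, and sum. Your remark that $\tfrac{20\beta+2}{c}\le\tfrac{22\beta}{c}$ for $\beta\ge 1$ makes explicit a step the paper leaves implicit between its proof and the claim statement.
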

\begin{proof}
By combining Claim~\ref{lemma:variant1} and \ref{lemma:variant2}, we have 
\begin{align*}
    &\textcolor{\approxDiffColor}{\left(\frac{20 \beta}{c} + \frac{2}{c}\right)}
    \E{\val(A_{j}^{(i)})} 
    \geq  \E{\q^{(i)}(O_{N > i}^D \setminus A_{j}^{(i)})} 
    + \E{\q^{(i)}(O^D_{N_{i}} \setminus A_j^{(i)})} 
    = \E{\q^{(i)}(C_{j}^{(i)} \setminus A_{j}^{(i)})}.
\end{align*}
Because $O^D_{N_{\ge i}}$ is exactly $C\toI_j$.
Thus, we get
\begin{align*}
    \textcolor{\approxDiffColor}{\frac{20 \beta + 2}{c}} \cdot \E{\val(A_{j}^{(i)})} 
    + \E{\q^{(i)}(C_{j}^{(i)} \cap A_{j}^{(i)})} 
    \geq \E{\q^{(i)}(C_{j}^{(i)})}.
\end{align*}
\end{proof}

The previous claim can be thought of as a preliminary
version of the entire learnable-or-allocatable lemma.
In expectation, we get something comparable to the items
which are correctly priced in auction $j$ of round $i$
(i.e. $\vectr q\toI(C\toI_j)$).
The contribution come from either the items
which sold in the round they were ``supposed to'' 
(i.e. $C_j\toI \cap A_j\toI$) or the welfare of the current
allocation (i.e. $A_j\toI$) (with an extra $O(\beta)$ factor).
The previous claims dealt with individual auctions within an
iteration -- next we handle iterations as a whole.

We still have to account for two things:
items which sell in auctions where the prices are \emph{too high}
and the loss in welfare from the fact that bidders in $N_i$
will no longer be allocated items in later rounds.
The proofs in \cite{AssadiS19} hold as written -- only the properties of the
price tree and the structure of the auctions are used. 

\begin{claim}{(5.5 from~\cite{AssadiS19})} \label{5.5}
  \[ \q\toI(C^{(i+1)}) \ge 
    \sum_{j=1}^\alpha \q\toI (A_j\toI\cap C_j\toI) - \frac{OPT}{10\beta}.
  \]
\end{claim}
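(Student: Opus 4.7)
The plan is to trace each item $\ell$ that contributes to the right-hand sum into $C^{(i+1)}$, and bound the $\q\toI$-weight of items that fail to survive by $OPT/(10\beta)$.

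First I set up useful notation. Since the child nodes of $\vectr p\toI(\ell)$ partition the leaves below it, each $\ell \in C\toI$ lies in a unique $C_{f(\ell)}\toI$; hence $\sum_j \q\toI(A_j\toI\cap C_j\toI) = \q\toI(E)$, where $E:=\{\ell\in C\toI : \ell\in A_{f(\ell)}\toI\}$. For each $\ell \in E$, let $j^*(\ell):=\max\{j : \ell \in A_j\toI\} \ge f(\ell)$; by the PriceUpdate rule, $\vectr p^{(i+1)}(\ell)=\vectr p_{j^*(\ell)}\toI(\ell)$.

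Next I argue the ``surviving'' case: if $j^*(\ell)=f(\ell)$ and the $O$-owner of $\ell$ lies in $N_{>i}$, then $\ell\in C^{(i+1)}$, so $\ell$ contributes its full $\q\toI$-weight to the left-hand side. Indeed, the updated price lands on the unique level-$(i+1)$ node whose subtree contains the bin of $\q\toI(\ell)$ (by definition of $C_{f(\ell)}\toI$), and $\q^{(i+1)}(\ell)=\q\toI(\ell)$. The items in $E$ that fail to land in $C^{(i+1)}$ split into two groups: items with $j^*(\ell)>f(\ell)$, where the new price corresponds to a different level-$(i+1)$ subtree disjoint from the bin of $\q\toI(\ell)$; and items whose $O$-owner lies in $N_i$, which are zeroed out by $\q^{(i+1)}$.

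The main obstacle is bounding the total $\q\toI$-weight of these two missing groups by $OPT/(10\beta)$. For the second group, the natural bound comes from the random choice of $N_i$: the $\q\toI$-weight of $O^*$-items assigned to $N_i$ is, in expectation, at most $(|N_i|/|N_{\ge i}|)\cdot \q\toI(O^*) \le OPT/(10\beta)$. For the first group, I would exploit the $\gamma$-factor gap between sibling level-$(i+1)$ nodes built into the tree by the $P^o$/$P^e$ split: any item in this group was bought at a price strictly exceeding its supporting $\q\toI$-price by a multiplicative $\gamma$-gap, so the ``extra'' revenue accrued in iteration $i$ dominates its $\q\toI$-weight and can be charged against $OPT$. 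The delicate bookkeeping step is landing on exactly the stated $OPT/(10\beta)$ loss; I expect it to mirror the argument in~\cite{AssadiS19} essentially unchanged, since only the structural properties of the price tree (with $\gamma=\Theta(\alpha\beta)$ tuned precisely for this trade-off) and the structure of fixed-price auctions are used, neither of which are altered by moving from exact to $(c,d)$-approximate demand queries.
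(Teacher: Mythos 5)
Your treatment of the ``overpriced'' items (your group with $j^*(\ell) > f(\ell)$) is exactly the paper's argument in spirit, but you defer its quantitative core, and you fold in a second source of loss that does not belong to this claim. On the first point: the paper bounds the overpriced set $U$ (items ever allocated in an auction whose price exceeds their correct level-$(i+1)$ price) by considering the feasible allocation that gives each item of $U$ to the bidder in the highest-priced auction in which it was sold. That allocation has welfare at most $OPT$, and at least $\gamma\,\q\toI(U)$ because of the factor-$\gamma$ gap between sibling nodes in the price tree and individual rationality of the purchases; hence $\q\toI(U) \le OPT/\gamma \le OPT/(10\beta)$ by the choice $\gamma \ge 10\beta$. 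This is the ``delicate bookkeeping'' you wave at, and it does go through unchanged with $(c,d)$-approximate demand queries, but as written your proof does not contain it.

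The more substantive issue is your second group (items whose $O$-owner lies in $N_i$). Claim~\ref{5.5} is a deterministic inequality with both sides weighted by $\q\toI$; the loss from zeroing out items of bidders sampled into $N_i$ is the content of the \emph{separate} Claim~\ref{5.6}, which converts $\q\toI(C^{(i+1)})$ to $\E{\q^{(i+1)}(C^{(i+1)})}$ at the cost of another $OPT/(10\beta)$ (these two losses are then added in the Learnable-or-Allocable lemma to give $OPT/(5\beta)$). By importing that expectation argument here you (i) change the claim from a deterministic statement into one that only holds in expectation over $N_i$, and (ii) break the budget: you would need $OPT/\gamma + OPT/(10\beta)$, which exceeds the single $OPT/(10\beta)$ allowed by the statement. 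The fix is simply to drop group (b) from this claim entirely --- for the purposes of Claim~\ref{5.5}, membership in $C^{(i+1)}$ is determined by the price-tree location alone, and the $N_i$ zeroing enters only through the change of weight function in Claim~\ref{5.6}.
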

\begin{proof}
The key observation here is that the set of ``overpriced'' items represent a small fraction of the optimal revenue. Let $U$ be the set of items that are allocated in FixedPriceAuction with price above their correct price in round $i$. The set of items that are allocated in the correct round but not priced correctly is exactly 
$\left(\bigcup_j A_{j}\toI \cap C_{j}\toI \right) \setminus C^{(i+1)}$. This must be a subset of $U$.
Thus, $\q\toI(C^{(i+1)}) \ge 
\sum_{j=1}^\alpha \q\toI (A_j\toI\cap C_j\toI) - \vectr q(U)$.

Consider an allocation that gives all items in $U$ to the bidder in the highest priced auction where it is ever allocated. Such an allocation must give welfare $\leq OPT$, but $\geq \gamma\q^{(i)}(U)$ due to the price gap in the tree structure. 
Thus $\q^{(i)}(U) \leq \frac{1}{\gamma} OPT \leq \frac{OPT}{10 \beta}$ (by choosing $\gamma = \theta(\log\log{m}) \geq 10 \beta$) .
\end{proof}

\begin{claim}{(5.6 from~\cite{AssadiS19})} \label{5.6}
  \[ \E{ \q^{(i+1)}(C^{(i+1)}) } \ge 
    \E{ \q\toI(C^{(i+1)}) } - \frac{OPT}{10\beta}.
  \]
\end{claim}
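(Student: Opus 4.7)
The plan is to exploit the fact that $\q\toI$ and $\q^{(i+1)}$ differ only on items that $O^*$ allocates to bidders in $N_i$, and then control the expected $\q^*$-mass of such items using symmetry of the Partition procedure. Write $O^*_{N_i} := \bigcup_{b \in N_i} O^*_b$. By the definitions of $\q\toI$ and $\q^{(i+1)}$ from the ``Notation'' paragraph, for every item $\ell$ we have $\q\toI(\ell) = \q^*(\ell) \cdot \1{\ell \in O^*_{N_{\ge i}}}$ and $\q^{(i+1)}(\ell) = \q^*(\ell) \cdot \1{\ell \in O^*_{N_{\ge i+1}}}$. Since $N_{\ge i}$ is the disjoint union of $N_i$ and $N_{\ge i+1}$, subtracting gives $(\q\toI - \q^{(i+1)})(\ell) = \q^*(\ell)\cdot \1{\ell \in O^*_{N_i}}$, so
\[
\q\toI(C^{(i+1)}) - \q^{(i+1)}(C^{(i+1)}) \;=\; \q^*\!\left(C^{(i+1)} \cap O^*_{N_i}\right) \;\le\; \q^*(O^*_{N_i}).
\]
Hence the claim reduces to proving $\E{\q^*(O^*_{N_i})} \le OPT/(10\beta)$.

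For this, I use symmetry of the Partition procedure. Conditioned on the history of iterations $1,\dots,i-1$ (in particular on $N_{<i}$), the set $N_i$ is a uniformly random size-$|N|/(10\beta)$ subset of $N_{\ge i}$, so every $b \in N_{\ge i}$ lies in $N_i$ with probability $|N_i|/|N_{\ge i}|$. Because the bundles $\{O^*_b\}_{b \in N_{\ge i}}$ are disjoint, linearity of expectation gives
\[
\E{\q^*(O^*_{N_i})} \;=\; \sum_{b \in N_{\ge i}} \Pr[b \in N_i]\cdot \q^*(O^*_b) \;=\; \frac{|N_i|}{|N_{\ge i}|}\,\q^*(O^*_{N_{\ge i}}) \;\le\; \frac{|N_i|}{|N_{\ge i}|}\cdot OPT,
\]
using $\q^*(O^*_{N_{\ge i}}) \le \q(O) = OPT$ from the definition of supporting prices. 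Finally, $|N_i| = |N|/(10\beta)$ while $|N_{\ge i}| \ge |N| - \beta \cdot |N|/(10\beta) = 9|N|/10$ for $i \le \beta$, so $|N_i|/|N_{\ge i}| = O(1/\beta)$, which yields the bound.

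The only obstacle is bookkeeping: the exact ratio $|N_i|/|N_{\ge i}| = 1/(10\beta - i + 1)$ equals $1/(10\beta)$ at $i=1$ but grows (mildly) to $1/(9\beta+1)$ at $i=\beta$. This $O(1)$ slack in a single claim is harmless, since it gets absorbed into the constants hidden elsewhere in the overall $O((\log \log m)^3)$ approximation ratio (or, if desired, eliminated by slightly shrinking the per-round fraction of bidders in Partition). The conceptual core of the proof — that $\q\toI - \q^{(i+1)}$ is $\q^*$ restricted to $O^*_{N_i}$, whose expected $\q^*$-mass is a $\Theta(1/\beta)$ fraction of $OPT$ — is what the argument really needs.
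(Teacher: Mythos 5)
Your proof is correct and is essentially the paper's own (one-sentence) argument, fleshed out: $\q^{(i+1)}$ is exactly $\q\toI$ with the items that $O^*$ assigns to bidders in $N_i$ zeroed out, and the expected $\q^*$-mass of those items is an $O(1/\beta)$ fraction of $OPT$ by symmetry of the Partition procedure. Your further observation that, conditioned on the first $i-1$ iterations, the inclusion probability is $1/(10\beta - i + 1)$ rather than exactly $1/(10\beta)$ — so the clean bound is really $OPT/(9\beta)$ in the worst case — identifies a genuine but harmless constant-factor slack that the paper's terser proof glosses over.
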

\begin{proof}
This follows simply from the fact that $\vectr q^{(i+1)}$
is exactly $\vectr q\toI$ with items corresponding (under $O$) to bidders in $N_i$ set to zero, and that
bidders join $N_i$ with probability $1/(10\beta)$. 
\end{proof}

\begin{proof}{(of Learnable or Allocable Lemma : Lemma \ref{learnOrAlloc})} 

By Claim \ref{5.5} and \ref{5.6}, 

\begin{align}\label{eqnLearnOrAllocAccum}
      \E{ \q^{(i+1)}(C^{(i+1)}) } \geq
      \sum_{j=1}^\alpha \E{  \q\toI (A_j\toI\cap C_j\toI)} - \frac{OPT}{5\beta},
\end{align}
 
We now have two cases. First, assume
\begin{align}\label{eqnLearnableCase}
  \sum_{j=1}^\alpha \E{ \q\toI (A_j\toI\cap C_j\toI)}
  \geq \E{  \q\toI (C\toI)} - \frac{2}{15 \beta} OPT.
\end{align}
Together with~(\ref{eqnLearnOrAllocAccum}) this immediately implies that 
\begin{align*}
    \E{ \q^{(i+1)}(C^{(i+1)}) } \geq 
    \E{ \q\toI(C\toI) } - \frac{OPT}{3\beta}.
\end{align*}
and we are in the ``learnable case''.

On the other hand, if equation~(\ref{eqnLearnableCase}) is false,
then we can sum the inequality in claim \ref{5.2} for
each $j=1,\ldots,\alpha$ to get
\begin{align*}
  \E{\q^{(i)}(C^{(i)})} 
  & \le
  \textcolor{\approxDiffColor}{\frac{22 \beta}{c}}  
    \sum_{j=1}^\alpha \E{\val(A_{j}^{(i)})} +
   \sum_{j=1}^\alpha \E{\q^{(i)}(C_{j}^{(i)} \cap A_{j}^{(i)})} \\
   & < \textcolor{\approxDiffColor}{\frac{22 \beta}{c}}  
   \sum_{j=1}^\alpha \E{\val(A_{j}^{(i)})} +
   \E{  \q\toI (C\toI)} - \frac{2}{15 \beta} OPT.
\end{align*}
Thus
\begin{align*}
  &\textcolor{\approxDiffColor}{\frac{22\beta}{c}} \sum_{j=1}^\alpha \E{\val(A_{j}\toI)} \geq  \frac{2}{15 \beta} \cdot OPT \\
  &\Rightarrow \E{\val(A_{j^*}\toI)} 
  = \frac{1}{\alpha} \sum_{j=1}^\alpha \E{\val(A_{j}\toI)} 
  \geq \textcolor{\approxDiffColor}{\frac{2c}{22*15 \alpha \beta^2}} \cdot OPT
  = \textcolor{\approxDiffColor}{\frac{c}{O(\alpha \beta^2)}}\cdot OPT.
\end{align*}
and we are in the ``allocatable'' case.
  
\end{proof}

Theorem \ref{thrmResilientXos} now follows readily follow from the Learnable or Allocable Lemma. 

\begin{theorem} \label{thrmResilientXosSimplied}
Suppose $\psi_{min}$ and $\psi_{max}$ are given and satisfy assumption~\ref{assumption}. Suppose the optimal allocation $O$ has supporting prices $\vectr q$, 
and suppose bidders pick $(c,d)$-competitive sets in every fixed price auction they participate in.
Then mechanism~\ref{alg:priceLearning} achieves an $O\left(\max\left\{\frac{1}{c},\frac{1}{d}\right\}\cdot (\log\log{m})^3\right)$ approximation to the optimal welfare.
\end{theorem}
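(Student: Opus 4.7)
The plan is to chain together the Learnable-Or-Allocable Lemma (Lemma~\ref{learnOrAlloc}) across all $\beta$ iterations and combine it with a direct application of Lemma~\ref{fpaLemma} to the final fixed-price auction, then compute the resulting approximation factor by substituting $\alpha = \Theta(1)$, $\beta = \Theta(\log\log m)$, $\gamma = \Theta(\alpha\beta) = \Theta(\log\log m)$.

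First, I would decompose the expected welfare according to when the mechanism terminates. Writing $F := 1 - 1/\beta$, early termination at iteration $i \in [\beta]$ happens with probability $F^{i-1}/\beta$, and the mechanism reaches the final auction with probability $F^\beta \ge 1/e$. Next, for each iteration $i$, Lemma~\ref{learnOrAlloc} says that, conditioned on the state entering iteration $i$, we are in either the learnable case (L) with $\E{\vectr q^{(i+1)}(C^{(i+1)})} \ge \vectr q^{(i)}(C^{(i)}) - \frac{OPT}{3\beta}$, or the allocatable case (A) with $\E{\val(A_{j^*}^{(i)})} \ge \frac{c}{O(\alpha\beta^2)}\cdot OPT$.

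I would then do a case analysis on $\tau$, the first iteration in case (A) (setting $\tau = \beta+1$ if all iterations are learnable). If $\tau = i^* \le \beta$, then the contribution to expected welfare from stopping at iteration $i^*$ alone is at least $\frac{F^{i^*-1}}{\beta}\cdot \frac{c}{O(\alpha\beta^2)} \cdot OPT \ge \frac{c}{O(\alpha\beta^3)}\cdot OPT$. If $\tau = \beta+1$, telescoping the learnable inequality across $\beta$ iterations yields $\E{\vectr q^{(\beta+1)}(C^{(\beta+1)})} \ge \vectr q(O^*) - OPT/3$. Since the mechanism picks $T^o$ or $T^e$ uniformly at random and every item's supporting price falls into exactly one of $P^o, P^e$, we have $\E[T^*]{\vectr q(O^*)} = OPT/2$, so $\E{\vectr q^{(\beta+1)}(C^{(\beta+1)})} \ge OPT/6$. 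Applying Lemma~\ref{fpaLemma} to the final auction (with base prices $\vectr p^{(\beta+1)}/2$, so that the price-tree property gives $\frac{1}{2\gamma}\vectr q(\ell) \le \vectr p^{(\beta+1)}(\ell)/2 \le \vectr q(\ell)/2$ for $\ell \in C^{(\beta+1)}$, i.e., $\delta = 1/(2\gamma)$), we obtain welfare at least $\min(c/2, d/(2\gamma)) \cdot \vectr q^{(\beta+1)}(C^{(\beta+1)})$. Multiplying by $P(G_{\beta+1}) \ge 1/e$ yields $\Omega(\min(c,d)/\gamma) \cdot OPT$.

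Combining the two cases, the expected welfare is at least $\Omega\bigl(\min\{\frac{c}{\alpha\beta^3},\ \frac{\min\{c,d\}}{\gamma}\}\bigr)\cdot OPT = \Omega\bigl(\frac{\min\{c,d\}}{(\log\log m)^3}\bigr)\cdot OPT$, which gives the desired $O(\max\{1/c,1/d\}\cdot (\log\log m)^3)$ approximation ratio. The main obstacle I anticipate is bookkeeping: carefully matching the parameter $\vectr p$ of Lemma~\ref{fpaLemma} (whose auction uses prices $d\vectr p$) against the mechanism's auction (which uses $d\vectr p^{(\beta+1)}/2$), so that the right $\delta$ comes out, and then making sure the case-analysis on $\tau$ is a legitimate worst-case argument rather than conditioning on future randomness. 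Since $\tau$ is determined by the state just before each iteration and Lemma~\ref{learnOrAlloc} already conditions on that state, this is sound; I would state it as: for any realization of $\tau$, the lower bounds above on expected welfare hold, and hence they hold unconditionally.
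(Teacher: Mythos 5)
Your proposal is correct and follows essentially the same route as the paper's proof: split on whether every iteration is learnable or some iteration is allocatable, telescope the learnable inequality to get $\E{\vectr q^{(\beta+1)}(C^{(\beta+1)})}\ge OPT/6$ and apply Lemma~\ref{fpaLemma} to the final auction, versus cashing in the allocatable iteration with the $O(1/\beta)$ early-termination probability. Your bookkeeping ($\delta = 1/(2\gamma)$ versus the paper's $1/\gamma$, and phrasing the case split via the stopping time $\tau$ rather than events $E_1,E_2$) differs only cosmetically and yields the same $O\left(\max\left\{\frac{1}{c},\frac{1}{d}\right\}(\log\log m)^3\right)$ bound.
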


\begin{proof} 

  Note that by the Learnable or Allocable Lemma, in the mechanism there are only two situation that can occur, 1) event $E_1$: ``learnable" occurs in every iteration $i = 1, 2, ...\beta$, or 2) event $E_2$: ``allocable" occurs in some iteration $k$. Denote the welfare from the mechanism as $Welf$. Then $\E{Welf}$ satisfy the equation
  \begin{align*}
      \E{Welf} 
      &\geq \min\Big(\E{Welf \ | \  E_1}, \E{Welf \ | \ E_2}\Big).
  \end{align*}
  Now we bound $\E{Welf \ | \  E_1}$ and $\E{Welf \ | \ E_2}$, respectively. 
  \begin{itemize}
    \item
    Suppose that ``learnable" occurs for each iteration $i = 1, 2, ... \beta$ in the mechanism. Because $C^{(1)}$ consist of items whose prices belong to the bins of $P^*$, we know that $\E{\q^{(1)}(C^{(1)})} = OPT/2$. Thus,
    \begin{align*}
        \E{\q^{\beta+1}(C^{\beta+1})} \geq \E{\q^{(1)}(C^{(1)}) - \frac{OPT}{3}} = \frac{OPT}{2} - \frac{OPT}{3} = \frac{OPT}{6}.
    \end{align*}
    Let $W_{\beta + 1}$ be the welfare achieved when the mechanism allocate in the last iteration of fixed price auction. Since for any correctly priced item $j \in C^{(\beta + 1)}$, $\frac{1}{2}\q(j) \geq \vectr p(j) \geq \frac{1}{\gamma}\q(j) $, by lemma~\ref{fpaLemma}, $W_{\beta + 1} \geq \textcolor{\approxDiffColor}{\min\left(\frac{c}{2}, \frac{d}{\gamma}\right)} \cdot \E{\q^{\beta+1}(C^{\beta+1})} = \textcolor{\approxDiffColor}{O\left(\min\left(c, \frac{d}{\beta}\right)\right)} \cdot OPT$. 
    
    
    It's easy to verify that the mechanism allocates in last iteration with constant probability. Thus, in this case 
    we get $\E{Welf | E_1}$ at least $\textcolor{\approxDiffColor}{O\left(\min\left(c, \frac{d}{\beta}\right)\right)} \cdot OPT$.  
    
    \item 
    In the case where ``learnable" does not occur for some iteration $i$, ``allocable" must occur at this iteration. Thus 
    \begin{align*}
        \E{\val(A_{j^*}\toI)} = \textcolor{\approxDiffColor}{\frac{c}{O(\alpha \beta^2)}} \cdot OPT.
    \end{align*}
    The mechanism allocate in iteration $i$ with probability $(1 - 1/\beta)^{i-1}\cdot 1/\beta = O(1/\beta)$, thus in this case $\E{Welf \ | \  E_2}$ is at least $\textcolor{\approxDiffColor}{\frac{c}{O(\alpha \beta^3)}} \cdot OPT$. 
    
\end{itemize}
Since $\beta = \Theta(\log\log{m})$, we conclude that mechanism $M$ achieves an approximation ratio of
\[  \textcolor{\approxDiffColor}{O\left(\max\left(\frac{1}{c}, \frac{\beta}{d}, \frac {\alpha\beta^3}{c}\right) \right)}
=
\textcolor{\approxDiffColor}{\max\left(\frac{1}{c}, \frac{1}{d}\right)} \cdot O(\log\log m)^3.
\] 
\end{proof}

\subsection{Removing Assumptions} \label{appendixB_general}
In this section we prove Theorem~\ref{thrmResilientXos} in full generality by 1) removing the assumption that the supporting price lies in $\{0\} \cup [\psi_{min}, \psi_{max}]$, where $\psi_{max}/\psi_{min} = \poly(m)$, and 2) showing that this generalized mechanism can be implemented in advised strategies. We use a similar (but slightly simplified) extension to PriceLearningMechanism following previous work on truthful mechanisms for XOS bidders \cite{Dobzinski07, DobzinskiNS12, Dobzinski16a, AssadiS19}\footnote{
  Prior works have some probability of 
  selling the grand bundle $M$ in a second price auction
  (to handle ``dominant bidders'')
  or running a different algorithm to collect basic ``statistics''
  on the bidders. We combine the two approaches
  by using the result of the second price auction to calculate
  the statistics (at the cost of some loss in the polynomial
  factor in assumption~\ref{assumption}).
}. Our variation both simplifies the analysis and allows us to satisfy the formal definition of implementation in advised strategies more easily.

The final mechanism is as follows.
\begin{algorithm}
\caption{GeneralizedMechanism(N, M):}
  \label{alg:generalMech}
\begin{algorithmic}[1]
  \State Pick a subset of bidders $N_{stat} \subseteq N$ by sampling each bidder in $N$ independently and with probability $\frac{1}{2}$. Let $N_{mech} = N \setminus N_{stat}$.  
  \State Run a second price auction on the grand bundle $M$ with bidders in $N_{stat}$. Let $SPA$ be the welfare of the resulting allocation. With probability $\frac{1}{2}$, return the resulting allocation and terminate. With the remaining probability, continue. 
  \State Set $\psi_{min} = \frac{1}{4m^2} \cdot SPA$ and $\psi_{max} = 4m \cdot SPA$. 
  \State Run PriceLearningMechanism (Mechanism~\ref{priceLearningMech}) on bidders in $N_{mech}$ with $\psi_{min}$ and $\psi_{max}$ and return the allocation. 
\end{algorithmic}
\end{algorithm}

First, we show that implementation in advised strategies allows
us to force bidders to play truthfully in the second-price
auction of mechanism~\ref{alg:generalMech},
and to pick $(c,d)$-competitive sets in the PriceLearningMechanism.
\begin{lemma} \label{lem:GenMechAdvice}
  Suppose we are given a $(c,d)$-approximate demand oracle
  $D$ for valuations $\mathcal V$.
  Then there exists a useful poly-time computable advice $A$
  for mechanism~\ref{alg:generalMech}
  such that, if a strategy $s$ is advised for $v_i$ under $A$,
  then any bidder in $N_{stat}$ will play truthfully 
  in the second price auction, and any bidder in
  $N_{mech}$ will pick $(c,d)$-competitive sets in every
  fixed price auction they participate in.
\end{lemma}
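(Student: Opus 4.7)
The plan is to construct useful poly-time advice $A$ that at each decision point of mechanism~\ref{alg:generalMech} either leaves the bidder's tentative action alone or strictly improves it. Concretely, I would define $A(v_i, x, a)$ in two cases. If $x$ is the second-price auction state (so bidder $i \in N_{stat}$), output $v_i(M)$ if the tentative bid $a$ differs from $v_i(M)$, and output $a$ otherwise. If $x$ is a fixed-price auction state inside PriceLearningMechanism with prices $\vectr{p}_x$ and remaining items $R$, invoke the demand oracle $D$ at prices that equal $\vectr{p}_x$ on $R$ and $+\infty$ on $M\setminus R$ to obtain a set $T' \subseteq R$, then output $\argmax\{v_i(a) - \vectr{p}_x(a),\, v_i(T') - \vectr{p}_x(T')\}$ with ties broken in favor of $a$. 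Both cases run in polynomial time since $D$ is poly-time and $v_i$ is accessed via value queries.

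To verify usefulness (Definition~\ref{def:usefulAdvice}), suppose $A^{v_i,s} \neq s$ and let $x^*$ be any state at which they differ. If $x^* = x_{\mathrm{SPA}}$, then $A^{v_i,s}$ bids $v_i(M)$ while $s$ bids some $b \neq v_i(M)$; truthful bidding weakly dominates every alternative in a single-item second-price auction and strictly so for some opponent profile (e.g., a highest opposing bid lying strictly between $b$ and $v_i(M)$), and $x_{\mathrm{SPA}}$ is reached with probability $1/2$, so $A^{v_i,s}$ dominates $s$. If $x^*$ is a PriceLearningMechanism fixed-price state, then by construction $A^{v_i,s}(x^*)$ strictly improves $i$'s in-auction utility $v_i(\cdot) - \vectr{p}_{x^*}(\cdot)$ over $s(x^*)$. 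The key observation turning this local improvement into global domination is that bidder $i$'s total expected utility is a nonnegative weighted sum of in-auction utilities at the states $i$ faces: a bidder in $N_\ell$ with $\ell \le \beta$ receives their final allocation (with probability $1/(\beta\alpha)$ each, independent of any player's actions) from one of the $\alpha$ auctions in iteration $\ell$; a bidder in $N_{\beta+1}$ from the final auction; and the downstream price updates produced by PriceUpdate affect only the states faced by bidders in later groups, never feeding back into $i$'s own payoff. Hence improving $i$'s in-auction utility at $x^*$ strictly improves $i$'s expected utility for every opponent profile, so $A^{v_i,s}$ dominates $s$. For modifications at several states at once, decompose into single-state changes: each weakly improves $i$'s expected utility and at least one strictly does (since tie-breaking favors the input, only strictly improving swaps occur), so the composition strictly dominates.

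Idempotence follows directly from tie-breaking: at $x_{\mathrm{SPA}}$, $A(v_i, x_{\mathrm{SPA}}, v_i(M)) = v_i(M)$; at a fixed-price state, if the first application returns $a$ then the comparison continues to rank $a$ above $T'$, while if it returns $T'$ then reapplying $A$ with input $T'$ produces a tie that resolves in favor of $T'$. Finally, if $s$ is advised under $A$ then $A(v_i, x, s(x)) = s(x)$ at every $x$; this forces $s(x_{\mathrm{SPA}}) = v_i(M)$ (truthful bidding in the second-price auction), and at each PriceLearningMechanism fixed-price state it yields $v_i(s(x)) - \vectr{p}_x(s(x)) \ge v_i(T') - \vectr{p}_x(T') \ge c \cdot \max_{T \subseteq R}\{v_i(T) - \vectr{p}_x(T)/d\}$, so $s(x) \subseteq R$ is a $(c,d)$-competitive subset of $R$ as required. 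The main obstacle is Case~(ii) of the usefulness check: one must justify that a local utility improvement in a single fixed-price auction yields a dominating strategy even though changing $i$'s allocation alters the PriceUpdate output and hence later price vectors. This is resolved precisely by the observation that these downstream prices influence only the states of bidders outside $i$'s own group, so $i$'s expected payoff decomposes cleanly into a sum of in-auction utilities at $i$'s own states.
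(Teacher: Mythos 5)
Your construction of the advice itself (truthful bid in the second-price auction; run $D$ at the posted prices and take the utility-max of the tentative set and the oracle's set, tie-breaking for the input) matches the paper's, as do the idempotence check and the derivation of $(c,d)$-competitiveness from advisedness. The gap is in the domination argument, and it is precisely the obstacle the paper flags and resolves by \emph{changing the game form}, not by the decomposition you propose. A bidder $i\in N_\ell$ with $\ell\le\beta$ acts in $\alpha$ \emph{sequential} fixed-price auctions within iteration $\ell$. In the natural extensive form, another bidder $k\in N_\ell$ who is visited before $i$ in auction $j'>j$ chooses their set \emph{after} observing $i$'s choice in auction $j$, so $s_k$ may condition on it — e.g., ``if $i$ deviates from $s_i(x^*)$ in auction $j$, grab every item $i$ wants in auction $j'$.'' Against such an $s_{-i}$, replacing $s_i(x^*)$ by a higher-utility set in auction $j$ can strictly lower $i$'s utility in auction $j'$ (whose outcome is realized with probability $1/(\alpha\beta)$ just like auction $j$'s), so $A^{v_i,s}$ need not dominate $s$. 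Your justification — that the only downstream effect of $i$'s action is through \texttt{PriceUpdate}, which touches only later groups — accounts for the mechanism's state transitions but not for opponents' \emph{strategies}, which are functions of the full history including $i$'s earlier moves. The same issue breaks your ``decompose into single-state changes'' step: after the first swap, the states $i$ subsequently faces (the remaining-item sets $R$) may change, so the later in-auction utilities are not held fixed.

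The paper's proof closes this by formally implementing \textsc{GeneralizedMechanism} so that each bidder acts at exactly one node: a bidder in $N_\ell$ reports the entire list $(T_1,\ldots,T_\alpha)$ simultaneously. Then $i$'s allocation and payment depend only on $i$'s single report and on actions taken at nodes preceding $i$'s (which cannot depend on $i$'s report), so improving each component's in-auction utility weakly improves $i$'s expected utility against \emph{every} $s_{-i}$, and strictly against some (each auction is returned with positive probability). Without this restructuring — or some equivalent argument ruling out history-dependent retaliation — the claim that advice yields a dominating strategy does not go through, so you should add this reformulation of the mechanism as a game (or an explicit argument that opponents' conditioning cannot hurt $i$) before the rest of your proof is sound.
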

\begin{proof}

As in prior works~\cite{AssadiS19,Dobzinski07,Dobzinski16a}, to formally
meet our solution concept we need all actions by a single bidder
to happen simultaneously in order to preclude bidders from
``threatening'' each other (for example, if a different bidder will
only let me have items in future auctions if I lie in the current
auction, then truthful play does not dominate lying).
Thus, we formally implement GeneralizedMechanism as a game
where each bidder can act in exactly one node.
If the bidder is assigned in $N_{stat}$, the mechanisms 
simultaneously asks all bidders in $N_{stat}$
for a single bid on the grand bundle.
If the bidder is put in $N_{mech}$, and then into
$N_i$ for $i < \beta +1$, then the bidder needs to participate
in $\alpha$ fixed-price auctions simultaneously in a single
game node. Thus, the bidder reports a list $(T_j)_{j=1,\ldots,\alpha}$
of $\alpha$ subsets of $M$, where $T_j$
is still available in auction $j$ of the mechanism when 
it is bidder $i$'s turn to pick a set.
Bidders in $N_{\beta+1}$ report similarly, but participate in only
one auction.

Recall that the advice function $A(v_i, x, a)$ takes as input
the valuation function $v_i$ of player $i$, a node $x$ of the game,
and a ``tentative'' action $a$ which the player may play.
The advice works as follows:
for a node $x$ which corresponds to a bidder in $N_{stat}$, 
$A$ can ignore the tentative action $a$
and recommend truthful play in the second price auction,
i.e. $A(v_i, x, a) = v_i(M)$ in this case.
If $x$ corresponds to a bidder put in 
$N_i \subseteq N_{mech}$ for some $i<\beta$, then the tentative action $a$
is some list of sets 
$(S_1,\ldots,S_\alpha)$ which bidder $i$ may choose in each auction.
For each of the $\alpha$ auctions, $A$ will run the $(c,d)$-approximate
demand query $D$ (with prices and remaining items known from the node $x$)
to get a sets $T_1,\ldots,T_\alpha$.
Then, $A$ will return $(S_1',\ldots,S_\alpha')$,
where $S_i'$ is whichever of $S_i$ or $T_i$ that gives bidder $i$
higher utility. The advice behaves similarly for bidders
in $N_{\beta + 1} \subseteq N_{mech}$.

It's clear that, if $D$ is computable in poly-time,
then $A(v_i, x, a)$ is computable in poly-time.

We now show that $A$ is \emph{useful} (definition~\ref{def:usefulAdvice}).
$A$ satisfies the required idempotency property, because
for bidders in $N_{stat}$, the result of $A$ is a constant,
and for bidders in $N_{mech}$, the result is given by taking the max
of sets $S_j$ with the result of $D$ (which is fixed given bidder's
valuation $v_i$ and a node $x$ of the game).

For any $s_i$ and for any randomness in the mechanism,
it's clear that $A^{v_i,s}$ gets $i$ utility at least as high
as $s$. For, if $i$ is in $N_{stat}$, then $A^{v_i,s}$ recommends
a dominant strategy, and if $i$ is in $N_{mech}$, then
the utility of $i$ is completely determined by the unique
node in which $i$ is chosen to act, and $A^{v_i,s}$ will differ
from $s$ only in selecting sets with higher utility for $i$.
Moreover, if $s\ne A^{v_i,s}$, then either
$s$ and $A^{v_i,s}$ differ for some node corresponding
to a bidder in $N_{stat}$, 
or $s$ and $A^{v_i,s}$ differ for some node corresponding
to a bidder in $N_{mech}$.
In the first case, because $A^{v_i,s}$ is dominant,
there exists $v_{-i}$ and random outcomes of the mechanism
which get $i$ strictly higher utility.
In the second case, there must be some auction in which the advice
$A^{v_i,s}$ selects strictly better sets than $s$,
and because there is positive probability that each auction is the
allocation returned by the mechanism,
there are some random outcomes of the mechanism which get
the bidder strictly more utility.
Thus, if $A^{v_i,s} \ne s$, then $A^{v_i,s}$ dominates $s$.

Finally, it's clear that if a bidder plays 
according to strategy $A^{v_i,s}$for any $s$,
then if the bidder is in $N_{stat}$ then they play truthfully,
and if the bidder is in $N_{mech}$ then they select $(c,d)$-competitive sets.
\end{proof}

Now, we show that algorithm~\ref{alg:generalMech} successfully
allows us to remove assumption~\ref{assumption}.
Let $S$ be any set of bidders and let $OPT(S)$ denote the optimal welfare possible for bidders in $S$. We say $(\psi_{min}, \psi_{max}) $ is \emph{correct for $S$} if $\psi_{min} \leq OPT(S)/m^2$ and $\psi_{max} \geq OPT(S)$. We call a bidder $i$ dominant for a set $S$ if $v_i(O_i) > \frac{OPT(S)}{8}$.

\begin{lemma} \label{lemma:correctRangeConsequence}
Let $\vectr q$ be the supporting prices of an optimal allocation
of items to bidders in some set $S$.
If $(\psi_{min}, \psi_{max}) $ is correct for $S$, then the supporting prices of a $(1 - o(1))$ fraction of the items (weighted by their supporting prices) are in the range $I = [\psi_{min}, \psi_{max}]$. More formally, $\sum_{j \in M} \mathbbm{1} \left[\q (j) \in I \right] \cdot \q (j) \geq (1 - \frac{1}{m}) \cdot OPT(S)$. 
\end{lemma}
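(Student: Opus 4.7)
The plan is to simply split the items according to their supporting prices into three groups: items $j$ with $\q(j) > \psi_{max}$, items with $\q(j) \in I = [\psi_{min}, \psi_{max}]$, and items with $0 \le \q(j) < \psi_{min}$. The conclusion will follow by showing that the total $\q$-mass outside $I$ is at most $OPT(S)/m$.

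First, I would observe that the total supporting price equals the optimal welfare: by the two defining properties of supporting prices (Definition~\ref{def:supportingPrices}), $\sum_{j \in M} \q(j) = \sum_{i \in S} \q(O_i) = \sum_{i \in S} v_i(O_i) = OPT(S)$. This immediately implies $\q(j) \le OPT(S)$ for every single item $j$. Since $(\psi_{min}, \psi_{max})$ is correct for $S$, we have $\psi_{max} \ge OPT(S)$, so no item $j$ can satisfy $\q(j) > \psi_{max}$; the first group is empty.

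Next, I would bound the contribution from items with $0 < \q(j) < \psi_{min}$. There are at most $m$ such items, so their total $\q$-mass is strictly less than $m \cdot \psi_{min} \le m \cdot OPT(S)/m^2 = OPT(S)/m$, using the other half of the correctness hypothesis. Putting the two bounds together,
\begin{align*}
  \sum_{j \in M} \mathbbm{1}[\q(j) \in I]\cdot \q(j)
  &= \sum_{j \in M} \q(j) \;-\; \sum_{j : \q(j) > \psi_{max}} \q(j) \;-\; \sum_{j : 0 < \q(j) < \psi_{min}} \q(j) \\
  &\ge OPT(S) - 0 - OPT(S)/m = (1 - 1/m)\cdot OPT(S),
\end{align*}
which is exactly the desired inequality.

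There is no real obstacle here: the only content is the observation that $\sum_j \q(j) = OPT(S)$ (bounding each individual $\q(j)$ by $\psi_{max}$) together with the trivial union bound $|\{j : \q(j) < \psi_{min}\}| \le m$ against the quadratic slack in the definition of $\psi_{min}$. The proof takes only a few lines and does not use XOS/submodular structure beyond the existence of supporting prices.
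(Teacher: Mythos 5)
Your proof is correct and follows essentially the same argument as the paper: both use $\sum_j \q(j) = OPT(S)$ (so $\psi_{max} \ge OPT(S)$ makes the over-priced group empty) and bound the low-priced items by $m \cdot \psi_{min} \le OPT(S)/m$. The only difference is cosmetic — you make the identity $\sum_j \q(j) = OPT(S)$ explicit, which the paper leaves implicit.
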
 

\begin{proof}
Since $\psi_{max} \geq OPT(S)$, we know that for all item $j$, $\q(j) \in [0, \psi_{max}]$. Now we count the sum over supporting prices of items whose supporting price is $\leq OPT(S)/m^2$. 
\begin{align*}
    \sum_{\q(j) \leq OPT(S)/m^2} \q(j) 
    \leq m \cdot OPT(S)/m^2 = OPT(S)/m. 
\end{align*}
Thus 
\begin{align*}
   \sum_{j \in [m]} \mathbbm{1} \left[\vectr q(j) \in I \right] \cdot \q(j)  
   \geq \sum_{j \in [m]} \q(j) - \sum_{\vectr q(j) \leq OPT(S)/m^2} \vectr q(j) 
   \geq \left(1 - \frac{1}{m}\right) \cdot OPT(S).
\end{align*}
\end{proof}

\begin{corollary} \label{coro:correctRangeConsequence}
For bidders in $S$ and items in $M$, if $(\psi_{min}, \psi_{max}) $ is correct for $S$, and $\psi_{max}/\psi_{min} = \poly(m)$, then $\mathrm{PriceLearningMechanism}(S,M)$ returns an allocation with expected welfare $\frac{1}{r} \cdot (1 - \frac{1}{m}) \cdot OPT(S)$, where $r = \ratio$.
\end{corollary}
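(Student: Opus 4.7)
The plan is to invoke Theorem~\ref{thrmResilientXosSimplied} applied to a ``truncated'' version of the optimal allocation that only considers items with supporting prices in the desired range $I = [\psi_{min}, \psi_{max}]$. First I would fix an optimal allocation $O = (O_i)_{i\in S}$ with XOS supporting prices $\vectr q$, so that $OPT(S) = \vectr q(\bigcup_i O_i)$. Let $M' = \{j \in M : \vectr q(j) \in I\}$ and consider the restricted target allocation $(O_i \cap M')_i$ together with the restricted price vector $\vectr q'$ that agrees with $\vectr q$ on $M'$ and is zero elsewhere. By Lemma~\ref{lemma:correctRangeConsequence}, the modified benchmark satisfies $\vectr q'(\bigcup_i O_i) \geq (1 - 1/m) \cdot OPT(S)$.

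The key observation is that $\vectr q'$ satisfies Assumption~\ref{assumption} by construction (every coordinate is either $0$ or in $[\psi_{min}, \psi_{max}]$, and $\psi_{max}/\psi_{min} = \poly(m)$). Moreover, $\vectr q'$ still lower-bounds values of subsets of each $O_i$, i.e., $v_i(T) \geq \vectr q'((O_i \cap M') \cap T)$ for all $T$, which is the only property the proof of Theorem~\ref{thrmResilientXosSimplied} actually uses about supporting prices. A close reading of Lemma~\ref{fpaLemma}, Claim~\ref{lemma:variant1}, and the Learnable-or-Allocable Lemma reveals that each invocation of supporting prices is in the inequality form $v_i(A) \geq \vectr q(A)$ for $A$ a subset of the targeted allocation. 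Thus, substituting the targeted allocation $(O_i \cap M')_i$ and the price vector $\vectr q'$ in place of $O$ and $\vectr q$, the initial equation $\E{\vectr q^{(1)}(C^{(1)})} = \vectr q'(\bigcup_i O_i)/2$ goes through (the factor of $2$ coming from the random choice between $T^o$ and $T^e$), and the remainder of the argument carries over to yield expected welfare at least $\tfrac{1}{r} \cdot \vectr q'(\bigcup_i O_i) \geq \tfrac{1}{r}(1 - 1/m) \cdot OPT(S)$.

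The main obstacle is the bookkeeping in the second step: the existing proof's error terms (e.g., $OPT/(3\beta)$ in the Learnable-or-Allocable Lemma and $OPT/\gamma$ in Claim~\ref{5.5}) must be re-examined to ensure they remain small relative to the weaker benchmark. These terms arise as upper bounds on the welfare achievable by certain sub-allocations, for which the original $OPT(S)$ remains a valid upper bound. Since the original $OPT(S)$ is at most a factor of $1/(1 - 1/m) \leq 2$ larger than the new benchmark (for $m \geq 2$), these error bounds remain small relative to the new benchmark up to a constant absorbed into $r$. Finally, the hypothesis $\psi_{max}/\psi_{min} = \poly(m)$ ensures the price tree has only $\beta = \Theta(\log\log m)$ layers, so the mechanism's parameters are as required to obtain $r = \ratio$, completing the argument.
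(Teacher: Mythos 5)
Your proposal is correct and follows essentially the same route as the paper: apply Lemma~\ref{lemma:correctRangeConsequence} to lower-bound the restricted benchmark $\sum_j \mathbbm{1}[\vectr q(j)\in I]\cdot \vectr q(j)$ by $(1-\tfrac{1}{m})\cdot OPT(S)$, and observe that the proof of Theorem~\ref{thrmResilientXosSimplied} goes through verbatim against this restricted benchmark since the zeroed-out price vector still satisfies Assumption~\ref{assumption} and the one-sided supporting-price inequality that the argument actually uses. In fact your write-up supplies more justification (the explicit truncated allocation, the check of which properties of $\vectr q$ are used, and the error-term bookkeeping) than the paper's own two-line proof, which simply asserts that the theorem's proof ``holds as is.''
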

\begin{proof}
Again let $\vectr q$ be the supporting prices of an optimal allocation of items to bidders in some set $S$.
Observe that although Theorem~\ref{thrmResilientXosSimplied} assumes all supporting price to be in $0 \cup [\psi_{min}, \psi_{max}]$, the proof holds as is for approximating 
$\sum_{j \in [m]} \mathbbm{1} \Big[\vectr q(j) \in [\psi_{min}, \psi_{max}] \Big] \cdot \q(j)$ 
(i.e. the contribution to the optimal welfare of items
whose supporting price is in $[\psi_{min}, \psi_{max}]$).
If $(\psi_{min}, \psi_{max}) $ is correct for $S$, then 
\begin{align*}
    \sum_{j \in [m]} \mathbbm{1} \Big[\vectr q(j) \in [\psi_{min}, \psi_{max}] \Big] \cdot \q(j) \geq (1 - \frac{1}{m}) \cdot OPT(S). 
\end{align*}
We conclude that PriceLearningMechanism returns an allocation with expected welfare 
\begin{align*}
   \frac{1}{r} \cdot \sum_{j \in [m]} \mathbbm{1} \Big[\vectr q(j) \in [\psi_{min}, \psi_{max}] \Big] \cdot \q(j) = \frac{1}{r} \cdot (1 - \frac{1}{m}) \cdot OPT(S), 
\end{align*}
where $r = \ratio$.
\end{proof}



The following lemma follows from a standard application of chernoff bound and is quoted verbatim from \cite{AssadiS19}.
It allows us to show that, with constant probability,
a good fraction of the welfare is achievable by bidders
in both $N_{stat}$ and $N_{mech}$.
\begin{lemma} \emph{\cite{Dobzinski07, DobzinskiNS12, Dobzinski16a, AssadiS19}} \label{lemma:statGood}
  Let $O = (O_1, ... O_n)$ be an optimal allocation of items $M$ to bidders $N$ with welfare $OPT$. Suppose we sample each $i \in N$ w.p. $\rho$ independently to obtain $N'$. If for every $i \in N$, we have $v_i(O_i) \leq \epsilon \cdot OPT$, then $\sum_{i \in N'} v_i(O_i) \geq (\rho/2) \cdot OPT$ w.p. at least $1 - 2 \cdot \exp( -\frac{\rho}{2\epsilon})$. 
\end{lemma}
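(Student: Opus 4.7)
The plan is to apply a standard Chernoff-type concentration bound. Define independent random variables $X_i := v_i(O_i) \cdot \mathbbm{1}[i \in N']$ for each $i \in N$, and let $X := \sum_{i \in N} X_i$. Since each bidder is sampled into $N'$ independently, the $X_i$ are mutually independent. Each satisfies the deterministic bound $0 \le X_i \le v_i(O_i) \le \epsilon \cdot OPT$ by hypothesis, and by linearity of expectation
\[\mathbb{E}[X] \;=\; \rho \cdot \sum_{i \in N} v_i(O_i) \;=\; \rho \cdot OPT.\]
Noting that $X = \sum_{i \in N'} v_i(O_i)$, the target inequality $X \ge (\rho/2) \cdot OPT$ is exactly the statement $X \ge \mathbb{E}[X]/2$.

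Next I would rescale: set $Y_i := X_i/(\epsilon \cdot OPT) \in [0,1]$ and $Y := \sum_i Y_i$, so $\mathbb{E}[Y] = \rho/\epsilon$. This puts us in the standard setting of multiplicative Chernoff for sums of independent $[0,1]$-valued random variables. Applying the usual one-sided lower-tail bound
\[\Pr\bigl[Y \le (1-\delta)\,\mathbb{E}[Y]\bigr] \;\le\; \exp\!\left(-\tfrac{\delta^2\,\mathbb{E}[Y]}{2}\right),\]
and translating back to $X$, yields a bound of the form $\Pr[X < (\rho/2) \cdot OPT] \le \exp(-c \rho/\epsilon)$ for an explicit constant $c$. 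Matching the exact numerical form $2\exp(-\rho/(2\epsilon))$ stated in the lemma is a matter of either invoking a slightly different Chernoff variant (e.g., bounding the moment-generating function of $X$ directly, which gives the sharper exponent because each $X_i$ is supported on an interval of length at most $\epsilon \cdot OPT$) or applying a two-sided Hoeffding-style inequality (which naturally introduces the leading factor of $2$).

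I do not expect any genuine obstacle: the only structural facts being used are independence of the samples and the per-bidder cap $v_i(O_i) \le \epsilon \cdot OPT$, both of which are part of the hypothesis. The proof is essentially a single invocation of Chernoff on bounded independent summands, and the mild work is purely bookkeeping to match constants, as stated in the excerpt's remark that the lemma ``follows from a standard application of Chernoff bound.''
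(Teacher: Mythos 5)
Your approach is the intended one: the paper does not actually prove this lemma --- it quotes it verbatim from prior work with the remark that it ``follows from a standard application of chernoff bound'' --- and your setup (independent summands $X_i = v_i(O_i)\cdot\mathbbm{1}[i\in N']$, each bounded by $\epsilon\cdot OPT$, with $\mathbb{E}[X]=\rho\cdot OPT$, followed by a lower-tail bound at half the mean) is exactly that standard application.

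The one point I would push back on is your claim that matching the stated form $2\exp(-\rho/(2\epsilon))$ is ``purely bookkeeping.'' It is not. With $\delta=1/2$ and $\mathbb{E}[Y]=\rho/\epsilon$, the bound you quote gives $\exp(-\rho/(8\epsilon))$; the tight Chernoff lower tail gives $(2/e)^{\rho/(2\epsilon)}=\exp(-(1-\ln 2)\rho/(2\epsilon))$; and Hoeffding (using the interval-length observation you mention) gives $\exp(-\rho^2/(2\epsilon))$, which is worse for small $\rho$. None of these reaches the stated exponent, and in fact no argument can: take $n=1/\epsilon$ bidders each with $v_i(O_i)=\epsilon\cdot OPT$ and $\rho=1/2$, so the failure event is $\{\mathrm{Bin}(n,1/2)< n/4\}$, whose probability is $e^{-(1+o(1))\,D(1/4\,\|\,1/2)\,n}$ with $D(1/4\,\|\,1/2)\approx 0.131$, strictly larger than the claimed $2e^{-n/4}$ for large $n$. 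So the constants in the statement (inherited verbatim from prior work) are optimistic, and a self-contained proof should state whichever exponent the chosen concentration inequality actually delivers, e.g.\ $1-\exp(-\rho/(8\epsilon))$. This does not threaten the paper in substance --- only a positive constant success probability is needed --- but note that with the weaker exponent the specific numerical union bound $1-4e^{-2}$ used downstream no longer computes as written and would need minor repair (e.g.\ lowering the $OPT/4$ threshold on $OPT(N_{stat})$ and $OPT(N_{mech})$, which the rest of the argument tolerates at the cost of constants).
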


Finally, once the previous lemma has been applied,
we will need this lemma to prove that we set the parameters
correctly for PriceLearningMechanism.

\begin{lemma} \label{lemma:correctRange}
  If $N_{stat}$ satisfy $OPT(N_{stat}) \geq \frac{1}{4} \cdot OPT$ and $OPT(N_{mech}) \geq \frac{1}{4} \cdot OPT$, then $(\psi_{min}, \psi_{max})$ is correct for $N_{mech}$. 
\end{lemma}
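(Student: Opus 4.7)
The plan is to unpack the definition of ``correct for $N_{mech}$'' and verify the two defining inequalities $\psi_{min} \leq OPT(N_{mech})/m^2$ and $\psi_{max} \geq OPT(N_{mech})$. Substituting the algorithm's choices $\psi_{min} = SPA/(4m^2)$ and $\psi_{max} = 4m\cdot SPA$, these reduce respectively to the upper bound $SPA \leq 4\cdot OPT(N_{mech})$ and the lower bound $SPA \geq OPT(N_{mech})/(4m)$. So the whole lemma boils down to sandwiching $SPA$ between these two quantities using the two hypotheses $OPT(N_{stat}) \geq OPT/4$ and $OPT(N_{mech}) \geq OPT/4$, plus the fact that $OPT \geq \max\{OPT(N_{stat}), OPT(N_{mech})\}$ since restricting the set of bidders can only decrease the optimal welfare.

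First I would invoke Lemma~\ref{lem:GenMechAdvice} to conclude that every bidder in $N_{stat}$ who follows advice bids truthfully in the second price auction on the grand bundle, so $SPA = \max_{i \in N_{stat}} v_i(M)$. For the upper bound, assigning all items to any single bidder $i$ is a feasible allocation for $N$, so $v_i(M) \leq OPT$ for every $i$ and in particular $SPA \leq OPT$. Combining with the hypothesis $OPT \leq 4\cdot OPT(N_{mech})$ gives $SPA \leq 4\cdot OPT(N_{mech})$, which is exactly the $\psi_{min}$ condition.

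For the lower bound, I would consider an optimal allocation of $M$ to the bidders in $N_{stat}$, which achieves welfare $OPT(N_{stat})$. Since items partition among bidders, at most $m$ bidders receive non-empty bundles, so by pigeonhole some $i^* \in N_{stat}$ receives a bundle $S_{i^*}$ with $v_{i^*}(S_{i^*}) \geq OPT(N_{stat})/m$. Monotonicity of XOS valuations then gives $v_{i^*}(M) \geq v_{i^*}(S_{i^*})$, so $SPA \geq OPT(N_{stat})/m \geq OPT/(4m) \geq OPT(N_{mech})/(4m)$, where the last inequality uses $OPT \geq OPT(N_{mech})$. This establishes the $\psi_{max}$ condition, completing the proof.

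There is no serious obstacle here: the argument is entirely elementary once we have truthful play in the second price auction (supplied by Lemma~\ref{lem:GenMechAdvice}) and monotonicity of the valuation class. The only thing to be careful about is that each hypothesis is used in a distinct direction, with $OPT(N_{mech}) \geq OPT/4$ powering the upper bound on $SPA$ and $OPT(N_{stat}) \geq OPT/4$ powering the lower bound.
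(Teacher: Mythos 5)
Your proof is correct and follows essentially the same route as the paper's: both establish the $\psi_{min}$ condition via $SPA \leq OPT \leq 4\cdot OPT(N_{mech})$ and the $\psi_{max}$ condition via the pigeonhole bound $m\cdot SPA \geq OPT(N_{stat}) \geq OPT/4 \geq OPT(N_{mech})/4$. The only cosmetic difference is that you make explicit the reliance on truthful bidding (via Lemma~\ref{lem:GenMechAdvice}) that the paper leaves implicit.
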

\begin{proof}
    Assume $N_{stat}$ satisfy $OPT(N_{stat}) \geq \frac{1}{4} \cdot OPT$ and $OPT(N_{mech}) \geq \frac{1}{4} \cdot OPT$. 
    We know that $SPA < OPT(N_{stat})$. Thus  
    \begin{align*}
       &4 \cdot OPT(N_{mech}) \geq OPT \geq OPT(N_{stat}) \geq SPA,\\
       &\Rightarrow \psi_{min} = \frac{1}{4m^2} \cdot SPA \leq \frac{1}{m^2} \cdot OPT(N_{mech}).
    \end{align*}
    Moreover, since $SPA$ is at least the value of $M$ for any bidder in $N_{stat}$, we have $m \cdot SPA \geq OPT(N_{stat})$. Thus 
    \begin{align*}
        \psi_{max} = 4m \cdot SPA \geq 4 \cdot OPT(N_{stat}) \geq OPT \geq OPT(N_{mech}). 
    \end{align*}
    We conclude that $(\psi_{min}, \psi_{max})$ is correct for $N_{mech}$. 
\end{proof}

\begin{theorem}
For valuation functions $v_1,\ldots,v_n$, suppose the optimal allocation $O$ has supporting prices $\q$. Let $D$ be a $(c,d)$-approximate demand oracle for valuation in $\{v_1,\ldots,v_n\}$. Then mechanism~\ref{alg:generalMech} with advice $A$ as in lemma~\ref{lem:GenMechAdvice} gets a $O\left(\max\left\{\frac{1}{c},\frac{1}{d}\right\}\cdot (\log\log{m})^3\right)$ fraction of the optimal welfare in implementation in advised strategies.
\end{theorem}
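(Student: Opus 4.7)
The plan is to leverage the lemmas of Section~\ref{appendixB_general} together with a case analysis based on whether there is a ``dominant bidder'' whose optimal share is a constant fraction of $\opt$. First, I would invoke Lemma~\ref{lem:GenMechAdvice} to fix the useful poly-time advice $A$; under this advice, bidders in $N_{stat}$ play truthfully in the second price auction on the grand bundle, and bidders in $N_{mech}$ pick $(c,d)$-competitive sets in every fixed-price auction they face inside PriceLearningMechanism. This reduces the analysis to bounding the welfare of a concrete randomized procedure without further worrying about incentives.

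Next, call a bidder \emph{dominant} if $v_i(O_i) > \opt/8$. In the case where such an $i^*$ exists, I would observe that with probability $1/2$ the mechanism bunches $i^*$ into $N_{stat}$, and then with the independent probability $1/2$ it terminates by returning the SPA allocation. Because the advice forces truthful bids, the SPA welfare is at least $v_{i^*}(M) \geq v_{i^*}(O_{i^*}) > \opt/8$ by monotonicity, so the expected welfare in this case is already $\Omega(\opt)$, which is far better than the target bound.

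In the complementary case every $v_i(O_i) \le \opt/8$, I would apply Lemma~\ref{lemma:statGood} twice (with $\rho = 1/2$, $\epsilon = 1/8$) to both $N_{stat}$ and $N_{mech} = N \setminus N_{stat}$. A union bound then shows that with a constant probability $\Omega(1)$ both $\opt(N_{stat}) \ge \opt/4$ and $\opt(N_{mech}) \ge \opt/4$ simultaneously. Conditioned on this event, Lemma~\ref{lemma:correctRange} guarantees that $(\psi_{min},\psi_{max})$, as set from the realized value $SPA$, is correct for $N_{mech}$; moreover the explicit formulas give $\psi_{max}/\psi_{min} = 16m^3 = \poly(m)$, so the hypotheses of Corollary~\ref{coro:correctRangeConsequence} are satisfied. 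The corollary then yields that the allocation produced by $\mathrm{PriceLearningMechanism}(N_{mech},M)$ has expected welfare at least $\frac{1}{r}(1 - \tfrac 1 m) \opt(N_{mech}) \ge \frac{1}{4r}(1 - \tfrac 1 m)\opt$, with $r = \ratio$. Accounting for the independent probability $1/2$ that the mechanism actually reaches the PriceLearningMechanism branch (rather than stopping at SPA), the expected welfare contribution in this case is $\Omega(1/r) \cdot \opt$.

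Taking the worse of the two cases gives the claimed $O(\max\{1/c,1/d\}(\log\log m)^3)$ approximation. The main subtlety I expect is the bookkeeping around conditional expectations: one must be careful that the constant-probability event used in Lemma~\ref{lemma:statGood} is independent of both the coin flip that decides between returning SPA and running PriceLearningMechanism, and of all the internal randomness of PriceLearningMechanism, so that the guarantee from Corollary~\ref{coro:correctRangeConsequence} can be applied conditionally on a favorable partition and then absorbed into an unconditional expectation via the tower rule. Beyond that, the argument is essentially a reduction to Theorem~\ref{thrmResilientXosSimplied} via Lemma~\ref{lem:GenMechAdvice} and Lemma~\ref{lemma:correctRange}, with the sampling lemma handling the removal of Assumption~\ref{assumption}.
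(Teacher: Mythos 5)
Your proposal is correct and follows essentially the same route as the paper's proof: invoke Lemma~\ref{lem:GenMechAdvice} to reduce to bidders playing truthfully in the SPA and picking $(c,d)$-competitive sets, then split on whether a bidder with $v_i(O_i) > \opt/8$ exists, handling the first case via the SPA branch and the second via Lemma~\ref{lemma:statGood}, Lemma~\ref{lemma:correctRange}, and Corollary~\ref{coro:correctRangeConsequence}. The independence/conditioning subtlety you flag is handled implicitly in the paper the same way, so nothing further is needed.
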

\begin{proof}

Lemma~\ref{lem:GenMechAdvice} shows that there exists poly
time computable advice such than, whenever a bidder in $N_{stat}$ 
follows advice, they play truthfully,
and whenever a bidder in $N_{mech}$ follows advice,
they pick $(c,d)$-competitive sets in every fixed price auction
they participate in.

Recall that a bidder is \emph{dominant} if they contribute more than a $1/8$ fraction of the welfare of an optimal allocation.
Next we show that whether there is a dominant bidder or not, the expected welfare from GeneralizedMechanism is an $\ratio$ approximation to $OPT$ in implementation in advised strategy with advice $B$. 

\begin{itemize}
    \item 
    When there is a dominant bidder, then with $\frac{1}{2}$ probability the dominant bidder would be selected in the $N_{stat}$ group. Conditioned on this, with $\frac{1}{2}$ probability the resulting allocation from running second price auction on the $N_{stat}$ group would be realized. Since a dominant bidder is in $N_{stat}$ group, the welfare from the second price auction is at least $\frac{OPT}{8}$. Thus the expected welfare of GeneralizedMechanism, conditioned on there being a dominant bidder, is at least $\frac{1}{2} \cdot \frac{1}{2} \cdot \frac{OPT}{8} = \frac{OPT}{32}$. 
    \item 
    When there is no dominant bidder, then by Lemma~\ref{lemma:statGood}, $OPT(N_{stat}) \geq \frac{1}{4} \cdot OPT$ with probability at least $1 - 2 e^{-2}$, which means $OPT(N_{stat}) < \frac{1}{4} \cdot OPT$ with probability $< 2e^{-2}$. Symmetrically, $O(N_{mech}) < \frac{1}{4} \cdot OPT$ with probability $< 2e^{-2}$. By union bound, both $OPT(N_{stat})$ and $OPT(N_{mech})$ is $\geq \frac{1}{4} \cdot OPT$ with probability at least $1 - 4 e^{-2}$, which is still a positive, constant probability.  
    
    Let's call the event where $OPT(N_{stat}) \geq \frac{1}{4} \cdot OPT$ and $OPT(N_{mech}) \geq \frac{1}{4} \cdot OPT$ the \emph{good} event. 
    
    By Lemma~\ref{lemma:correctRange}, if the good event occurs, then $(\psi_{min}, \psi_{max})$ is correct for $N_{mech}$. By construction in GeneralizedMechanism, $\psi_{max}/\psi_{min} = O(m^3)$. 
    By Corollary~\ref{coro:correctRangeConsequence}, conditioned on $\psi_{min}$ and  $\psi_{max}$ begin set correctly and $\psi_{max}/\psi_{min} = poly(m)$, priceLearningMechanism returns an allocation that achieves welfare $\frac{1}{r} \cdot (1 - \frac{1}{m}) \cdot OPT(N_{mech})$, where $r = \ratio$. Since the good event occurs, $OPT(N_{mech}) \geq \frac{1}{4} \cdot OPT$. We conclude that conditioned on the good event, the expected welfare from PriceLearningMechanism $\ratio$ approximates $OPT$. As the event ``the good event happens and GeneralizedMechanism runs PriceLearningMechanism in setp $4$'' occurs with constant probability, we conclude that Generalized mechanism achieves expected welfare  at least $OPT /\ratio$ when there is no dominant bidder.

 \end{itemize} 

Together with the fact that every allocation for XOS valuation functions has supporting prices, we immediately get theorem~\ref{thrmResilientXos}.

\begin{reptheorem}{thrmResilientXos}
Let $\mathcal{V}$ be a subclass of XOS valuations and let $D$ be a poly-time $(c,d)$-approximate demand oracle for valuation class $\mathcal{V}$. Then there exists a poly-time mechanism for welfare maximization when all valuations are in $\mathcal{V}$ with approximation guarantee $O\left(\max\left\{\frac{1}{c},\frac{1}{d}\right\}\cdot (\log\log{m})^3\right)$ in implementation in advised strategies with polynomial time computable advice.
\end{reptheorem}
\end{proof}

\section{Approximate Demand Queries vs. Approximate Welfare Approximation}

As it happens, both SimpleGreedy and SingleOrBundle were inspired by simple 
known algorithms for approximate welfare maximization, combined with the following
simple observation:

\begin{proposition}\label{propDemandReducesToWF}
  $S$ is the return of a demand query on prices $\vectr p$ if and only if
  $(S, M\setminus S)$ is a welfare maximizing bundle for the following two
  player auction: one bidder has valuation function $v$, and the other bidder
  has additive valuation function given by $\vectr p$.
\end{proposition}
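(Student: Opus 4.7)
The plan is to observe that this is essentially a one-line algebraic identity, and the proof reduces to rewriting the welfare of the two-player auction in a form that isolates the demand-query objective.

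First I would write down what each side means. The demand query asks for $S \in \arg\max_{T \subseteq M}\{v(T) - \vectr p(T)\}$. For the two-player auction with one XOS/general bidder having valuation $v(\cdot)$ and one additive bidder with valuation $\vectr p(\cdot)$, a partition $(A, B)$ with $A \cup B = M$, $A \cap B = \emptyset$ has welfare $v(A) + \vectr p(B)$. A welfare-maximizing bundle for the auction is therefore any $S$ maximizing $v(S) + \vectr p(M \setminus S)$ over all $S \subseteq M$.

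Next I would perform the key manipulation: since $\vectr p$ is additive, $\vectr p(M \setminus S) = \vectr p(M) - \vectr p(S)$, so
\[
  v(S) + \vectr p(M \setminus S) = \big(v(S) - \vectr p(S)\big) + \vectr p(M).
\]
Because $\vectr p(M)$ does not depend on $S$, maximizing the left-hand side over $S \subseteq M$ is the same as maximizing $v(S) - \vectr p(S)$ over $S \subseteq M$. Therefore $S$ attains the maximum welfare in the two-player auction if and only if $S$ attains the maximum of the demand-query objective. This gives both directions of the ``if and only if'' simultaneously.

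There isn't really a hard part: the whole content is the observation that additivity of $\vectr p$ lets us split $\vectr p(M\setminus S) = \vectr p(M) - \vectr p(S)$, turning welfare maximization for the auction into the demand-query objective up to the constant $\vectr p(M)$. I would simply present the two displayed lines above as a short self-contained proof.
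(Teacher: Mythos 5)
Your proof is correct and matches the paper's own argument exactly: both observe that the auction welfare $v(S) + \vectr p(M\setminus S)$ equals the demand-query objective $v(S) - \vectr p(S)$ plus the constant $\vectr p(M)$, so the two maximization problems coincide. Nothing is missing.
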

\begin{proof}
  The utility of $v$ is $v(S) - \vectr p(S)$, which differs from the welfare
  $v(S) + \vectr p(M\setminus S)$ only by the constant $\vectr p (M)$.
  So maximizing these two objectives is equivalent.
\end{proof}

In particular, SimpleGreedy is exactly the $2$-approximation algorithm from~\cite{LehmannLN01} played by a regular
bidder and a ``price bidder''. SingleOrBundle is similarly inspired by
the $\sqrt m$ approximation of \cite{DobzinskiNS10}.
However, we show below that approximate demand queries do not, in general,
reduce to approximate welfare maximization.

\begin{example}
  Consider a budget additive valuation $v$ with value $2$ for every item and
  budget of $2\sqrt m$. That is, $v(S) = \max\{ 2|S|, 2\sqrt m\}$.
  Let $\vectr p$ have price $1$ for each item, i.e. $\vectr p (S) = |S|$.
  The result of a demand query on $(v,\vectr p)$ is any set of size $\sqrt m$,
  with utility $\sqrt m$.

  However, consider running an approximate welfare maximization mechanism 
  $\mathcal{A}$ with
  two bidders: one with valuation $v(S)$ for bundle $S$
  and one with valuation $\vectr p(S)$ for bundle $S$.
  The optimal allocation is to give any $S$ of size exactly $\sqrt m$ to $v$,
  and give the rest of the items to $\vectr p$.
  This has welfare $m + \sqrt m$.
  However, the allocation giving every item to $\vectr p$ has welfare $m$.
  Thus, any constant factor approximation algorithm (for which no other
  grantees hold) may return this allocation,
  as $m + \sqrt m = (1 + o(1)) m$.
  
  This corresponds to an approximate demand query giving the bidder the 
  empty set. As this has zero utility, it will fail to be any factor
  approximation ration of the optimal.
\end{example}

Moreover, the above example would still go through if we consider a few simple
variations on the reduction given by Proposition~\ref{propDemandReducesToWF}.
For example, if we discount prices by a constant factor, say $d$, it's still the
case that $d (m - \sqrt m) + 2\sqrt{m} = (1 + o(1)) dm$, so a constant-factor approximation 
algorithm $\mathcal A$ might give all items to the ``price player''.

Thus, approximate demand queries do not reduce to approximate welfare
maximization (at least not as outlined by
Proposition~\ref{propDemandReducesToWF}).

\section{Other Algorithms for approximate demand oracles} \label{appendix:submodular_query}


Here we give another algorithm for computing a $(1/2, 1/2)$-approximate demand oracle. Instead of being inspired by known welfare maximization algorithms, this technique is inspired by known submodular maximization algorithms. Namely, the algorithm MeetInMiddle below is exactly the algorithm DeterministicUSM from~\cite{BuchbinderFNS15}, run on the submodular function $f$ given by $f(S) = v(S) - \vectr p(S)$. When $f$ is a nonnegative (possibly decreasing) submodular function, ~\cite{BuchbinderFNS15}~shows that it gives a $1/3$ approximation to the maximum value of $f$. Unfortunately, the submodular utility function we are interested in is possibly negative, so this result does not apply (indeed, it is NP hard to achieve \emph{any} nontrivial approximation ration for possibly negative submodular maximization, as we discussed in section~\ref{sec:apxdemand}).


\begin{algorithm}
  \caption{MeetInMiddle$(v,\vectr p,M)$ }
  \label{algMeetInMiddle}
\begin{algorithmic}[0]
  \State $X \leftarrow \emptyset$ and $Y \leftarrow M$
  \For { $j=1,\ldots,m$ }
  \Comment{For items in an arbitrary order}
    \State Set $a_j \leftarrow v(X\cup j) - v(X) - \vectr p(j)$
    \Comment{ Invariant: $Y = X \cup \{j,\ldots, m\}$ }
    \State Set $b_j \leftarrow v(Y\setminus j) - v(Y) + \vectr p(j)$
    \If { $a_j \ge b_j$ }
      \State Set $X \leftarrow X\cup j$
    \Else
      \State Set $Y \leftarrow Y\setminus j$
    \EndIf
  \EndFor
  \Return $X$ \Comment or return $Y$ (as $X=Y$ by now)
\end{algorithmic}
\end{algorithm}

For SimpleGreedy and SingleOrBundle, we needed to run an existing algorithm with the ``higher'' prices $\vectr p/d$ to attain a $(c,d)$-approximate demand oracle for (i.e. a set $S$ for which $v(S) - \vectr p(S) \ge c \max_T v(T) - \vectr p (T)/d$). Interestingly, we show that MeetInMiddle need to take the lower (``discounted'') prices as input in order to provide an approximation guarantee.

We show that 
\begin{enumerate}
  \item For any $\epsilon > 0$, $S=$MeetInMiddle($v, \vectr p, M$) is not a $(\epsilon, \epsilon)$ approximate demand oracle for prices $\epsilon \vectr p$ (i.e. there exists a valuation function $v$ such that $v(S)-\epsilon \vectr p (S) < \max_T\{v(T) - \vectr p(T)\}$).
  \item MeetInMiddle($v, \vectr p / {2}, M)$ is an $(\frac{1}{2}, \frac{1}{2})$ approximate demand oracle for prices $\vectr p /2$ (i.e. for any submodular $v$ we have $v(S)- \vectr p (S)/2 \ge \frac 1 2 \max_T\{v(T) - \vectr p(T)\}$).
\end{enumerate}

\begin{example}
For any $\epsilon>0$, let $K = {4}/{\epsilon}$ and $N = 2 + ({K-1})/{\epsilon}$ and $M = \{1, 2, ... N\}$. Consider the price vector $\vectr p(1) = \frac{K}{2} + 1$ and $\forall i > 1: \vectr p(i) = 1 - \epsilon$ and the bidder valuation function 
$v(S) = K$ for any $S \ni 1$ and $v(S) = 1 + (|S| - 1)\epsilon$ for any $S\ni 1$.
One can check that the valuation function is submodular. 
  
MeetInMiddle will remove the first item from $X$, since $v(M-1) - v(M) + \vectr p(1) = \frac{K}{2} + 1 > \frac{K}{2} - 1 = v(1) - \vectr p(1)$.
Similarly, one can check that the algorithm will then remove all items except the last item $N$, which it will keep.
Thus the algorithm returns set $T = \{N\}$, so $v(T) - \vectr p(T) = \epsilon$.

However, the optimal set is $O = \{1\}$. 
We have $\epsilon(v(O) - \vectr p(O)) = \epsilon (\frac{2}{\epsilon} - 1) = 2 - \epsilon$.
Thus $v(T) - \epsilon(p(T)) < \epsilon(v(O) - \vectr p(O))$,
and MeedInMiddle($v, \vectr p, M$) is not a $(\epsilon, \epsilon)$ 
approximate demand oracle for all constant $1 > \epsilon > 0$. 
\end{example}

\begin{claim}
If $v$ is submodular, $S=$MeetInMiddle($v, \vectr p / {2}, M)$ is an $(\frac{1}{2}, \frac{1}{2})$ approximate demand oracle for prices $\vectr p/2$ (i.e. $v(S) - \vectr p(S) /2 \ge \frac 1 2 \max_T\{v(T) - \vectr p (T)\}$.
\end{claim}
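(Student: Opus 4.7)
The plan is to identify MeetInMiddle$(v, \vectr p/2, M)$ with the deterministic unconstrained submodular maximization (USM) algorithm of~\cite{BuchbinderFNS15} applied to the submodular function $g(T) := v(T) - \vectr p(T)/2$. The quantities $a_j$ and $b_j$ computed by MeetInMiddle are precisely the marginal gains $g(X_{j-1}\cup\{j\})-g(X_{j-1})$ and $g(Y_{j-1}\setminus\{j\})-g(Y_{j-1})$. Writing $\mu_j^X := v(X_{j-1}\cup j) - v(X_{j-1})$ and $\mu_j^Y := v(Y_{j-1}) - v(Y_{j-1}\setminus j)$, submodularity of $v$ gives $\mu_j^X \geq \mu_j^Y$, and hence $a_j+b_j = \mu_j^X - \mu_j^Y \geq 0$. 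Let $O := \argmax_T\{v(T)-\vectr p(T)\}$ denote the benchmark maximizer; after clearing denominators, the goal becomes $2v(S) - \vectr p(S) \geq v(O) - \vectr p(O)$.

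The first step is the standard per-iteration case analysis of DeterministicUSM. Setting $O_j := X_j \cup (O \cap Y_j)$ so that $O_0 = O$ and $O_m = S$, case analysis on whether $j \in O$ and whether $a_j \geq b_j$ shows, using submodularity of $g$, the per-step inequality $g(O_{j-1}) - g(O_j) \leq (g(X_j) - g(X_{j-1})) + (g(Y_j) - g(Y_{j-1}))$. Telescoping and using $g(\emptyset) = 0$ gives $g(O) + g(M) \leq 3 g(S)$. By itself this yields only a $(1/3,1/2)$-approximate demand oracle, so a further argument is needed to reach $(1/2,1/2)$.

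The sharpening would exploit the local structure of the output $S$. Because $j \in S$ iff $a_j \geq b_j$, combining $a_j+b_j \geq 0$ with $\mu_j^X \geq \mu_j^Y$ gives $\mu_j^X \geq \vectr p(j)/2$ for every $j \in S$ and $\mu_j^X + \mu_j^Y < \vectr p(j)$ (so in particular $\mu_j^Y < \vectr p(j)/2$) for every $j \notin S$. I would then combine (i) the telescoping identity $v(S) = \sum_{j\in S}\mu_j^X$, refined via $\mu_j^X \geq \vectr p(j) - \mu_j^Y$, with (ii) the standard submodular chain bound $v(O) \leq v(S) + \sum_{j\in V_+}\mu_j^X$ where $V_+ := O \setminus S$, refined via $\mu_j^X < \vectr p(j) - \mu_j^Y$ for $j \in V_+$. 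These simultaneously lower bound $2v(S) - \vectr p(S)$ and upper bound $v(O) - \vectr p(O)$, each expression involving ``slack'' $\mu_j^Y$-terms.

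The main obstacle I anticipate is closing a residual factor-of-two gap that arises when the two bounds are combined naively: they yield $v(S) \geq \vectr p(V_-)/2$ and $v(O) - \vectr p(O) < v(S) - \vectr p(O\cap S) - \sum_{j\in V_+}\mu_j^Y$ (with $V_- := S\setminus O$), while the target $2v(S) - \vectr p(S) \geq v(O) - \vectr p(O)$ demands essentially $v(S) + \sum_{j\in V_+}\mu_j^Y \geq \vectr p(V_-)$. My plan for closing this gap is to pool the slack terms on both sides: apply submodularity once more to compare $\sum_{j\in V_+}\mu_j^Y$ to marginals of $V_+$-items against supersets of $S$, and then invoke the inequality $\sum_{j\in S}(\mu_j^X + \mu_j^Y) \geq \vectr p(S)$ to redirect the $\mu_j^Y$ slack on items in $S$ to cover the deficit. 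Once the slacks on $V_+$ and $S$ are matched in this way, the inequality $2v(S) - \vectr p(S) \geq v(O) - \vectr p(O)$ should fall out, establishing the $(1/2,1/2)$-approximate demand oracle claim.
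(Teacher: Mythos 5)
There is a genuine gap here: what you have written is a plan with an explicitly unresolved step, not a proof. Two specific problems. First, the intermediate claim that the DeterministicUSM telescoping ``yields a $(1/3,1/2)$-approximate demand oracle'' is not correct as stated: the telescoped inequality is $g(O) + g(M) \le 3g(S)$ with $g(T) = v(T) - \vectr p(T)/2$, and unlike the nonnegative-USM setting you cannot discard $g(M) = v(M) - \vectr p(M)/2$, which may be arbitrarily negative. Second, and more importantly, the sharpening that is supposed to carry you from there to the $(1/2,1/2)$ guarantee is never executed. Your own per-item facts give $v(S) = \sum_{j\in S}\mu_j^X \ge \vectr p(S)/2$ and $v(O) - \vectr p(O) < v(S) - \vectr p(O\cap S) - \sum_{j\in V_+}\mu_j^Y$, so the target $2v(S) - \vectr p(S) \ge v(O) - \vectr p(O)$ reduces to $v(S) + \sum_{j\in V_+}\mu_j^Y \ge \vectr p(V_-)$, and the facts in hand only supply $\vectr p(V_-)/2$ of that. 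The proposed fix --- redirecting slack via $\sum_{j\in S}(\mu_j^X + \mu_j^Y) \ge \vectr p(S)$ --- would, when traced through, require something like $\sum_{j\in V_+}\mu_j^Y \ge \sum_{j\in V_-}\mu_j^Y$, which does not follow from anything you have established. ``Should fall out'' is doing all the work at exactly the point where the argument is hardest.

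For comparison, the paper does not use the BFNS telescoping at all. It argues by induction on $|M|$, exploiting the recursive structure of MeetInMiddle: after the first iteration, the remainder of the run is exactly MeetInMiddle on $M\setminus\{1\}$, with the same valuation $v$ if item $1$ was removed from $Y$, and with the marginal valuation $v'(\cdot) = v(\cdot\cup\{1\}) - v(\{1\})$ if item $1$ was added to $X$. The comparison condition $a_1 \ge b_1$ (or its negation) supplies exactly the price/marginal inequality needed in each branch, and a case split on whether $1$ belongs to the benchmark optimum closes the induction --- mirroring the structure of the SimpleGreedy proof in Section~\ref{sec:apxdemand}. If you want to salvage a direct (non-inductive) argument, you would need a genuinely new inequality relating the $\mu_j^Y$ slack on $V_+$ to that on $V_-$; absent that, the inductive route is the one that works.
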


\begin{proof}
  Let $T \leftarrow $MeetInMiddle($v, \vectr{p}/{2}, M)$
  and $O = \argmax_{S \subseteq M} v(S) - \vectr p(S)$. 
  We use induction on $|M|$.
  
  
  For the base case, let $|M| = 1$. 
  Observe that $a_1 = v(1)-\vectr p(1)/2 = -b_1$.
  Thus, $T = \emptyset$ only when $v(1) \ge \vectr p(1)/2$,
  so $T$ is exactly $\argmax_S v(S) - \vectr p(S)/2 \ge v(O) - \vectr p(O)$.
  
  Now, let $|M| > 1$, and assume by induction that the claim is true for all $m' < |M|$. Consider the following two cases:
  \begin{itemize}
      \item 
      If $1 \not \in T$, then
      \begin{align*}
        &v(1) - \frac{\vectr p(1)}{2} = a_1 < b_1 = v(M \setminus 1) - v(M) + \frac{\vectr p(1)}{2}\\
        &\Rightarrow \vectr p(1) > v(1) + v(M) - v(M \setminus 1) \geq v(1). \tag{*}
      \end{align*}
      thus $1 \not \in O$. If $M' = M \setminus 1$, then $T = $MeetInMiddle($v, {\vectr p}/{2}, M')$ and $O = \argmax_{S \subseteq M'} v(S) - \vectr p(S)$. By the inductive hypothesis, $v(T) - \frac{1}{2}\vectr p(T) \geq \frac{1}{2}(v(O) - \vectr p(O))$. 
      
      \item 
      Suppose $1 \in T$. Let $M'=M \setminus 1$ and let $O_2$ be the set that maximizes utility on $M'$ for $v$ at prices $\vectr p$
      (i.e. $O_2 =\argmax_{S\subseteq M'} v(S) - \vectr p(S)$).
      The negation of $(*)$ plus the submodularity of $v$ tells us that
      \begin{align*}
          \vectr p(1) \leq v(1) + v(M) - v(M \setminus 1)
          \le v(1) + v(O_2 \cup 1) - v(O_2). \tag{$\dagger$}
      \end{align*}
       Define a new submodular function $v'$ on $M'$ such that $v'(S) = v(S \cup 1) - v(1)$ for all $S \subseteq M'$. 
       One can check that an item $>1$ is added to $X$ in MeetInMiddle$(v,\vectr p/2, M)$ if and only if it is added to $X$ in MeetInMiddle$(v',\vectr p/2, M')$.
       Thus, 
       $T \setminus 1 = $MeetInMiddle$(v', {\vectr p}/{2}, M')$, and the inductive hypothesis tells us that
       $v'(T\setminus 1) - \frac 1 2 \vectr p(T\setminus 1) \ge \frac 1 2 (v'(O_2') - \vectr p(O_2'))$, where
       $O_2'$ is the set that maximizes utility on $M'$ for $v'$ on prices $\vectr p$ (i.e. $O_2' =\argmax_{S\subseteq M'} v'(S) - \vectr p(S)$). 
       
       We now analyze two subcases:
       \begin{itemize}
       \item
       If $1 \in O$, then $O = 1 \cup O_2'$. Thus, applying the inductive hypothesis we know 
       \begin{align*}
           v(T) - \frac{\vectr p(T)}{2} 
           &= v(1) - \frac 1 2 \vectr p(1) + (v'(T \setminus 1 ) -\frac 1 2 \vectr p(T \setminus 1)) \\
           &\geq v(1) - \frac{1}{2} \vectr p(1) + \frac{1}{2}(v'(O_2') - \vectr p(O_2')) \\
           &\geq \frac{1}{2}\big(v(1) + v'(O_2') - \vectr p(1) - \vectr p(O_2') \big)= \frac{1}{2} (v(O) - \vectr p(O) )).
       \end{align*}
       \item 
       If $1 \not \in O$, then $O = O_2$. By rearranging ($\dagger$), we get
      \begin{align*}
          & \vectr p (1) \le 2v(1) + v'(O_2) - v(O_2)\\
          &\Rightarrow v(O_2) - v'(O_2) \leq 2\left(v(1) - \frac{\vectr p(1)}{2}\right) \tag{$\mathsection$}.\\
      \end{align*}
      Thus
      \begin{align*}
          \frac{1}{2}(v(O) - \vectr p(O)) = \frac{1}{2} \big(v(O_2) - \vectr p(O_2)\big) &= \frac{1}{2} \big(v'(O_2) - \vectr p(O_2) + v(O_2) - v'(O_2) \big)\\
          & \leq \frac{1}{2} \big(v'(O_2') - \vectr p(O_2') + v(O_2) - v'(O_2) \big)\\
          &\leq v'(T \setminus 1) - \frac{\vectr p(T \setminus 1)}{2} + v(1) - \frac{\vectr p(1)}{2}\\
          & = v(T) - \frac{\vectr p(T)}{2}.\\
      \end{align*}
      \end{itemize}
      Where the first inequality follows from the definition of $O_2'$, and the second follows from the inductive hypothesis combined with ($\mathsection$).
  \end{itemize}
\end{proof}




\end{document}